\numberwithin{equation}{section}
\newsavebox{\mysavebox}
\tikzset{
  big arrow/.style={
    decoration={markings,mark=at position 1 with {\arrow[scale=2,#1]{>}}},
    postaction={decorate},
    shorten >=0.4pt},
  big arrow/.default=black}
\pgfplotsset{compat=1.16}
\tikzstyle{none}=[inner sep=0pt]
\newcommand{\GL}{\mathrm{GL}}
\newcommand{\SL}{\mathrm{SL}}
\newcommand{\Mp}{\mathrm{Mp}}
\renewcommand{\O}{\mathrm{O}}
\newcommand{\SO}{\mathrm{SO}}
\newcommand{\SU}{\mathrm{SU}}
\newcommand{\Spin}{\mathrm{Spin}}
\newcommand{\Pin}{\mathrm{Pin}}
\newcommand{\U}{\mathrm{U}}
\newcommand{\USp}{\mathrm{USp}}
\newcommand{\Sq}{\mathrm{Sq}}
\newcommand{\cA}{\mathcal A}
\newcommand{\pt}{\mathrm{pt}}
\newtheorem{thm}[equation]{Theorem}
\newtheorem{prop}[equation]{Proposition}
\newtheorem{lem}[equation]{Lemma}
\theoremstyle{definition}
\newtheorem{defn}[equation]{Definition}
\crefname{thm}{Theorem}{Theorems}
\crefname{prop}{Proposition}{Propositions}
\crefname{defn}{Definition}{Definitions}
\crefname{lem}{Lemma}{Lemmas}
\tikzstyle{NodeCross}=[draw, shape=circle, cross out, inner sep=0pt, minimum size=6pt,line width=0.25mm]
\tikzstyle{Circle}=[draw, shape=circle, black, fill=black, inner sep=0pt, minimum size=6pt]
\tikzstyle{circle}=[draw, shape=circle, black, fill=black, inner sep=0pt, minimum size=16pt]
\tikzstyle{Star}=[draw, shape=star, fill=black, star points=8, inner sep=0pt, minimum size=8pt]
\tikzstyle{CircleRed}=[draw, shape=circle, black, fill=red, inner sep=0pt, minimum size=6pt]
\tikzstyle{StarP}=[draw={rgb,255: red,128; green,0; blue,128}, shape=star, fill={rgb,256: red,128; green,0; blue,128}, star points=8, inner sep=0pt, minimum size=12pt]
\tikzstyle{ShadedCircRed}=[draw=red, shape=circle, fill={rgb, 255: red,255; green,114; blue, 118}, inner sep=0pt, minimum size=80pt, line width=0.5mm, fill opacity=0.2]
\tikzstyle{ShadedCircRed2}=[draw=red, shape=circle, fill={rgb, 255: red,255; green,114; blue, 118}, inner sep=0pt, minimum size=10pt]
\tikzstyle{ShadedCircRed3}=[draw=black, shape=rectangle, fill={rgb, 255: red,255; green,114; blue, 118}, inner sep=0pt, minimum size=113pt, line width=0.25mm]
\tikzstyle{ShadedCirc}=[draw=red, shape=circle, fill=white, inner sep=0pt, minimum size=45pt,  fill opacity=1.0,  line width=0.5mm]
\tikzstyle{CircleBlue}=[draw, shape=circle, fill=blue, inner sep=0pt, minimum size=6pt]
\tikzstyle{BigCirclePurple}=[draw, shape=circle, fill={rgb,255: red,191; green,0; blue,191}, inner sep=0pt, minimum size=12pt]
\tikzstyle{CirclePurple}=[draw, shape=circle, fill={rgb,255: red,191; green,0; blue,191}, inner sep=0pt, minimum size=5pt]
\tikzstyle{EmptyCircle}=[draw, shape=circle, inner sep=0pt, minimum size=4pt]
\tikzstyle{GreenCircle}=[draw, shape=circle,  fill={rgb,255: red,80; green,200; blue,120}, inner sep=0pt, minimum size=8pt]
\tikzstyle{BrownCircle}=[draw, shape=circle,  fill={rgb,255: red,210; green,105; blue,30}, inner sep=0pt, minimum size=8pt]
\tikzstyle{CirclePurpleSmall}=[draw, shape=circle, fill={rgb,255: red,191; green,0; blue,191}, inner sep=0pt, minimum size=4pt]
\tikzstyle{BigCircleGreen}=[draw, shape=circle, fill={rgb,255: red,0; green,191; blue,0}, inner sep=0pt, minimum size=12pt]
\tikzstyle{BigCircleBlue}=[draw, shape=circle, fill={rgb,255: red,0; green,0; blue,191}, inner sep=0pt, minimum size=12pt]
\tikzstyle{BigCircleRed}=[draw, shape=circle, fill={rgb,255: red,191; green,0; blue,0}, inner sep=0pt, minimum size=12pt]
\tikzstyle{BrownCircleSmall}=[draw, shape=circle,  fill={rgb,255: red,210; green,105; blue,30}, inner sep=0pt, minimum size=6pt]
\tikzstyle{SmallCircleBrown}=[draw, shape=circle,  fill={rgb,255: red,210; green,105; blue,30}, inner sep=0pt, minimum size=5pt]
\tikzstyle{SmallCircleRed}=[draw, shape=circle, fill={rgb,255: red,191; green,0; blue,0}, inner sep=0pt, minimum size=6pt]
\tikzstyle{DashedLine}=[-, densely dashed, line width=0.25mm]
\tikzstyle{DottedLine}=[-, dotted, line width=0.25mm]
\tikzstyle{ThickLine}=[-, line width=0.25mm]
\tikzstyle{ArrowLineRight}=[-, -{Stealth[scale=1.25]}, line width=0.25mm, scale=5]
\tikzstyle{ArrowLineRed}=[-, draw={rgb,255: red,191; green,0; blue,0}, -{Stealth[scale=1.75]}, line width=0.1mm, scale=5]
\tikzstyle{RedLine}=[-, draw={rgb,255: red,191; green,0; blue,0}, fill=none, line width=0.5mm]
\tikzstyle{DashedLineThin}=[-, densely dashed, line width=0.125mm, fill=none, draw=black]
\tikzstyle{DottedRed}=[-, dotted, draw={rgb,255: red,191; green,0; blue,0}, fill=none, line width=0.25mm]
\tikzstyle{DashedRed}=[-, densely dashed, draw={rgb,255: red,191; green,0; blue,0}, fill=none, line width=0.25mm]
\tikzstyle{BlueLine}=[-, draw={rgb,255: red,0; green,0; blue,191}, fill=none, line width=0.5mm]
\tikzstyle{ArrowLineBlue}=[-, draw={rgb,255: red,0; green,0; blue,191}, -{Stealth[scale=1.75]}, line width=0.1mm, scale=5]
\tikzstyle{GreenDoubleArrow}=[<->, draw={rgb,155: red,0; green,255; blue,0},  line width= 0.5mm, scale=5]
\tikzstyle{RedDoubleArrow}=[<->, draw={rgb,255: red,255; green,0; blue,0},  line width= 0.5mm, scale=5]
\tikzstyle{BlueDottedLight}=[-, dotted, draw={rgb,255: red,0; green,0; blue,191}, fill=none, line width=0.3mm]
\tikzstyle{BrownLine}=[-, draw={rgb,255: red,210; green,105; blue,30}, fill=none, line width=0.5mm]
\tikzstyle{DottedRed}=[-, dotted, draw={rgb,255: red,191; green,0; blue,0}, fill=none, dotted, line width=0.5mm]
\tikzstyle{DottedPurple}=[-, dotted, draw={rgb,255: red,191; green,0; blue,191}, fill=none, dotted, line width=0.5mm]
\tikzstyle{BlueDottedLight}=[-, dotted, draw={rgb,255: red,0; green,0; blue,191}, fill=none, line width=0.5mm]
\tikzstyle{ArrowLinePurple}=[-, draw={rgb,255: red,191; green,0; blue,191}, -{Stealth[scale=1.75]}, line width=0.5mm, scale=5]
\tikzstyle{DashedLineGreen}=[-, densely dashed, draw={rgb,255: red,74; green,103; blue,65}, line width=0.25mm]
\tikzstyle{LineGreen}=[-, draw={rgb,255: red, 74; green,200; blue,65}, line width=0.5mm]
\tikzstyle{ArrowLineGreen}=[-, draw={rgb,255: red,0; green,191; blue,0}, -{Stealth[scale=1.75]}, line width=0.5mm, scale=5]
\tikzstyle{GreenLine}=[-, draw={rgb,255: red,0; green,191; blue,0}, fill=none, line width=0.5mm]
\tikzstyle{PurpleLine}=[-, draw={rgb,255: red,191; green,0; blue,191}, fill=none, line width=0.5mm]
\tikzstyle{PPurpleLine}=[-, draw={rgb,255: red,191; green,0; blue,191}, fill=none, line width=2.5mm]
\tikzstyle{DPurpleLine}=[-, dotted, draw={rgb,255: red,191; green,0; blue,191}, fill=none, line width=0.5mm]
\tikzstyle{SBrownLine}=[-, draw={rgb,255: red,191; green,0; blue,191}, fill=none, opacity=0.35, line width=2.5mm]
\tikzstyle{DottedBlue}=[-, dotted, draw=blue, fill=none, dotted, line width=0.5mm]
\tikzstyle{DashedPurpleLine}=[-, densely dashed, draw={rgb,255: red,191; green,0; blue,191}, fill=none, line width=0.5mm]
\tikzstyle{SmallCircleBlue}=[draw, shape=circle, fill=blue, inner sep=0pt, minimum size=5pt]
\tikzstyle{SmallCirclePurple}=[draw, shape=circle, fill={rgb,255: red,191; green,0; blue,191}, inner sep=0pt, minimum size=5pt]
\tikzset{snake it/.style={decorate, decoration=snake}}
\tikzset{
dashstar/.style={
 dash pattern=on 5pt off 5pt,
 postaction={
  decorate,
  decoration={
   markings,
   mark=between positions 9pt and 1 step 10pt with {
     \node[color=red] {*};
   }
  }
 }
},
dashstarstar/.style={ 
 dash pattern=on 5pt off 10pt,
 postaction={
   decorate,
   decoration={
     markings,
     mark=between positions 10pt and 1
          step 15pt
           with {
            \node at (-2pt,0pt) {\pgfuseplotmark{star}};
            \node at (2pt,0pt) {\pgfuseplotmark{star}};
           }
   }
 }
}
}
\begin{document}

\date{September 2025}

\preprint{CERN-TH-2025-180}

\title{Exploring Pintopia: \\[4mm] Reflection Branes, Bordisms, and U-Dualities}

\institution{PENN}{\centerline{$^{1}$Department of Physics and Astronomy, University of Pennsylvania, Philadelphia, PA 19104, USA}}
\institution{KENTUCKY}{\centerline{$^{2}$Department of Mathematics, University of Kentucky, 719 Patterson Office Tower, Lexington, KY 40506-0027}}
\institution{CERN}{\centerline{$^{3}$Theoretical Physics Department, CERN, 1211 Geneva 23, Switzerland}}
\institution{PENNmath}{\centerline{$^{4}$Department of Mathematics, University of Pennsylvania, Philadelphia, PA 19104, USA}}

\authors{
Vivek Chakrabhavi\worksat{\PENN}\footnote{e-mail: \texttt{vivekcm@sas.upenn.edu}},
Arun Debray\worksat{\KENTUCKY}\footnote{e-mail: \texttt{a.debray@uky.edu}}, \\[4mm]
Markus Dierigl\worksat{\CERN}\footnote{e-mail: \texttt{markus.dierigl@cern.ch}}, and
Jonathan J. Heckman\worksat{\PENN,\PENNmath}\footnote{e-mail: \texttt{jheckman@sas.upenn.edu}}
}

\abstract{
The U-dualities of maximally supersymmetric non-chiral supergravity
(SUGRA)\ theories lead to strong constraints on the non-perturbative structure
of quantum gravity. In this paper we
determine Spin- and Pin-lifts of these dualities, which extend this action to
fermionic degrees of freedom. Among other things, this allows us to access
non-supersymmetric sectors of these low energy effective field theories in
which bosonic and fermionic degrees of freedom are treated differently. We use
this refinement of the duality groups, in tandem with the Swampland Cobordism Conjecture, to predict new codimension-two branes. These are a natural generalization of the recently discovered R7-branes of type II string theories. The first bordism groups for Spin-twisted duality bundles follow directly from the Abelianization of the duality groups. Viewing the SUGRA\ theory as the low energy limit of a toroidal compactification of M-theory, winding around these codimension-two defects enacts a reflection around one of the torus
directions, which in the effective field theory appears as a charge
conjugation symmetry. We establish some basic properties of such branes,
including determining BPS objects which can end on it, as well as braiding rules and 
bound states realized by multiple reflection branes.
}

\maketitle

\enlargethispage{\baselineskip}

\setcounter{tocdepth}{2}

\tableofcontents

\newpage

\section{Introduction} \label{sec:Intro}

Dualities provide important insight into the non-perturbative structure of
quantum field theory (QFT) and quantum gravity (QG). When combined with
supersymmetry, this leads to powerful constraints on the spectrum and
properties of BPS\ objects.

A celebrated example of this sort are the U-dualities of maximally
supersymmetric non-chiral supergravity (SUGRA) theories in $D$ spacetime dimensions.\footnote{See \cite{Cremmer:1978ds, Cremmer:1979up, Cremmer:1980gs, Julia:1980gr, deWit:1986mz}.} From a top down perspective, such theories arise from M-theory compactified on
a $T^{d}$ with $D+d=11$, unifying many string dualities \cite{Hull:1994ys, Witten:1995ex}. The resulting duality symmetries combine S-dualities
and T-dualities, as inherited from 11D diffeomorphisms and T-dualities of type
II\ string theories on a $T^{d-1}$.

What happens if we relax the assumption that our branes / defects preserve
supersymmetry? In this case more care is needed both in defining what we mean
by U-dualities, as well as in developing new tools to extract the spectrum
of branes / defects.

Our aim in this work will be concentrated in two complementary directions. On
the one hand, we revisit the structure of the U-duality groups in maximally
supersymmetric theories, showing that there are natural Spin- and Pin$^{+}$-lifts which can act non-trivially on fermionic degrees of freedom. On the
other hand, we shall use this structure to extract additional
non-supersymmetric data on the spectrum of objects. The first goal will be
achieved via a study of possible extensions of the bosonic U-duality groups.
The second goal will be achieved by leveraging the
recently proposed Swampland Cobordism Conjecture \cite{McNamara:2019rup} to
extract qualitatively new codimension-two objects.\footnote{See e.g.,
\cite{McNamara:2019rup, Montero:2020icj, Dierigl:2020lai, McNamara:2021cuo,
Blumenhagen:2021nmi, Buratti:2021yia, Debray:2021vob, Andriot:2022mri,
Dierigl:2022reg, Blumenhagen:2022bvh, Velazquez:2022eco, Angius:2022aeq,
Blumenhagen:2022mqw, Angius:2022mgh, Blumenhagen:2023abk, Debray:2023yrs,
Dierigl:2023jdp, Kaidi:2023tqo, Huertas:2023syg, Angius:2023uqk,
Kaidi:2024cbx, Angius:2024pqk, Fukuda:2024pvu, Braeger:2025kra,
Heckman:2025wqd} for recent work on the Swampland Cobordism Conjecture.}

Recently, it was proposed that taking into account the fermionic degrees of
freedom can lead to subtle extensions of such duality groups \cite{Pantev:2016nze, Tachikawa:2018njr, Debray:2021vob, Debray:2023yrs}.\footnote{See also \cite{Lazaroiu:2021vmb} for additional discussion of U-duality bundles.}
The main reason we need to do this is to specify a choice of bundle assignment for fermions which transform under both Spin and duality transformations. In general, these can be globally correlated. The case that has been studied in the greatest detail is that of the IIB\ duality
group $\SL(2,\mathbb{Z})$ and its corresponding Spin- (see \cite{Pantev:2016nze}%
)\ and $\Pin^+$- (see \cite{Tachikawa:2018njr}) lifts. Much as
$\SU(2)$ is the Spin-lift of $\SO(3)$, there is a non-trivial 
$\mathbb{Z}_{2}$ extension of the 
$\SL(2,\mathbb{Z})$ dualities generated by large diffeomorphisms of a $T^{2}$ to the metaplectic cover $\Mp(2,\mathbb{Z})$. 
In \cite{Pantev:2016nze} some Spin-lifts of other U-duality groups were also given, 
but as far as we are aware, a systematic study of all possible Spin-lifts of U-duality groups was not undertaken. 
One of our goals will be to determine the Spin-lift of U-dualities:
\begin{equation}
1\rightarrow
\mathbb{Z}_{2}\rightarrow \widetilde{G_{U}}\rightarrow G_{U}\rightarrow 1 \,.
\end{equation}

Including reflections along one of the directions of the compactification
torus $T^{d}$ leads to a further generalization where one instead extends the
bosonic duality group by allowing reflections, i.e., allowing orientation
reversing transformations of the internal $T^{d}$ as well. In the low energy
effective field theory this reflection is a charge conjugation symmetry. 
This can be written in terms of a short exact sequence of groups:
\begin{equation}
1 \rightarrow G_U\rightarrow G_U \rtimes \mathbb{Z}_2^R \rightarrow \mathbb{Z}_2^R \rightarrow 1 \,,
\end{equation}
where the $R$ superscript in $\mathbb{Z}_2^{R}$ refers to reflections. 
For the subgroup $\SL(d, \mathbb{Z})$ of large diffeomorphisms acting on $T^{d}$, this extensions yields $\GL(d,\mathbb{Z})$. As found in \cite{Tachikawa:2018njr}, the full duality group of
IIB\ string theory is then the Pin$^{+}$ cover of $\GL(2,\mathbb{Z})$. As far
as we are aware, the Pin$^{+}$ cover of more general U-duality groups has not been considered. This is given by a non-trivial central extension:
\begin{equation}
    1 \rightarrow \mathbb{Z}_2 \rightarrow \widetilde{G_U}^+ \rightarrow (G_U \rtimes \mathbb{Z}_2^R) \rightarrow 1 \,.
\end{equation}

Throughout, we shall refer to the Spin- and Pin$^{+}$-lifts of U-dualities as
$\widetilde{G_{U}}$ and $\widetilde{G_{U}}^{+}$. Due to the correlation
between the Spin structure and the duality bundle, the relevant structure
group for spacetimes is then:
\begin{align}
\Spin\text{-}\widetilde{G_{U}}\text{ structure}  &  \text{: } \quad \frac
{\Spin \times\widetilde{G_{U}}}{\mathbb{Z}_{2}}\label{eq:twisted}\\
\Spin\text{-}\widetilde{G_{U}}^{+}\text{ structure}  &  \text{: } \quad%
\frac{\Spin \times\widetilde{G_{U}}^{+}}{
\mathbb{Z}_{2}}, \label{eq:moretwisted}
\end{align}
where the $\mathbb{Z}_2$ in the quotient embeds as $(-1)^F$ in the $\Spin$ factor and the $\mathbb{Z}_2$ of the central extension in the U-duality group.

This extension of the duality groups provides additional access to the
non-perturbative as well as non-supersymmetric sectors of these theories. In
this work we use the Swampland Cobordism Conjecture \cite{McNamara:2019rup} to extract some general predictions for new non-supersymmetric branes in $D$-dimensional supergravity
theories. The main point is that while the Swampland Cobordism Conjecture
asserts that the bordism groups of quantum gravity are trivial:
\begin{equation}
\Omega_{k}^{\text{QG}}=0 \,\text{,}%
\end{equation}
in practice, actual bordism groups are often non-trivial!
As such, the Cobordism Conjecture predicts that the low energy effective field theory must be
supplemented by additional degrees of freedom. Note also that these new
objects are necessarily singular in the low energy effective field theory, and
are also stable against deformations to configurations describable in the supergravity theory at low energies \cite{Kaidi:2024cbx, Heckman:2025wqd}.

For our purposes, the relevant bordism groups are $\Omega_{k}^{\Spin\text{-}%
\widetilde{G_{U}}}(\pt)$ and $\Omega_{k}^{\Spin\text{-}\widetilde{G_{U}}^{+}%
}(\pt)$; namely, those in which the Spin structure and duality bundle are
correlated, as in lines \eqref{eq:twisted} and \eqref{eq:moretwisted}. While the computation of these bordism groups is in general quite difficult, in the case of low values of
$k$ much more can be said without much machinery. In particular, for the codimension-two objects of the theory, we prove that the bordism group in question is captured by
just the Abelianization of the duality bundle structure group:
\begin{equation}
\Omega_{1}^{\Spin\text{-}\widetilde{G_{U}}^{(+)}}(\text{pt})=\text{Ab}%
\big[\widetilde{G_{U}}^{(+)}\big] \,.
\end{equation}
In the case of $D=8,\,\textrm{and}\,\,9$, where $G_{U}$ contains a simple $\SL(2,\mathbb{Z})$ factor, the 
Abelianization of $\widetilde{G_{U}}^{+}$ is $\mathbb{Z}_{2}\oplus \mathbb{Z}_{2}$, where one of the factors is generated by a supersymmetric brane in codimension-two, and the other is a non-supersymmetric brane induced from a reflection on the internal torus.
In the case of $D\leq7$ we find that the only surviving
element of the bordism group is a $\mathbb{Z}_{2}$, which geometrically descends from a reflection of a single direction in
the internal $T^{d}$. Other reflections are obtained by conjugating with the
duality group.

This codimension-two object is simply the $D$-dimensional version of the
reflection 7-branes (R7-branes) found\footnote{See also hints of this brane in \cite{Distler:2009ri}.} 
in \cite{Dierigl:2022reg, Debray:2023yrs} and
further studied in \cite{Dierigl:2023jdp, Ruiz:2024jiz, Heckman:2025wqd}.
Indeed, starting from the reflection 7-brane associated with a $(-1)^{F_{L}}$
monodromy transformation, we observe that compactification on additional circles
leads to the corresponding objects in the lower-dimensional theory. This reflection brane is non-supersymmetric 
since it does not preserve any Killing spinors. Additionally, using the same reasoning presented in \cite{Kaidi:2024cbx, Heckman:2025wqd}, 
it is stable against deformations to any smooth field configurations in the low energy effective field theory.

Much as in earlier studies of the R7-brane, we can also use simple topological
arguments to characterize some properties of reflection branes. For one, we
immediately see that BPS branes can terminate on the reflection brane of the $D$-dimensional theory. 
Additionally, since the R7-brane serves as a
collapsed bubble configuration for a IIA/IIB wall (with a $(-1)^{F_L}$ monodromy cut),\footnote{See \cite{Heckman:2025wqd}.}
we also see that analogous statements hold for these branes as well, separating the same
configurations wrapped on additional directions of an internal $T^{d}$.

Similar considerations hold for the dynamics of multiple reflection branes. We
find that for reflections associated with an even number of distinct internal directions, the
resulting branes form supersymmetric bound states characterized by a local
geometry of the form $T^{d-2k}\times\left(\mathbb{C} \times T^{2k}\right)  / \mathbb{Z}_{2}^{k}$, where the $\mathbb{Z}_{2}^{k}$ act on pairs of holomorphic coordinates. In the case of an odd number of distinct reflections, we instead arrive at a non-supersymmetric configuration, namely a supersymmetric background with an additional supersymmetry breaking reflection brane added in. This geometry can be written as $T^{d-(2k+1)}\times\text{Cone}(\mathrm{KB}_{\infty} \times T^{2k})/\Gamma$, where the $\mathrm{KB}_{\infty}$ is a circle bundle over the spacetime $S^{1}_{\infty} = \partial \mathbb{C}$ resulting in a Klein bottle (with Pin$^{+}$ structure).

The rest of this paper is organized as follows. In Section \ref{sec:GROUP} we
determine the $\Spin$- and $\Pin^{+}$-lifts of the bosonic duality groups. In
Section \ref{sec:BORDISM} we compute the one-dimensional bordism groups associated to those $\Spin$- and $\Pin^+$-lifts, and in Section \ref{sec:REFLECTIONBRANES} we
derive some properties of the reflection branes predicted by the Cobordism
Conjecture. We present our conclusions in Section \ref{sec:CONC}. We discuss some additional features of reflections on the massive and massless spectrum of M-theory in Appendix \ref{app:MASSIVE}. This also provides a complementary perspective on why we must enlarge the U-duality group. Some additional technical details on the standard ``bosonic'' U-duality groups are reviewed in Appendix \ref{app:Split_vs_Compact}. In Appendix \ref{app:Spin_Pin_Lifts} we discuss the Spin- and Pin$^{+}$-lifts of these duality groups to $\widetilde{G_U}$ and $\widetilde{G_{U}}^{+}$. Appendices \ref{app:LHS} and \ref{app:Adams} discuss some further aspects of the corresponding group (co)homology and bordism groups using spectral sequence techniques.

\section{Spin- / \texorpdfstring{Pin$^+$}{Pin+}-Lifts of U-Duality Groups} \label{sec:GROUP}

In this paper we focus on the duality groups of maximally supersymmetric non-chiral supergravity theories. The $D$-dimensional effective supergravity theory can originate both from M-theory compactifications on $\mathbb{R}^{D-1,1} \times T^d$ or type II string theory on $\mathbb{R}^{D-1,1} \times T^{d-1}$. The symmetry / duality group of the theory is known as the U-duality group, and is composed of both strong / weak dualities as well as T-dualities. To be precise, the \emph{bosonic} U-duality group takes the form:\footnote{In \cite{Obers:1998fb} this is stated as $\SL(d,\mathbb{Z}) \bowtie \SO(d-1,d-1,\mathbb{Z})$, but taking into account the RR states and their transformation as spinors of the T-duality group requires this slight refinement.}
\begin{equation}
    G_U = \SL(d,\mathbb{Z}) \bowtie \Spin(d-1,d-1,\mathbb{Z}) \,,
\end{equation}
where the components correspond to the group of large diffeomorphisms on the M-theory torus and the group of T-dualities in type II string theories respectively.\footnote{The classical U-duality group is defined over $\mathbb{R}$, but is broken to the discrete group over $\mathbb{Z}$ due to quantum effects \cite{Hull:1994ys}.} Here, it is important to note that the T-duality group must include spinor representations since the RR fields transform in bispinor representations of:
\begin{equation}
\frac{\Spin(d-1, \mathbb{Z})_{L} \times \Spin(d-1, \mathbb{Z})_{R}}{\mathbb{Z}_2} \subset \Spin(d-1,d-1,\mathbb{Z}),
\end{equation}
namely the left- and right-moving spinors. That being said, observe that purely left-moving or right-moving spinors do not naturally lift to representations of $\Spin(d-1,d-1,\mathbb{Z})$, a fact we will need to handle with care.

In any case, from the perspective of low energy supergravity, the relevant U-duality group is $G_{U}(\mathbb{R})$, namely the split real form of a certain Lie group. The refinement to M-theory imposes a quantization condition to $G_{U} \equiv G_{U}(\mathbb{Z})$, in accord with a discrete spectrum of branes / non-perturbative objects \cite{Hull:1994ys, Witten:1995ex}. For $3 \leq D \leq 9$,
the U-duality groups are:
\begin{equation}
\centering
\begin{tabular}{|c|c|}
\hline
\textbf{$D$} & \makecell{\textbf{Bosonic U-duality}\\\textbf{Group $G_U$}} \\
\hline
$9$ & $\SL(2,\mathbb{Z})$  \\
$8$ & $\SL(3,\mathbb{Z}) \times \SL(2,\mathbb{Z})$ \\
$7$ & $\SL(5,\mathbb{Z})$ \\
$6$ & $\Spin(5,5,\mathbb{Z})$ \\
$5$ & $E_{6(6)}(\mathbb{Z})$ \\
$4$ & $E_{7(7)}(\mathbb{Z})$ \\
$3$ & $E_{8(8)}(\mathbb{Z})$ \\
\hline
\end{tabular}
\label{tab:Summary_Bos_U-dualities}
\end{equation}
See Appendix \ref{app:Split_vs_Compact} for additional discussion of the U-duality groups.

The main subtlety we need to contend with is the behavior of fermions under toroidal compactification, and in particular, the interplay with U-dualities. Locally, one can characterize fermions as transforming in spinor representations, and this clearly descends from 11D. Globally, however, we need to also specify how such objects transform in duality bundles.

Consider, for example, an effective field theory compactified on a $T^d$. Now, in the case of supersymmetric compactifications to a spacetime with Spin structure, the spinors will transform in a section of a bundle with structure group $\Spin(D-1,1) \times \SL(d,\mathbb{Z})$. We can clearly consider more general U-duality groups $\Gamma$, and we characterize this by a connection of the schematic form:
\begin{equation}
\mathcal{D} = d+\omega_{\Spin}+A_{\Gamma},\label{eq:connection}%
\end{equation}
where $\omega_{\Spin}$ denotes the Spin connection and $A_{\Gamma}$ the
connection for the relevant duality bundle with structure group $\Gamma$. In this case, it is natural to restrict to $\Gamma = G_U$, but in general $\Gamma$ could be different from $G_U$.

Indeed, nothing really requires us to restrict ourselves to $D$-dimensional spacetimes with a $\Spin$ structure.\footnote{F-theory \cite{Vafa:1996xn, Morrison:1996na, Morrison:1996pp} is a prototypical example of this sort, in which the base of an elliptically-fibered Calabi-Yau manifold typically may only have a $\Spin^c$ structure rather than a $\Spin$ structure.} Rather, one can impose a milder condition in which the Spin connection and duality bundle are correlated in a Spin-$\Gamma$ bundle with structure group:
\begin{equation}
\Spin\text{-}\Gamma: \quad \frac{\Spin \times \Gamma}{\mathbb{Z}_2} \,,
\end{equation}
where the $\mathbb{Z}_2$ is the diagonal of the $(-1)^F$ generator of $\Spin(D-1,1)$ (i.e., it characterizes the cover $\Spin / \mathbb{Z}_2 = \text{SO}$), and some putative $\mathbb{Z}_2$ subgroup of $\Gamma$. From this perspective, we need to determine the Spin- and Pin$^{+}$-lifts of $G_U$. Again, we emphasize that if we restrict to supersymmetric backgrounds with Spin structure, the duality group will appear to be just $G_U$. If, however, we entertain more general backgrounds, more care will be needed.\footnote{Another way to see the same issue is to work with a fixed $T^{d}$, but with tuned moduli. At these enhanced loci, additional symmetries act on fermionic degrees of freedom, leading to the same conclusion.}

Our aim in the remainder of this section will be to understand the Spin- and Pin$^{+}$-lifts of these duality groups, respectively denoted by $\widetilde{G_{U}}$ and $\widetilde{G_U}^{+}$ (see also Appendix \ref{app:Spin_Pin_Lifts}).

\subsection{Warmup: Lifting \texorpdfstring{$\SL(d,\mathbb{Z})$}{SL(d,Z)}}

Before examining the various extensions of the full U-duality groups in \eqref{tab:Summary_Bos_U-dualities}, our aim in this section will be to first motivate the general discussion with an analysis of Spin- and Pin$^{+}$-lifts of $\SL(d,\mathbb{Z})$ dualities. These are extensions of the duality groups specified by the large diffeomorphisms of an internal $T^{d}$.\footnote{This example is also studied in \cite{Pantev:2016nze}.} After this, we will proceed to the more intricate case of U-dualities.

We first study the Spin-lift of $\SL(d,\mathbb{Z})$, the group of large diffeomorphisms of a $T^d$. The motivation for this case is a gravitational theory with fermionic degrees of freedom compactified on a $T^d$. In the presence of a non-trivial $T^d$ fibration over the $D$-dimensional spacetime, the fermionic degrees of freedom will have correlated Spin and duality bundle transformations.

With this in mind, it is instructive to start with the continuous group $\SL(d,\mathbb{R})$, which has maximal compact subgroup $\SO(d)$. From this, we see that the first homotopy groups are given by
\begin{equation}
    \pi_1 \big(\SL(d,\mathbb{R})\big) = \pi_1\big(\SO(d)\big) =
    \begin{cases}
        \mathbb{Z} \quad \,\;\textrm{if}\,\,\,d=2\,,\\
        \mathbb{Z}_2 \quad \textrm{if}\,\,\,d\geq3\,.
    \end{cases}
\end{equation}
Thus, for $d\geq3$ the Spin-lift / double cover of $\SL(d,\mathbb{R})$ is also the universal cover, and is given by a central extension of the form
\begin{equation}
    1 \rightarrow \mathbb{Z}_2 \rightarrow \widetilde{\SL}(d,\mathbb{R}) \rightarrow \SL(d,\mathbb{R}) \rightarrow 1 \label{eq:SL_Spin_Lift}.
\end{equation}
The main exception is $\SL(2,\mathbb{R})$, whose universal cover is a $\mathbb{Z}$-fold central extension. In this case, the metaplectic group $\Mp(2,\mathbb{R})$ is the only non-trivial double cover of $\SL(2,\mathbb{R})$. This singles out this extension as the appropriate one to act on fermions, which transform in $\Spin\text{-}\widetilde{\mathrm{SL}}$ bundles. From this, we can get the desired Spin-lift / central extension of the discrete group $\SL(d,\mathbb{Z})$ via a pullback from the extension \eqref{eq:SL_Spin_Lift}.

We can generalize even further by allowing for reflections, as necessary when considering the full Pin$^+$-lift of the duality group. In particular, we include a new generator $R$ in $\GL(d,\mathbb{Z}) \setminus \SL(d,\mathbb{Z})$. For example, one such generator is
\begin{equation}
    R = \textrm{diag}(-1,1,\dots,1) \in \GL(d,\mathbb{Z)} \setminus \SL(d,\mathbb{Z}).
\end{equation}
The element $R$ acts on the generators of $\SL(d,\mathbb{Z})$ via conjugation, and all other reflections are also related to $R$ by conjugation. Thus, the inclusion of this reflection enlarges the duality group to
\begin{equation}
    \GL(d,\mathbb{Z}) = \SL(d,\mathbb{Z}) \rtimes \mathbb{Z}_2^R.
\end{equation}
The full Pin$^+$-lift is then given by a $\mathbb{Z}_2$ central extension of the form (in the obvious notation):
\begin{equation}
    1 \rightarrow \mathbb{Z}_2 \rightarrow \widetilde{G}^+ \rightarrow \SL(d,\mathbb{Z}) \rtimes \mathbb{Z}_2^R \rightarrow 1,
\end{equation}
where $\widetilde{R} \in \widetilde{G}^+$ (the lifted reflection generator) satisfies $\widetilde{R}^2 = 1$, as demanded by the Pin$^+$ structure.\footnote{A Pin$^-$-lift would instead require $\widetilde{R}^2 = (-1)^F$.}

With this in hand, we now proceed to consider the more involved case of U-dualities in maximal supergravity theories.

\subsection{Spin- and \texorpdfstring{Pin$^+$}{Pin+}-Lifts of U-Dualities}

In this section we turn to the Spin- and Pin$^{+}$-lifts of U-dualities.

\subsubsection{Spin-Lifts}

Consider next the Spin-lift of the full U-duality group $G_U$, which contains $\SL(d,\mathbb{Z})$ as a subgroup. We denote this by $\widetilde{G_U}$ \cite{Pantev:2016nze}. This is once again defined via a central extension of the form
\begin{equation}
    1 \rightarrow \mathbb{Z}_2 \rightarrow \widetilde{G_U} \rightarrow G_U \rightarrow 1 \,. \label{eq:Spin_Extension}
\end{equation}
In particular, we find that $\pi_1 \big(G_U(\mathbb{R})\big) = \pi_1(K_U) = \mathbb{Z}_2$ for $3 \leq D \leq 7$, where $G_U(\mathbb{R})$ is the continuous bosonic U-duality group and $K_U\subset G_U(\mathbb{R})$ is the maximal compact subgroup. For $D = 8,9$ the situation is a bit more delicate (since there is a torsion free factor), but one can still identify a universal $\mathbb{Z}_2$ extension.

As discussed in Appendix \ref{App:Explicit_Spin_Lifts}, the Spin-lift for the discrete U-duality group $G_U$ can be defined by pulling back from that for $G_U(\mathbb{R})$, much as was the case for $\SL(d,\mathbb{Z})$.\footnote{This argument essentially follows the discussion in \cite{Pantev:2016nze}.} Once again, the main exception to this procedure is the group $\SL(2,\mathbb{Z})$, as the universal cover is infinite-sheeted. The metaplectic group $\Mp(2,\mathbb{Z})$ is still the unique non-trivial $\mathbb{Z}_2$ extension; it just arises from the unique non-trivial double cover of $\SL(2, \mathbb{R})$, which is not the universal cover.

\subsubsection{\texorpdfstring{Pin$^+$}{Pin+}-Lifts}

This story can be generalized even further by also allowing for compactifications of M-theory on non-orientable manifolds. In particular, 11D M-theory is well defined on manifolds with Pin$^+$ structure \cite{Witten:1996md, Freed:2019sco}, which in turn suggests a further lift of $G_U$ to incorporate orientation reversing transformations such as reflections.\footnote{In this work we are focusing on Spin-twisted duality bundles, so the overall Pin$^+$ structure of M-theory means that we must in turn restrict to the Pin$^+$-lift of $G_{U}$. If one considers a more general twisting by tangential structures (as can happen in some 9D supergravity backgrounds) then one might also entertain broader possible lifts. We leave the exploration of this possibility to future work.}

The lift to include reflections is given by a short exact sequence of the form
\begin{equation}
    1 \rightarrow G_U \rightarrow G_U^{(R)} \rightarrow \mathbb{Z}_2^R \rightarrow 1 \,, \label{eq:Reflection_Extension}
\end{equation}
which describes a semi-direct product
\begin{equation}
    G^{(R)}_U = G_U \rtimes \mathbb{Z}_2^R \,.
\end{equation}
The generator $R$ of the $\mathbb{Z}_2^R$ in this extension descends from a reflection element in the disconnected component of $G^{(R)}_U(\mathbb{R})$.\footnote{Once again, we only need to include a single new generator for the reflections, as all of the reflections are related by conjugation.} In practice, we determine $R$ as a matrix in the disconnected component of the maximal compact subgroup $K^{(R)}_U \subset G^{(R)}_U(\mathbb{R})$. This in turn defines a short exact sequence of the form
\begin{equation}
    1 \rightarrow K_U \rightarrow K_U^{(R)} \rightarrow \mathbb{Z}_2^R \rightarrow 1 \,.
\end{equation} 
The relevant maximal compact subgroups are summarized as
\begin{equation}
\centering
\begin{tabular}{|c|c|}
\hline
\textbf{$D$} & \makecell{\textbf{Maximal Compact}\\\textbf{Subgroup $K_U$}} \\
\hline
$9$ & $\SO(2)$  \\
$8$ & $\SO(3) \times \SO(2)$ \\
$7$ & $\SO(5)$ \\
$6$ & $(\Spin(5) \times \Spin(5))/\mathbb{Z}_2^{\textrm{diag}}$ \\
$5$ & $\USp(8)/\mathbb{Z}_2$ \\
$4$ & $\SU(8)/\mathbb{Z}_2$ \\
$3$ & $\Spin(16)/\mathbb{Z}_2$ \\
\hline
\end{tabular}
\label{tab:Maximal_Compact_Subgroups}\,.
\end{equation}

Finally, the full Pin$^+$-lift of interest is given by combining the two extensions \eqref{eq:Spin_Extension} and \eqref{eq:Reflection_Extension} into an extension of the form
\begin{equation}
    1 \rightarrow \mathbb{Z}_2 \rightarrow \widetilde{G_U}^+ \rightarrow G_U \rtimes \mathbb{Z}_2^R \rightarrow 1, \label{eq:Pin+_Extension}
\end{equation}
where we demand that the lifted reflection generator $\widetilde{R}$ satisfies $\widetilde{R}^2 = 1$. See Appendix \ref{App:Explicit_Pin_Lifts} for an explicit construction of the Pin$^+$-lifts in each dimension.\footnote{For $D \leq 8$ one can also define a reflection operator that squares to $(-1)^F$ by combining several reflections of the internal torus: see~\cite[\S 3.1]{Montero:2020icj}.}

As a final comment, we note that we have now fixed that the fermions in the $D$-dimensional effective theory transform as sections of a
\begin{equation}
    \frac{\Spin \times \widetilde{G_U}^+}{\mathbb{Z}_2}
\end{equation}
bundle over spacetime. With this in hand, we now move on to computing the relevant bordism groups, and analyze the predictions made by the Cobordism Conjecture in this setting.

\section{Bordisms and Branes} \label{sec:BORDISM}

In the previous section we discussed the Spin- and Pin$^{+}$-lifts of the bosonic U-duality groups. Our aim in this section will be to use the Swampland Cobordism Conjecture \cite{McNamara:2019rup} to predict new objects in the low energy effective theory.

The general setup we consider involves a $D$-dimensional effective field
theory which enjoys a duality symmetry $\Gamma$. We have in mind situations
where $\Gamma$ is realized as a $\mathbb{Z}_{2}$ extension, as appropriate for a Spin- and Pin$^{+}$-lift, which we
characterize by the short exact sequence:%
\begin{equation}
1\rightarrow \mathbb{Z}_{2}\rightarrow \Gamma \rightarrow \Gamma / \mathbb{Z}_2 \rightarrow
1 \,.\label{eq:GammaPrime}%
\end{equation}
In general, there can be a non-trivial correlation between the Spin structure of spacetime and the
duality bundle, and so we refer to Spin-$\Gamma$ twisted bundles as those
which mix the tangential and internal symmetries:\footnote{See, e.g., \cite{Debray:2023yrs} for a more mathematical definition.}%
\begin{equation}
\Spin\text{-} \Gamma \equiv\frac{\Spin\times \Gamma}{\mathbb{Z}_{2}} \,,
\label{eq:SpinGamma}%
\end{equation}
where the $\mathbb{Z}_2$ embeds as $(-1)^F$ in Spin and the image of the $\mathbb{Z}_2$ in the central extension \eqref{eq:GammaPrime} in the duality group $\Gamma$.

The Swampland Cobordism Conjecture \cite{McNamara:2019rup} asserts that when
the effective field theory has a non-trivial bordism group $\Omega_{k}^{\mathcal{G}}$ (where we consider bordism of manifolds with respect to whatever structure $\mathcal{G}$ is needed to define the theory), one
must enrich the low energy effective field theory by additional dynamical
objects of codimension $(k+1)$. One way to understand this condition is that
the presence of a non-trivial bordism group generator implies the existence of
a $p$-form symmetry $G^{(p)}=\textrm{Hom}(\Omega_{k}^{\mathcal{G}},\U(1))$, where $p=D-(k+1)$;
namely, one can construct an extended object filling $p$ spacetime dimensions.
Following the formulation of generalized global symmetries given in
\cite{Gaiotto:2014kfa}, there are corresponding topological symmetry operators
which link with these defects. In particular, they fill out $q=k$ dimensions
(so that $p+q=D-1$).

The general lore from quantum gravity is that all such putative global symmetries are actually broken (i.e., absent) or gauged. Let us first discuss breaking and then turn to gauging.
The topological linking can be destroyed if additional dynamical
states are added to the spectrum; namely, if our original defect can terminate
on an object which fills $p-1$ spatial dimensions, as well as time. This is
then a codimension-($k+1$) object in the theory. Observe that in terms of the
field content of the original effective field theory, this configuration is
necessarily singular, i.e., there is no deformation to a smooth
configuration. Supplementing the theory by these additional dynamical states
allows any putative topological linking to be destroyed, thus removing the
candidate global symmetry.

Gauging a finite $p$-form symmetry turns out to be somewhat subtle, and winds up introducing a magnetic dual global higher-form symmetry \cite{Gaiotto:2014kfa}. To get rid of these candidate symmetries we need to introduce additional degrees of freedom anyway. As such, we take the most conservative interpretation and include the expected defects right from the start.

Now, in the case at hand where we have fermions coupled to both the Spin
connection and a duality bundle, the relevant connection is of the schematic
form described in line (\ref{eq:connection}), which we reproduce here:
\begin{equation}
\mathcal{D} = d+\omega_{\Spin}+A_{\Gamma},
\end{equation}
where $\omega_{\Spin}$ denotes the Spin connection and $A_{\Gamma}$ the
connection for the relevant duality bundle, and there is a gauging by a diagonal $\mathbb{Z}_2$. This has the consequence that transition functions can be interpreted in multiple ways. For example, consider the transition function $(\theta, g) \in \Spin \times \Gamma$; due to the $\mathbb{Z}_2$ quotient, this is identified with
\begin{equation}
(\theta, g) \simeq ((-1)^F \theta, F g) \,,
\label{eq:twisttransfunc}
\end{equation}
where we denote the image of $1\in \mathbb{Z}_2$ in $\Gamma$, determined from \eqref{eq:GammaPrime}, by $F$. This is precisely the reason why the twisted theory can be formulated on non-Spin manifolds, because the cocycle condition on triple overlaps involves the equivalence class of $(\theta, g)$ and not $\theta$ alone. Thus, the obstruction to Spin structure, captured by the cocycle condition for $\theta$, is compensated by a non-trivial duality bundle, captured by the transition functions $g$. But even in the case the underlying manifold $X$ allows for a Spin structure, the identification in \eqref{eq:twisttransfunc} has the important consequence that the choice of Spin structure, classified by an element in $H^1(X;\mathbb{Z}_2)$, can be reinterpreted as a choice of duality bundle. The most important situation for us is the circle $S^1$, which we will discuss in more detail next.

The circle allows for two different Spin structures, because $H^1(S^1;\mathbb{Z}_2) = \mathbb{Z}_2$. The two Spin structures are characterized by periodic, $S^1_{+}$, or anti-periodic, $S^1_-$, boundary conditions for fermions. The $\Gamma$ bundles are characterized by $H^1(S^1; \Gamma)$,\footnote{If $G$ is a non-Abelian group, the standard definition of $H^k(X; A)$ with coefficients in an Abelian group $A$ can be extended to $G$ in a sensible way for $k\le 1$, and like in the Abelian case, $H^1(X; G)$ classifies principal $G$-bundles through their monodromy around non-trivial $1$-cycles.} which for discrete $\Gamma$ can be identified with $\Gamma$ itself. This means that $\Gamma$-bundles are classified by a transition function / monodromy $g$ when going around the circle. Denoting a circle with boundary conditions $\pm$ and monodromy $g$ as $(S^1_{\pm})_{g}$, the gauging of the diagonal $\mathbb{Z}_2$ leads to the identification
\begin{equation}
(S^1_{\pm})_{g} \simeq (S^1_{\mp})_{F  g} \,,
\label{eq:twistident}
\end{equation}
i.e., one can trade a $(-1)^F$ for $F$. This leads to the reduction of inequivalent manifolds, which will be important for the determination of the bordism groups below.

We are interested in calculating the bordism group $\Omega_{1}^{\Spin\text{-}\Gamma}$(pt). We present a direct calculation of this in Appendix
\ref{app:Adams}, where we establish that, when $D \leq 7$ and $ \Gamma
=\widetilde{G_{U}}$, this bordism group is trivial. Meanwhile, for $D\leq7$ and
$\Gamma=\widetilde{G_{U}}^{+}$, we instead have a single $\mathbb{Z}_{2}$ factor associated with monodromy by a reflection in the internal
toroidal directions. The cases $D = 8,9$ have additional (supersymmetry preserving) bordism generators.

Our aim here will be to establish the same result in more physical terms.
The end result of our analysis is that for a non-trivial $\Spin\text{-} \Gamma$ bundle, the bordism group is:\footnote{In general, bordism groups may be defined for manifolds equipped with a map to a fixed topological space $X$, and are denoted $\Omega_k^{\mathrm{Spin}\text{-}\Gamma}(X)$. An element of $\Omega_k^{\mathrm{Spin}\text{-}\Gamma}(X)$ is represented by a closed $k$-dimensional Spin-$\Gamma$ manifold $M$ together with a continuous map $f\colon M\to X$. When $X=\mathrm{pt}$, every such map is trivial, and one recovers the bordism group of closed Spin-$\Gamma$ manifolds without any additional structure. In this case one often omits the argument and simply writes $\Omega_k^{\mathrm{Spin}\text{-}\Gamma}$ instead of $\Omega_k^{\mathrm{Spin}\text{-}\Gamma}(\mathrm{pt})$. From a physical viewpoint, the argument $X$ encodes additional background fields via the map $M\to X$, and taking $X=\mathrm{pt}$ means that no such additional background fields are present beyond the Spin-$\Gamma$ structure already specified.}
\begin{equation}
\Omega_{1}^{\Spin\text{-} \Gamma} (\text{pt}) = \mathrm{Ab}[ \Gamma] \,.
\end{equation}

\subsection{Spin-Lifts, \texorpdfstring{Pin$^+$}{Pin+}-Lifts, and Bordisms} 

In this section we turn to the Spin- and Pin$^{+}$-Lifts of U-duality groups, and the corresponding first bordism groups associated with Spin-twisted duality bundles.

\subsubsection{Spin-Lift and Bordisms}
\label{ss:spin_lift_bordism}

Consider first the Spin-lift of the U-duality groups, $\widetilde{G_U}$. As discussed above, the only manifolds we need to consider are circles $(S^1_{\pm})_{g}$ with $\widetilde{G_U}$ monodromy $g$. Using the identification \eqref{eq:twistident}, we always fix the fermionic boundary conditions to be bounding. This can modify the duality bundle. More concretely, for non-bounding boundary conditions we rewrite
\begin{equation}
\label{switch_periodicity}
    (S^1_+)_{g} \simeq (S^1_-)_{F g} \,.
\end{equation}
However, since $\widetilde{G_U}$ is perfect (see Appendix \ref{app:LHS}) for $D \leq 7$, the element $F g$ is part of the commutator subgroup of $\widetilde{G_U}$. Similar to the results in \cite{McNamara:2021cuo, Debray:2023yrs, Ruiz:2024jiz, Braeger:2025kra}, this provides a gravitational soliton configuration with a well-defined twisted Spin structure that bounds the one-dimensional manifold (see Figure \ref{fig:Spintriviality}).
\begin{figure}
    \centering
    \includegraphics[width = 0.9 \textwidth]{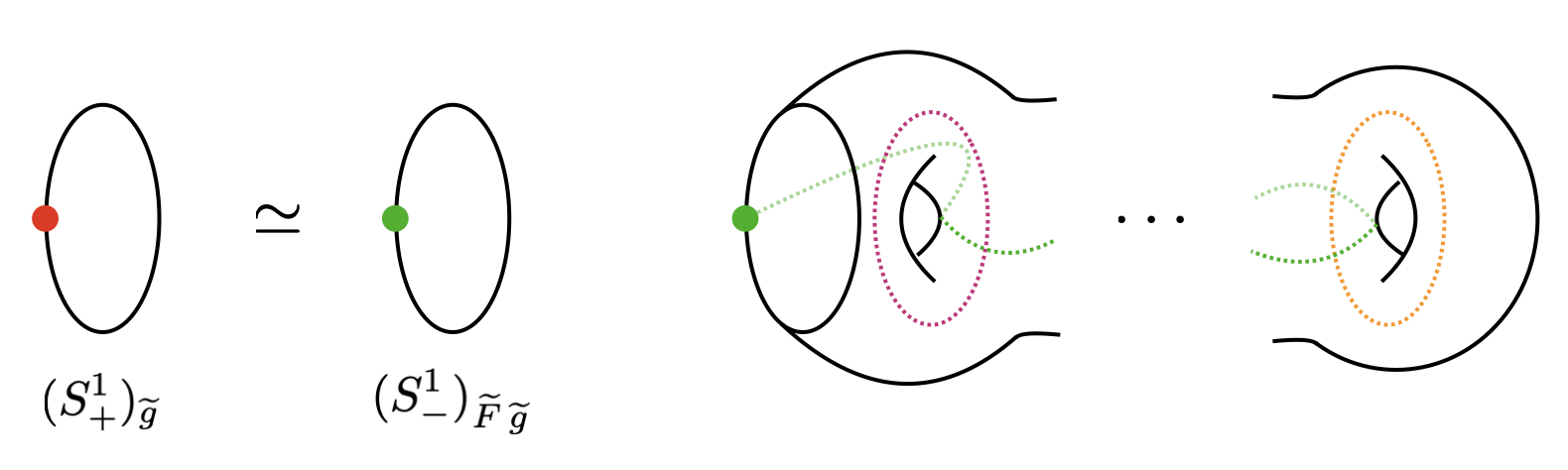}
    \caption{Bounding of every one-dimensional manifold with Spin-$\widetilde{G_U}$ structure (the duality bundle is depicted via transition functions on codimension-one sub-manifolds).}
    \label{fig:Spintriviality}
\end{figure}
We therefore see that, for $D \leq 7$,
\begin{equation}
    \Omega^{\Spin\text{-}\widetilde{G_U}}_1 (\text{pt}) = 0 \,.
\end{equation}
This is also proven by a spectral sequence argument in Appendix \ref{app:Adams}. Since the bordism group vanishes there is no need to include any codimension-two Spin-$\widetilde{G_{U}}$ defects in the theory to break global symmetries.

The story is different for duality groups with non-trivial Abelianization, i.e., for $D = 8,9$, since $G_U$ has a simple $\SL(2,\mathbb{Z})$ factor. There, the Spin-lift can lead to a non-trivial extension. For example, as found in \cite{Debray:2023yrs}:
\begin{equation}
    \Omega^{\Spin}_1 \big( B\SL(2,\mathbb{Z}) \big) = \mathbb{Z}_2 \oplus \mathbb{Z}_{12} \rightarrow \mathbb{Z}_{24} = \Omega^{\Spin\text{-}\Mp(2,\mathbb{Z})}_1 (\pt) \,.
\end{equation}
Again, we can understand this pictorially by translating the Spin structure on the spacetime circle to a shift in the transition function of the duality bundle. This reasoning is compatible with the relation
\begin{equation}
    \Omega^{\Spin\text{-} \Gamma}_1(\pt) = \text{Ab} [\Gamma ] \,,
\end{equation}
for more general $\Gamma$, which we prove in Appendix \ref{app:Adams} when $\Gamma$ is one of the U-duality groups.

\subsubsection{\texorpdfstring{Pin$^+$}{Pin+}-Lift and Bordisms}
\label{ss:pinp_lift_bordism}

Next, we turn to the Pin$^+$-lift of the duality group $G_U$ after the inclusion of a reflection element $R$, which can be understood as extending $G_U$ to the semi-direct product $G_U \rtimes \mathbb{Z}_2^R$. The fact that we do a Pin$^+$-lift means that $R$ lifts to an element in $\widetilde{G_U}^+$, which we denote by $\widetilde{R}$, that still squares to the identity
\begin{equation}
    \widetilde{R}^2 = 1 \in \widetilde{G_U}^+ \,.
\end{equation}
As before, the $\mathbb{Z}_2$ of the extension defining the lift is identified with the $(-1)^F$ coming from Spin$(D-1,1)$, and we can translate the choice of Spin structure into a duality bundle.

The different backgrounds we need to consider are given by circles with a duality transition function. With the exact same argument as above, using that each element in $G_U$ as well as $F$ can be written in terms of a commutator, there are bounding configurations of the type depicted in Figure \ref{fig:Spintriviality}. The only generator left to discuss is $(S^1_-)_{\widetilde{R}}$, i.e., the circle with a transition function given by reflection. However, the reflection element can never be a commutator and thus cannot be bounded by manifolds of the type above.

For that let us recall that the Pin$^+$-lift can be described by the short exact sequence
\begin{equation}
    1 \rightarrow \mathbb{Z}_2 \rightarrow \widetilde{G_U}^+ \rightarrow G_U \rtimes \mathbb{Z}_2^R \rightarrow 1 \,.
\end{equation}
Now if there was a way to write $\widetilde{R}$ as a (product of) commutator(s), we could map this to $G_U \rtimes \mathbb{Z}_2^R$, where the equation should still hold and produce $R$ as a commutator. Note, however that $G_{U}^{(R)}(\mathbb{R})$ has two disconnected components, and under the ``determinant map''  $\mathrm{det}:G_{U}^{(R)}(\mathbb{Z}) \rightarrow \mathbb{Z}_2^{(R)}$,\footnote{For any representation it is just a determinant on the corresponding linear maps.} the image of $G_U$ is $+1$ and the element $R$ maps to $-1$. Then, it is clear that $R$ cannot be a commutator, as it always contains an even number of elements with determinant $-1$ and hence can only produce elements with positive determinant. The same is true for $\widetilde{R}$ in $\widetilde{G_U}^+$.\footnote{Though not relevant for our present physical purposes, the same holds true for the Pin$^{-}$ lift.}

One might also ask whether there might be other manifolds that can bound the circle with $\widetilde{R}$ transition function which are not included in the class above. If that was true, one could glue two of them along their common boundary, given by $(S^1_-)_{\widetilde{R}}$, to obtain a compact, orientable, and smooth two-dimensional manifold. These are simply the Riemann surfaces, which are captured by the argument above. Any other such manifolds cannot exist.

With the arguments above, for $G_U$ perfect (i.e., for $D \leq 7$), we find that,
\begin{equation}
    \Omega^{\Spin\text{-}\widetilde{G_U}^+}_1 (\pt) = \mathbb{Z}_2 \,\,\, \text{for} \,\,\, 3 \leq D \leq 7,
\end{equation}
with the generator given by $(S^1_-)_{\widetilde{R}}$. This is in accord with the general discussion given above, and the result:
\begin{equation}
    \Omega^{\Spin\text{-} \Gamma}_1 (\pt) = \text{Ab} [\Gamma] \,,
\end{equation}
with proof in Appendix \ref{app:Adams} in the case that $\Gamma$ is one of the U-duality groups. This includes $D = 8,9$ for which one has
\begin{equation}
 \Omega^{\Spin\text{-}\widetilde{G_U}^+}_1 (\pt) = \text{Ab} \big[ \widetilde{G_U}^+ \big] = \mathbb{Z}_2 \oplus \mathbb{Z}_2 \,\,\, \text{for} \,\,\, D = 8,9,
\end{equation}
with one $\mathbb{Z}_2$ associated to a circle with reflection and the other $\mathbb{Z}_2$ to a circle with non-trivial $\Mp(2,\mathbb{Z})$ monodromy given by a Spin-lift of the $S$ generator for $\SL(2,\mathbb{Z})$.

We thus conclude that in the original $D$-dimensional effective theory, we get
a non-trivial codimension-two defect, as obtained from the Pin$^{+}$-lift of
reflections on the internal torus directions. We refer to this as a
\textquotedblleft reflection brane\textquotedblright\ since it arises from an
internal reflection on the torus. Note also that nothing singles out a
particular direction of reflection. Indeed, conjugating by the internal
$\SL(d,\mathbb{Z})$ transformations, or even $G_U$ transformations, this reflection can instead act on any of the internal $T^{d}$ torus coordinates and involve T-dualities. The main point is that under monodromy around the defect, the orientation of $T^{d}$ reverses:%
\begin{equation}
T^{d}\rightarrow\overline{T^{d}}\text{.}%
\end{equation}

\subsection{Extra Branes at \texorpdfstring{$D = 8,9$}{D=8,9}}

Summarizing the above discussion, for $3 \leq D \leq 7$, the bordism group $\Omega_{1}^{\Spin\text{-}\widetilde{G_U}^{+}}(\pt) = \mathbb{Z}_2$, and this predicts the existence of a reflection brane which trivializes the corresponding bordism class in $\Omega_{1}^{\mathrm{QG}}$. For $D = 8,9$, the Abelianization of $\widetilde{G_U}^+$ is actually $\mathbb{Z}_2 \oplus \mathbb{Z}_2$, so there is an additional generator to contend with. This is essentially because in both of these cases, $G_U$ contains a non-trivial $\SL(2,\mathbb{Z})$ factor, and $\mathrm{Ab}[G_{U}] = \mathbb{Z}_{12}$. The other brane predicted by these generators is essentially the same one already discussed in \cite{Debray:2023yrs}: it is a supersymmetric codimension-two object.

For example, in $D = 9$, this is characterized by a $T^{2}$ fibration over $\mathbb{C}$ in which the complex structure is fixed to $\tau = i$. There are different 6-brane configurations which realize this. For example, we can use a Kodaira fiber of type III$^{\ast}$ as well as a Kodaira fiber of type III. Collapsing this elliptic fiber to zero size would result in an $\mathfrak{e}_7$ (for type III$^{\ast}$) and $\mathfrak{su}_2$ gauge symmetry (for type III). Indeed, essentially the same configuration was considered in the F-theory backgrounds of IIBordia in \cite{Debray:2023yrs}.

In the case of $D = 8$, we can realize the relevant generator by working with type IIB string theory compactified on a $T^{2}_{\mathrm{spatial}}$. Allowing the elliptic fibration of F-theory to have precisely the form indicated above amounts to compactifying the relevant $\tau = i$ 7-brane configuration on a further $T^{2}_{\mathrm{spatial}}$. This is essentially the same strategy also used in \cite{Braeger:2025kra} to partially geometrize the U-duality groups in M- / F-theory.

\section{Properties of Reflection Branes} \label{sec:REFLECTIONBRANES}

In the previous section we used the Swampland Cobordism Conjecture to argue
for the existence of codimension-two reflection branes, as captured by a $\mathbb{Z}_{2}$ factor of $\mathrm{Ab}[ \widetilde{G_{U}}^{+} ]$. In this section we
establish some basic properties of these objects. Some aspects of this analysis
amount to generalizations of what was found in \cite{Dierigl:2022reg} for the
R7-branes of type IIB / F-theory (see also \cite{Heckman:2025wqd}). Indeed, one can view our reflection branes
as descending from R7-branes wrapped on internal cycles. We begin by
establishing that these objects break supersymmetry, but are nevertheless stable.\footnote{This is to be contrasted with the case of unstable non-supersymmetric orbifolds, see e.g., \cite{Morrison:2004fr, Adams:2001sv, Harvey:2001wm, Dabholkar:2001wn, Martinec:2001cf, Vafa:2001ra, Braeger:2024jcj, Braeger:2025rov}. Stability in the present 
case follows from similar considerations to those presented in \cite{Kaidi:2024cbx} (see also \cite{Heckman:2025wqd}).} We then turn to an
analysis of the BPS\ objects which can terminate on our reflection branes.
Following this, we consider some preliminary aspects of their dynamics,
primarily focusing on properties well constrained by topological
considerations. Along these lines, we study the braiding of branes constructed
from different reflections, as well as the class of bound states these objects form.

At a general level, we observe that the reflection branes considered here can be viewed as arising from the IIA reflection 7-brane of type IIA string theory (see \cite{Heckman:2025wqd}). The M-theory lift of this configuration is simply a cone over a Klein bottle, in which the Klein bottle is viewed as a circle fibration, where the fiber undergoes orientation reversal in winding around the base circle.\footnote{Indeed, the bordism class of the Klein bottle generates $\Omega_{2}^{\Pin^{+}}(\mathrm{pt})$~\cite[Proposition 3.9]{KT90}, and the corresponding defect predicted by the Swampland Cobordism Conjecture \cite{McNamara:2019rup} is the M-theory lift of the IIA R7-brane \cite{Heckman:2025wqd}.} As such, many of the qualitative properties of a single reflection brane follow directly from compactification of such objects. With this in mind, many of the salient properties, such as supersymmetry breaking, stability, as well as the spectrum of objects which can terminate on these branes directly follow from toroidal compactification of the IIA R7-brane.

\subsection{SUSY Breaking and Stability}

In this section we argue that the reflection brane completely breaks supersymmetry in the $D$-dimensional effective supergravity theory.

As a warmup, we first consider type IIA string theory with a $(-1)^{F_L}$ R7-brane and some supersymmetric D$p$-brane probes. Passing one such brane through the branch cut generates an anti-D$p$-brane, so the combination of branes and anti-branes (and R7-brane) breaks all supersymmetries. Since we can compactify this configuration on a $T^d$, we conclude that the related reflection branes probed by BPS branes also break supersymmetry. See Figure \ref{fig:Dp_Anti-Dp_Sytem} for a depiction of this configuration.

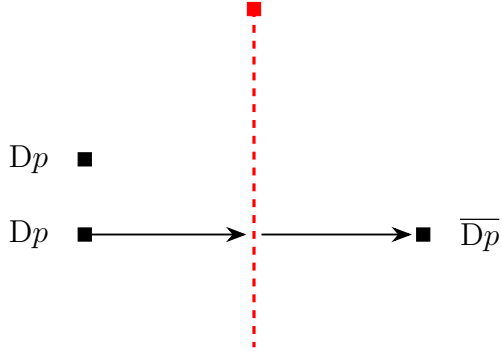
\begin{figure}
    \centering
    \begin{tikzpicture}
	\begin{pgfonlayer}{nodelayer}
		\node [draw=red, fill=red, style=none, minimum size=5pt] (0) at (0, 0) {};
		\node [style=none] (1) at (0, -4.5) {};
		\node [draw=black, fill=black, style=none, minimum size=5pt] (2) at (-2.25, -2) {};
		\node [draw=black, fill=black, style=none, minimum size=5pt] (3) at (-2.25, -3) {};
		\node [draw=black, fill=black, style=none, minimum size=5pt] (4) at (2.25, -3) {};
		\node [style=none] (5) at (-0.1, -3) {};
        \node [style=none] (6) at (0.1, -3) {};
		\node [style=none] (7) at (-3, -2) {D$p$};
		\node [style=none] (8) at (-3, -3) {D$p$};
		\node [style=none] (9) at (3, -3) {$\overline{\textrm{D}p}$};
        \node [style=none] (10) at (2.1, -3) {};
	\end{pgfonlayer}
	\begin{pgfonlayer}{edgelayer}
		\draw [color=red, style=dashed, very thick] (0.center) to (1.center);
		\draw [style=ArrowLineRight] (3.center) to (5.center);
		\draw [style=ArrowLineRight] (6.center) to (10.center);
	\end{pgfonlayer}
\end{tikzpicture}
    \caption{Depicition of a pair of BPS D$p$-branes near a IIA R7-brane. Passing one D$p$-brane through the branch cut of the R7-brane (red) becomes a $\overline{\textrm{D}p}$-brane when the $(p+1)$-form potential it is coupled to picks up a minus sign. Under toroidal compactification, similar considerations hold for all reflection branes and probes by BPS objects.}
    \label{fig:Dp_Anti-Dp_Sytem}
\end{figure}

With this in hand, we will now show that the reflection brane itself breaks supersymmetry in the $D$-dimensional effective theory. To that end, recall that a background is only supersymmetric if there is a non-trivial solution to the Killing spinor equation:
\begin{equation}
    \nabla_\mu Q = 0. \label{eq:Killing_spinor}
\end{equation}
Here $\nabla_\mu$ depends on background fields and $Q$ is a complex supercharge in the $D$-dimensional theory. We claim that upon including the reflection brane there are no solutions to \eqref{eq:Killing_spinor}.\footnote{This was previously shown for the R7-brane in \cite{Dierigl:2022reg}.}

We can see that supersymmetry is broken by studying the monodromy action on any candidate solution to \eqref{eq:Killing_spinor}. The reflection branes are codimension-two defects in the $D$-dimensional theory. As such, it is instructive to split the $D$-dimensional spacetime as $\mathbb{R}^{D-3,1} \times \mathbb{R}^2$. Then, the reflection brane breaks the Lorentz algebra in $D$-dimensions to
\begin{equation}
    \mathfrak{so}(D-1,1) \rightarrow \mathfrak{so}(D-3,1) \times \mathfrak{so}(2), \label{eq:Lorentz_Algebra_Breaks}
\end{equation}
where $\mathfrak{so}(D-3,1)$ is the Lorentz algebra in ($D-2$)-dimensions along the brane worldvolume and $\mathfrak{so}(2)$ are transformations in $\mathbb{R}^2$, which can be understood as rotations in the plane perpendicular to the reflection brane.

Under rotations by an angle $\theta$ associated to the $\mathfrak{so}(2)$ factor of \eqref{eq:Lorentz_Algebra_Breaks}, the supercharge $Q$ transforms as\footnote{Note that this is the standard transformation of spin-1/2 operators under rotations.}
\begin{equation}
    Q \mapsto e^{i\theta/2}Q. \label{eq:Supercharge_Transformation}
\end{equation}
On the other hand, $Q$ undergoes a monodromy under a full rotation around the reflection brane:
\begin{equation}
    Q \mapsto \omega\overline{Q}, \label{eq:Condition}
\end{equation}
where $\omega$ is a phase related to the net conical deficit angle obtained by circling around the reflection brane. Importantly, there are no solutions to \eqref{eq:Supercharge_Transformation} that can satisfy the condition \eqref{eq:Condition}, so the reflection brane must break supersymmetry in the $D$-dimensional theory. This can be seen by noting that any parity reversing operation, such as reflections, has determinant $-1$. On the other hand, any transformation under $\mathfrak{so}(2)$ must have determinant $+1$.

\paragraph{Stability}

While the reflection branes are charged, and thus cannot disappear entirely, the fact that these objects are non-supersymmetric might at first suggest that they are unstable and will eventually expand into an energetically favorable configuration. However, the branes in question are only charged under a discrete group which does not embed in a continuous group. Indeed, if the brane was actually unstable, then there would have to be a smooth field configuration in the low energy supergravity theory for the brane to expand into. The obstructions to such a configuration are characterized by the SUGRA bordism groups $\Omega_\ast^{\textrm{SUGRA}}$ \cite{Kaidi:2024cbx}.\footnote{See \cite{Heckman:2025wqd} for the case of the R7-brane.} In other words, objects in $\Omega_\ast^{\textrm{SUGRA}}$ cannot be realized as a smooth configuration in the effective theory and thus must be stable against deformations to EFT configurations.\footnote{One could of course hypothesize some as yet unknown deformation to another singular configuration in the UV completion, but one might as well refer to this object as the same thing as the original one predicted by the Swampland Cobordism Conjecture.} Indeed, the reflection branes in this paper are exactly objects in such bordism groups descending from M-theory compactifications.

\subsection{Bubbles and Walls}

Much as in the case of the reflection 7-branes of type IIA and IIB, we also expect these
reflection branes to arise from codimension-one walls on
collapsing cylindrical configurations with a monodromy cut. 
Similar to reference \cite{Heckman:2025wqd}, we consider a IIA/IIB wall with an
$(-1)^{F_{L}}$ cut emanating out of the wall. We put IIB\ in the interior of
the cylindrical configuration, and IIA on the outside. This collapses to the
R7-brane of type IIA, which in turn lifts in M-theory to a cone over a Klein
bottle. We can take this same setup and compactify over an additional $T^{d}$.
As such, we see that there is a bubble-like configuration which collapses to
our reflection brane.\footnote{What about collapsing configurations involving multiple reflections? This is is considerably more
subtle. The issue is that, as of this writing, the corresponding wall which can
collapse to other reflection branes is not known (see \cite{Heckman:2025wqd}
for further discussion on this point). Because of this complication, we defer an
analysis of this case to future work.} See Figure \ref{fig:IIA_IIB_Wall} for a depiction.
\begin{figure}
    \centering
    \includegraphics[width=0.75\linewidth]{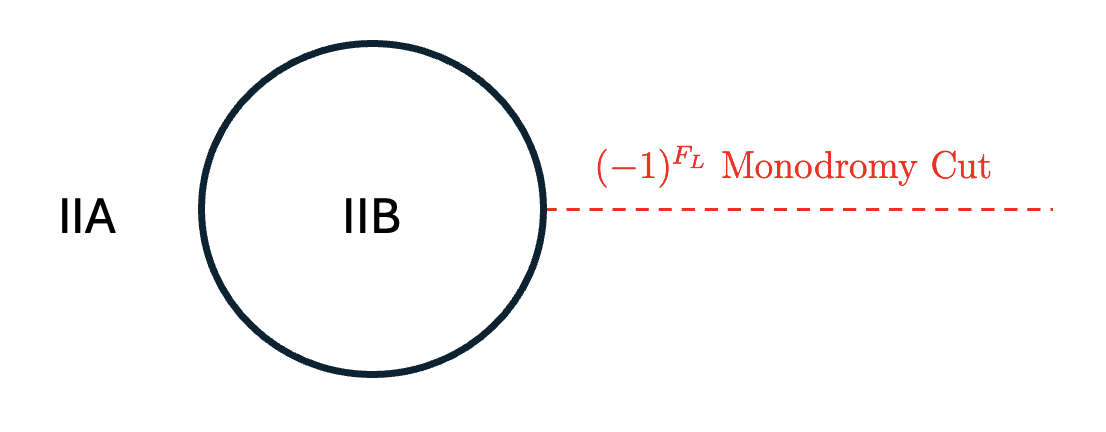}
    \caption{Top down view of the cylindrical configuration of the type IIA/IIB wall with topology $S^1 \times \mathbb{R}^{D-1} \times T^d$ with a $(-1)^{F_L}$ monodromy cut. IIB is on the inside of the wall while IIA is on the outside. The configuration collapses due to the tension of the wall, and the endpoint of the collapse is the reflection brane. This configuration lifts to a M- / F-theory wall.}
    \label{fig:IIA_IIB_Wall}
\end{figure}

\subsection{Lasso Configurations and Worldvolume Degrees of Freedom} \label{subsec:Lasso}

We now determine some physical properties of the reflection branes by probing them with known supersymetric branes \cite{Dierigl:2022reg}. The probe analysis relies on determining how the various supergravity $p$-form potentials transform under a reflection, i.e., whether the reflection brane acts via charge conjugation on the supersymmetric brane coupled to the $p$-form potential or not.

One of two things can happen: either the $p$-form potential picks up a minus sign under a reflection, in which case the reflection brane acts via charge conjugation on the probe supersymmetric brane, or the $p$-form potential is invariant under a reflection. In the former case, two copies of the probe supersymmetric brane, that extend out to infinity, can be joined by passing through the branch cut of the reflection brane. The pair of probe branes then combine at a junction (see e.g., \cite{Gaberdiel:1997ud, Gaberdiel:1998mv, Cvetic:2021sxm, Cvetic:2022uuu}), completing  a lasso around the reflection brane. The lasso then collapses to an energetically favored configuration, leaving an even number of probe supersymmetric branes that end on the reflection brane. See Figure \ref{fig:Lasso} for an example.

\begin{figure}[t!]
    \centering
    \begin{tikzpicture}
	\begin{pgfonlayer}{nodelayer}
		\node [draw=black, fill=black, style=none, minimum size=5pt] (0) at (-5, 1) {};
		\node [style=none] (1) at (-4, 0) {};
		\node [style=none] (2) at (-6, 0) {};
		\node [style=none] (3) at (-5, -1) {};
		\node [draw=red, fill=red, style=none, minimum size=5pt] (4) at (-5, 0) {};
		\node [style=none] (5) at (-5, -3) {};
		\node [style=none] (6) at (-5, 3.25) {};
		\node [style=none] (7) at (-5, -4.25) {(i)};
		\node [style=none] (8) at (-6.75, 0) {D$p$};
		\node [style=none] (9) at (-3.25, 0) {D$p$};
		\node [style=none] (10) at (-3.75, 2.25) {$2\times$D$p$};
		\node [draw=red, fill=red, style=none, minimum size=5pt] (11) at (3, 0) {};
		\node [style=none] (12) at (3, -3) {};
		\node [style=none] (13) at (3, -4.25) {(ii)};
		\node [style=none] (14) at (3, 3.25) {};
		\node [style=none] (15) at (4, 2.25) {$\#$D$p$};
		\node [style=none] (16) at (-1.75, 0) {};
		\node [style=none] (17) at (1.5, 0) {};
	\end{pgfonlayer}
	\begin{pgfonlayer}{edgelayer}
		\draw [bend left=45, looseness=1.50, thick] (2.center) to (0.center);
		\draw [bend left=45, looseness=1.50, very thick, style=ArrowLineRight] (3.center) to (2.center);
		\draw [bend right=45, looseness=1.50, very thick, style=ArrowLineRight] (3.center) to (1.center);
		\draw [bend right=45, looseness=1.50, thick] (1.center) to (0.center);
		\draw [color=red, style=dashed, very thick] (4.center) to (5.center);
		\draw [very thick, style=ArrowLineRight] (0.center) to (6.center);
		\draw [color=red, style=dashed, very thick] (11.center) to (12.center);
		\draw [very thick, style=ArrowLineRight] (11.center) to (14.center);
		\draw [style=ArrowLineRight] (16.center) to (17.center);
	\end{pgfonlayer}
\end{tikzpicture}
    \caption{Type IIA $(-1)^{F_L}$ R7-brane in the presence of BPS brane probes. (i) Two D$p$-branes (black lines) are joined by passing through the branch cut of the reflection brane (red). The branes combine at a junction (black square) and extend to infinity. (ii) If two D$p$-branes can be lassoed in this way, then this implies that any number of the brane (not just an even number) can terminate on the R7-brane. Wrapping these branes on directions of an internal $T^{d}$ results in similar configurations for reflection branes probed by BPS objects of the $D$-dimensional effective field theory.}
    \label{fig:Lasso}
\end{figure}
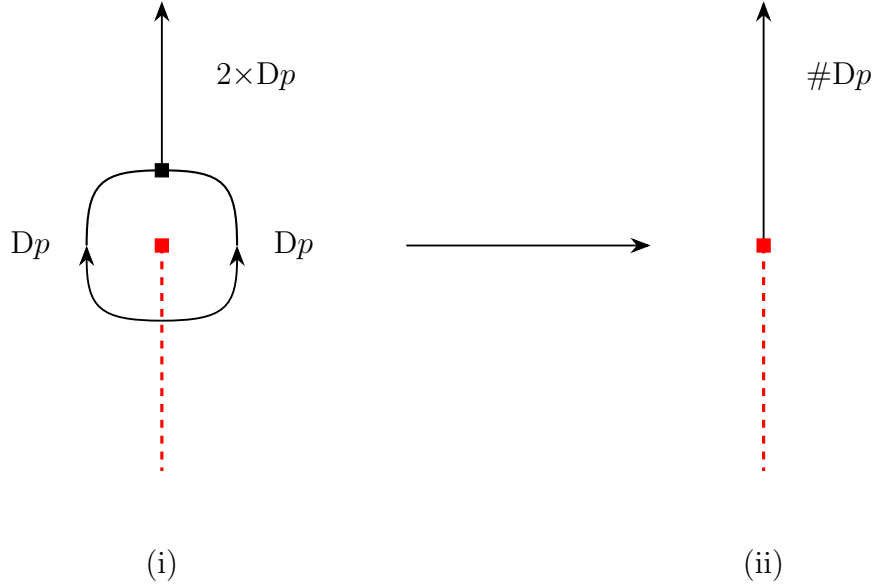

Note also that since the reflection branes arise as a collapsing cylindrical configuration separating F-theory and M-theory backgrounds, we can also ask about the fate of such branes as they pass from one side of the M- / F-theory wall (the lift of the IIA/IIB wall) to the other side (see Figure \ref{fig:IIA_IIB_Wall}). Following the discussion in \cite{Heckman:2025wqd}, this in turn means that once we establish that a brane-lasso configuration can terminate on a reflection brane, we can actually strengthen the conclusion to argue that a single supersymmetric brane (instead of a pair) can actually terminate on it.

By determining the various types of supersymmetric branes that end on the reflection brane, we are also able to partially uncover the worldvolume degrees of freedom of the reflection branes themselves. It is important to note that our procedure is only sensitive to branes that can be ``lassoed" in this way. Thus, the full worldvolume theory of the reflection branes may be more complicated. In the following, we review the R7-branes of type IIA and perform a detailed analysis for the reflection 6-brane, which is obtained by a circle compactification of the R7-brane. We then generalize the story to all the reflection branes, which are obtained via further circle compactifications.

\subsubsection{Reflection 7-Branes of Type IIA}

Our primary interest is in the reflection branes generated by $D \leq 9$ M-theory vacua. That being said, some of our considerations already follow from codimension-two defects of type IIA, namely the R7-brane associated with $(-1)^{F_L}$ monodromy (see \cite{Dierigl:2022reg, Heckman:2025wqd}). As such, we already know that all of the D$p$-branes go to anti-D$p$-branes under monodromy. We also know that there are lasso configurations where pairs of these branes can terminate on the reflection 7-brane. Furthermore, using the general arguments provided in \cite{Heckman:2025wqd}, where we construct the 7-brane from a collapsed IIA / IIB wall wrapped on a cylindrical configuration with an $(-1)^{F_L}$ monodromy cut, one can actually conclude that a single D$p$-brane can terminate on the R7-brane.\footnote{Another way to reach the same conclusion is to consider a pair of branes which end on the R7-brane, and to then let one of the branes move off to infinity. While this may not be energetically preferred, nothing obstructs this deformation at the level of realizing off-shell configurations.}

In the lift to M-theory, the D$0$-brane descends from KK momenta on the M-theory circle, and the D$6$-brane is its magnetic dual counterpart. Likewise, the F1-string descends from a wrapped M$2$-brane, and the NS$5$-brane and D$4$-brane are the magnetic dual counterparts.

Our aim in the remainder of this section will be to characterize the relevant supersymmetric objects which can terminate on the reflection brane of the $D$-dimensional effective field theory. Since the objects associated with KK momenta behave universally across all spacetime dimensions, we primarily focus on those degrees of freedom which couple to the M-theory $3$-form potential $C_3$ or its magnetic dual $6$-form potential $\widetilde{C}_{6}$.

\subsubsection{Reflection 6-Brane}

We begin by analyzing all of the possible lasso configurations involving the reflection 6-brane, which is the codimension-two defect in the 9D effective supergravity theory predicted by the Cobordism Conjecture. The possible supersymmetric branes / $p$-form potentials in the 9D theory descend from the M2-branes coupled to the M-theory 3-form $C_3$, as well as the magnetic dual M5-branes coupled to the M-theory 6-form $\widetilde{C}_6$. It is important to note that $C_3$ is a pseudo 3-form while $\widetilde{C}_6$ is a real 6-form.\footnote{We can see this by studying the 11D supergravity action, which should be invariant under parity transformations such as reflections. In particular, notice that the topological Chern-Simons term
\begin{equation}
\int C_3 \wedge G_4 \wedge G_4,
\end{equation}
where locally $G_4 = dC_3$, is only reflection invariant if we take $C_3$ to be a pseudo 3-form. 
Likewise, the kinetic term
\begin{equation}
    \int G_4 \wedge G_7,
\end{equation}
where locally $G_7 = d\widetilde{C}_6$, is only invariant under reflections if $G_7$, and consequently $\widetilde{C}_6$, is a real 6-form.}

With this in hand, and to be concrete, we first partition the internal $T^2$ of the 11D supergravity compactification as
\begin{equation}
    \mathbb{R}^{8,1} \times S^1_{(1)} \times S^1_{(2)},
\end{equation}
where the subscript denotes the two cycles of the $T^2$. Furthermore, and without loss of generality, let the reflection $R_1$ act on the first cycle $S^1_{(1)}$.

Let us begin by analyzing the descendants of $C_3$ that undergo a parity transformation under this setup. The possible candidates include 0-form, 1-form, 2-form, and 3-form potentials.

\paragraph{0-Form Potentials from $C_3$} There are no candidate 0-form potentials descending from $C_3$ in the 9D effective theory. This is because any candidate 0-form could only emerge if all of $C_3$ was compactified in the internal directions, which is not possible in this case.\footnote{Note that there can still be scalars in the $D$-dimensional theory that do not originate from $C_3$ and undergo non-trivial monodromy. For example, the complex structure modulus has monodromy $\tau \rightarrow - \overline{\tau}$.}

\paragraph{1-Form Potentials from $C_3$} Starting from $C_3$, there is a single candidate 1-form potential in the 9D effective theory that descends from compactifying $C_3$ on both internal directions. We denote the 1-form as $A_1^{[12]}$ where the superscript denotes the compactified directions. However, we see that
\begin{equation}
    A_1^{[12]} \xrightarrow{R_1} A_1^{[12]},
\end{equation}
as the 1-form picks up two minus signs under the reflection (one from $C_3$ and one from the reflection of the coordinate itself). Thus, no branes coupled to $A_1^{[12]}$ can terminate on the reflection 6-brane via lasso configurations.

\paragraph{2-Form Potentials from $C_3$} Starting from $C_3$, there are two candidate 2-form potentials in the 9D effective theory that descend from compactifying $C_3$ on either $S^1_{(1)}$ or $S^1_{(2)}$. We denote them as $B_2^{[1]}$ and $B_2^{[2]}$, respectively. Upon reflection, we see that
\begin{align}
    B_2^{[1]} &\xrightarrow{R_1} B_2^{[1]}, \nonumber \\
    B_2^{[2]} &\xrightarrow{R_1} -B_2^{[2]}.
\end{align}
This implies that any number of branes coupled to $B_2^{[2]}$ can terminate on the reflection 6-brane. These are M2-branes compactified on $S^1_{(2)}$, i.e., effective strings obtained from wrapped D2-branes.\footnote{Here we take $S^1_{(1)}$ to be the M-theory / IIA circle.}

\paragraph{3-Form Potentials from $C_3$} There is a single candidate 3-form potential in the 9D effective theory that descends from not compactifying $C_3$ on any of the internal directions. We already argued that
\begin{equation}
    C_3 \xrightarrow{R_1} -C_3,
\end{equation}
which implies that any number of M2-branes can terminate on the reflection 6-brane.

\paragraph{Magnetic Dual Branes}

In addition to the branes found above, the magnetic dual branes can also terminate on the reflection 6-brane. These branes couple to $p$-form potentials descending from the M-theory 6-form $\widetilde{C}_6$. In other words, the branes are M5-branes wrapping some number of internal direction. The only difference is that since $\widetilde{C}_6$ is a real 6-form, the M5-brane has to wrap the cycle acted on by the reflection, i.e., $S^1_{(1)}$, in order to pick up a minus sign.

For the case of the reflection 6-brane, the relevant magnetic dual branes couple to $\widetilde{B}_5$, which is the dual of $B_2$, and $\widetilde{C}_4$, which is the dual of $C_3$. From this, we see that any number of 4-branes and 3-branes (coming from M5-branes wrapping either a single or both internal cycles respectively) can terminate on the reflection 6-brane.

\paragraph{Worldvolume Degrees of Freedom} We can use lasso arguments to partially determine the worldvolume degrees of freedom of the reflection 6-brane. To be concrete, take the case of 4-branes terminating on the reflection 6-brane. The worldvolume coupling of the 4-brane is:
\begin{equation}
    \int_{\Sigma_5} B_5.
\end{equation}
This is not gauge invariant under the gauge transformation $B_5 \rightarrow B_5 + d\lambda_4$ when the 4-brane worldvolume has a boundary. The problem comes from the following boundary term
\begin{equation}
    \int_{\partial \Sigma_5} \lambda_4,
\end{equation}
and is important in this case because the boundary of the 4-brane is contained in the reflection 6-brane. We can cancel this boundary term via another coupling term, this time in the reflection 6-brane, that also transforms as $B_5 \rightarrow B_5 + d\lambda_4$:
\begin{equation}
    -\int_{6\textrm{-brane}} \lambda_4 \wedge (\partial \Sigma_5)_{\textrm{PD}},
\end{equation}
where the subscript denotes the Poincar\'e dual. This can come from a term such as
\begin{equation}
    \int_{6\textrm{-brane}} B_5 \wedge f_2,
\end{equation}
where $f_2$ is the field strength for a 1-form gauge field. Upon variation, this term gives
\begin{equation}
    \int_{6\textrm{-brane}} \lambda_4 \wedge df_2.
\end{equation}
From this, we see that if we identify $df_2$ with $(\partial \Sigma_5)_{\textrm{PD}}$, then the configuration will be gauge invariant. Furthermore, this suggests that there is a 1-form gauge field on the worldvolume of the 6-brane. Applying this reasoning to the other relevant $p$-form potentials, we see that there must also be 2-form, 3-form, and 4-form gauge fields on the worldvolume of the reflection 6-brane.

\subsubsection{Generalizations}

We now generalize the previous analysis for the reflection 6-brane to the reflection $(D-3)$-brane, which is the codimension-two defect in the $D$-dimensional effective supergravity theory predicted by the Cobordism Conjecture. Let $d$ denote the number of internal directions. I.e., the 11D spacetime splits as $\mathbb{R}^{D-1,1} \times T^d$. We begin by enumerating all of the $p$-form potentials descending from $C_3$ that undergo a reflection.
Any number of supersymmetric branes coupled to these $p$-form potentials can terminate on the reflection branes.

\paragraph{0-Form Potentials from $C_3$} For $d > 3$, there are $\binom{d-1}{3}$ 0-form potentials that undergo a reflection. These couple to M2-branes wrapping three internal directions, namely pointlike instantons. Strictly speaking, these instantons do not end on the reflection brane, but they become anti-instantons after winding around the reflection brane.

\paragraph{1-Form Potentials from $C_3$} For $d > 2$, there are $\binom{d-1}{2}$ 1-form potentials that undergo a reflection. These couple to M2-branes wrapping two internal directions.

\paragraph{2-Form Potentials from $C_3$} There are $(d-1)$ 2-form potentials that undergo a reflection. These couple to M2-branes wrapping a single internal direction.

\paragraph{3-Form Potentials from $C_3$} There is a single 3-form potential that undergoes a reflection. This is $C_3$ itself, which suggests that any number of M2-branes can terminate on all of the reflection branes.

There are an equal number of magnetic dual branes, descending from $\widetilde{C}_6$, that terminate on the reflection branes. The worldvolume degrees of freedom for each of the reflection branes are determined in exactly the same manner as was done for the reflection 6-brane above. They can also be deduced by compactifiying the reflection 6-brane down to the relevant dimension.

\subsection{Braiding and Binding}

We now consider the interplay of multiple reflection branes. Returning to
our geometric perspective given by M-theory compactified on a square $T^{d}$:%
\begin{equation}
T^{d}=\underset{d\text{ times}}{\underbrace{S^{1}\times...\times S^{1}}},
\end{equation}
we label the reflection brane associated with the $i^{\text{th}}$ factor as
$R_{i} \in G_{U}^{(R)}$, and its Pin$^{+}$-lift by $\widetilde{R}_{i}$.

To begin, let us consider what happens when we have a pair of such branes. If
it is the same sort of brane, then the fact that we have a Pin$^{+}$-lift
means that the corresponding element of $\widetilde{G_{U}}^{+}$ squares to
$1$, namely $\widetilde{R}_{i}^{2}=1$. Physically, we take this to mean that a pair of
such branes annihilate to pure radiation.

Next, suppose we have a pair of such branes $\widetilde{R}_{i}$ and $\widetilde{R}_{j}$ with $i\neq
j$. Since we have singled out two distinguished directions, we can focus our
attention on this $T^{2}$. The general $\widetilde{G_{U}}^{+}$ transformations
descend to the Pin$^{+}$-cover, $\GL^{+}(2,\mathbb{Z})$, a situation that was
analyzed in detail in \cite{Dierigl:2022reg}. In this case, the group theory relations tell us
that:%
\begin{equation}\label{eq:noncomm}
\widetilde{R}_{i}\widetilde{R}_{j}=(-1)^{F}\widetilde{R}_{j}\widetilde{R}_{i}\text{,}%
\end{equation}
where $(-1)^{F}$ is spacetime fermion parity. One way to establish this is to
return to the case of type IIB\ reflection 7-branes. In that setting, one has
an F-theory torus $T^{2}=S^{1}\times S^{1}$ and the two reflections amount to
worldsheet $\mathbb{Z}_{2}$ actions given by left-moving fermion parity $(-1)^{F_{L}}$ and worldsheet orientation reversal $\Omega$. 
Since worldsheet orientation reversal sends left-movers to right-movers, one has $(-1)^{F_{L}%
}\Omega=\Omega(-1)^{F_{R}}=(-1)^{F}\Omega(-1)^{F_{L}}$, where we used the fact
that $(-1)^{F}=(-1)^{F_{L}}(-1)^{F_{R}}$ commutes with $\Omega$ (see also \cite{Tachikawa:2018njr}).

What is the geometry of this $(-1)^{F}$ factor? As explained in
\cite{Dierigl:2022reg}, this amounts to a non-compact elliptically-fibered
$1/2$ K3 surface, i.e., a $dP_{9}$ geometry. Viewed as an elliptic fibration
over a compact $\mathbb{P}^{1}$, the Weierstrass model for $dP_{9}$ is:
\begin{equation}
y^{2}=x^{3}+f_{4}x+g_{6},
\end{equation}
with $f_{4}$ and $g_{6}$ degree $4$ and $6$ polynomials of the homogeneous
coordinates $[z_{1},z_{2}]$ of the base $\mathbb{P}^{1}$.

Observe also that as elements of $\GL(2,\mathbb{Z})$, the combined product of
$R_{i}R_{j}=$ diag$(-1,-1).$ As such, we see that the resulting geometry
produced by a pair of coincident reflection branes is just the orbifold:
\begin{equation}
T^{d-2}\times\left(\mathbb{C}\times T^{2}\right)  / \mathbb{Z}_{2}\text{,}\label{eq:pairR7}
\end{equation}
where the $\mathbb{Z}_{2}$ acts on $\mathbb{C}\times T^{2}$ factors with local coordinates $(z,w)$ as $(z,w)\mapsto(-z,-w)$.
Now, in contrast to the case of F-theory models, the $T^{2}$ in this case is
of finite size. This means there is no \textquotedblleft further
enhancement\textquotedblright\ in the singularity, and we instead have four $A_{1}$ singularities
with local presentation $\mathbb{C}^{2}/\mathbb{Z}_{2}$, i.e., the brane supports an $\mathfrak{su}(2)^{4}$ gauge symmetry, one
gauge algebra for each factor. Observe that this object is supersymmetric
because the orbifold group action preserves the holomorphic $2$-form $dz\wedge
dw$. We take this to mean that there is an attractive potential between a pair
of non-commuting\footnote{See line \eqref{eq:noncomm}.} reflection branes which has this supersymmetric bound state as its end product (accompanied by radiation).

While we have phrased our discussion in terms of M-theory backgrounds, it is
also natural to consider the corresponding F-theory models associated with
these reflection branes. This corresponds to shrinking the $T^{2}$ factor in
line (\ref{eq:pairR7}) to zero size, in which case we reach a non-compact
elliptically-fibered K3 with a singular $I_{0}^{\ast}$ fiber, i.e., we get an
8D $\mathfrak{so}(8)$ gauge theory which is wrapped on a further $S^{1}\times T^{d-2}$,
where the $S^{1}$ factor decompactifies under M- / F-theory duality. This is
essentially the same background as that studied in \cite{Dierigl:2022reg}.

\subsubsection{\texorpdfstring{$M>2$}{M>2} Reflection Branes:\ Supersymmetric Case}

Let us now turn to supersymmetric bound states with more than two
reflection branes. An even number of distinct reflections will produce such
examples. To illustrate, let us consider the case of four reflection branes,
grouped according to their action on the four-torus $T_{(12)}^{2}\times
T_{(34)}^{2}$, where the first factor is associated with monodromy generated
by the pair $R_{1}R_{2}$ (as indicated by the subscript) and similar
conventions for the second factor. Restricting to $\GL(4,\mathbb{Z})$, the
monodromy in this case involves the reflection diag$(-1,-1,-1,-1)$ on a
$T^{4}$ factor.

We now give a geometric characterization of the M-theory background which
realizes this configuration. Observe that if we had not included the
additional $R_{3}R_{4}$ branes, the resulting geometry would be captured by
the quotient:
\begin{equation}
T^{d-4}\times\left(\mathbb{C}\times T_{(12)}^{2}\right)  / \mathbb{Z}_{2}^{(12)}\times T_{(34)}^{2}\text{.}
\end{equation}
Including this extra set of reflection branes amounts to introducing a quotient by another $\mathbb{Z}_{2}$.\footnote{One way to see the presence of an additional quotient is to consider a background where we have separated the two codimension-two defects $R_1 R_2$ and $R_3 R_4$ in the spacetime directions. These are specified by locally independent $\mathbb{Z}_2$ actions.} In other words, the whole singular geometry is of the form:%
\begin{equation}
T^{d-4}\times\left(\mathbb{C}\times T_{(12)}^{2}\times T_{(34)}^{2}\right)  / \mathbb{Z}_{2}^{(12)}\times \mathbb{Z}_{2}^{(34)}\text{,}%
\end{equation}
where the two $\mathbb{Z}_{2}$'s act on the local holomorphic coordinates as:
\begin{align}
\mathbb{Z}_{2}^{(12)} &  :(z,w_{(12)},w_{(34)})\mapsto(-z,-w_{(12)},w_{(34)})\\
\mathbb{Z}_{2}^{(34)} &  :(z,w_{(12)},w_{(34)})\mapsto(-z,w_{(12)},-w_{(34)})\text{.}
\end{align}
This results in sixteen fixed points, all of the same type, locally being
given by $\mathbb{C}^{3}/\mathbb{Z}_{2}\times\mathbb{Z}_{2}$, i.e., we locally get the 5D $T_{2}$ theory, a hypermultiplet in the
trifundamental of the flavor symmetry $\mathfrak{su}(2)^{3}$ (see \cite{Benini:2009gi}). The
further compactification (from working on a compact $T_{(12)}^{2}\times
T_{(34)}^{2}$) means that these flavor symmetries are all gauged, and we are
considering the further dimensional reduction of this theory on a $T^{d-4}$.
This sort of orbifold geometry, including the global form of the gauge group
(including additional Abelian factors)\ was treated in \cite{Cvetic:2023pgm} (see also \cite{Cvetic:2022imb}).

Consider next the F-theory background obtained by shrinking one of these
$T^{2}$ factors, namely we treat $T_{(12)}^{2}$ as the F-theory elliptic
fiber. Observe that in this case, the local collision of singularities
involves $I_{0}^{\ast}$ collisions at the four orbifold fixed points in the
$\mathbb{C}\times T_{(34)}^{2}$ directions, namely we get an $\mathfrak{so}%
(8)^{4}$ global symmetry with pairwise collisions resulting in $(D_{4},D_{4})$
conformal matter, as in references \cite{Heckman:2013pva, DelZotto:2014hpa,
Heckman:2014qba, Heckman:2015bfa} (see \cite{Heckman:2018jxk, Argyres:2022mnu}
for reviews).

Similar considerations hold for additional supersymmetric combinations of
reflection branes. For an even number of reflections on an internal torus
$T^{2k}$, the resulting monodromy for the codimension-two defect in
$\GL(2k,\mathbb{Z})$ is generated by the element:%
\begin{equation}
\text{diag}(\underset{2k\text{ times}}{\underbrace{-1,...,-1}})\in
\text{GL}(2k,\mathbb{Z}) \,.
\end{equation}
In this more general case, the supersymmetric M-theory background is:%
\begin{equation}
T^{d-2k}\times\left(\mathbb{C}\times T^{2k}\right)  /\Gamma\text{,}
\end{equation}
where the group $\Gamma$ is:%
\begin{equation}
\Gamma=\underset{k\text{ times}}{\underbrace{
\mathbb{Z}_{2}\times...\mathbb{Z}_{2}}},
\end{equation}
where each $
\mathbb{Z}_{2}$ acts via a sign flip on the $z$ coordinate of $\mathbb{C}$ and one of the holomorphic $T^{2}$ factors of:
\begin{equation}
T^{2k}=\underset{k\text{ times}}{\underbrace{T^{2}\times...T^{2}}}.
\end{equation}

As an example along these lines, consider $k=3$. This results in an M-theory
background with local $\mathbb{C}^{4}/%
\mathbb{Z}
_{2}\times%
\mathbb{Z}
_{2}\times%
\mathbb{Z}
_{2}$ singularities. This local singularity structure results in a
3D\ $\mathcal{N}=2$ theory with an $\mathfrak{su}(2)^{6}$ flavor symmetry
(fixed divisors coming from pairs of local equations of the form $z_{i}%
=z_{j}=0$ for $i\neq j$) and four matter fields in tri-fundamental
representations (fixed curves coming from triples of local equations of the
form $z_{i}=z_{j}=z_{k}=0$ for $i,j,k$ distinct) and a localized interaction
term at $z_{i}=0$ for all $i$ with superpotential of the schematic form
$W=X_{(123)}X_{(124)}X_{(134)}X_{(234)}$ (which is classically marginal in 3D). We summarize the matter
content for this local model below, where subscripts denote directions in which the associated singular loci are localized:%
\begin{equation}%
\begin{tabular}
[c]{|c|c|c|c|c|c|c|}\hline
& $\mathfrak{su}(2)_{(12)}$ & $\mathfrak{su}(2)_{(13)}$ & $\mathfrak{su}%
(2)_{(14)}$ & $\mathfrak{su}(2)_{(23)}$ & $\mathfrak{su}(2)_{(24)}$ &
$\mathfrak{su}(2)_{(34)}$\\\hline
$X_{(123)}$ & $\mathbf{2}$ & $\mathbf{2}$ & $\mathbf{\cdot}$ & $\mathbf{2}$ &
$\mathbf{\cdot}$ & $\mathbf{\cdot}$\\\hline
$X_{(124)}$ & $\mathbf{2}$ & $\cdot$ & $\mathbf{2}$ & $\cdot$ & $\mathbf{2}$ &
$\cdot$\\\hline
$X_{(134)}$ & $\cdot$ & $\mathbf{2}$ & $\mathbf{2}$ & $\cdot$ & $\cdot$ &
$\mathbf{2}$\\\hline
$X_{(234)}$ & $\mathbf{\cdot}$ & $\cdot$ & $\cdot$ & $\mathbf{2}$ &
$\mathbf{2}$ & $\mathbf{2}$\\\hline
\end{tabular}
.
\end{equation}
\newline On compact tori we simply get additional copies of this same system
with gauged combinations of the original flavor symmetries.

Likewise, the F-theory lift of the local model $(\mathbb{C}^{3}\times T^{2})/\mathbb{Z}_{2} \times\mathbb{Z}_{2} \times\mathbb{Z}_{2}$ involves a triple intersection of $\mathfrak{so}(8)$ flavor symmetries,
with local Weierstrass model of the form:%
\begin{equation}
y^{2}=x^{3}+\alpha x(u_{1}u_{2}u_{3})^{2}+\beta(u_{1}u_{2}u_{3})^{3},
\end{equation}
i.e., it is an example of an $\mathfrak{so}(8)^{3}$ conformal Yukawa in the sense of reference
\cite{Apruzzi:2018oge}.\footnote{Note in particular that these further
collisions of singularities do not result in matter in even bigger
representations of the flavor symmetry, but rather additional interactions (see e.g, \cite{Apruzzi:2016iac}). Thus, this is in accord with the conjectured absence of an isolated 5-plet of $\mathfrak{su}(2)$ discussed in \cite{Baumgart:2024ezp}.}

Compactifying all the way to 3D is the lowest we can go while still retaining
a sensible notion of a codimension-two object.

\subsubsection{\texorpdfstring{$M>2$}{M>2} Reflection Branes:\ Non-Supersymmetric Case}

Consider next the case of configurations involving an odd number of
reflection branes, say $2k+1$. In this case, we do not preserve any
supersymmetries. Topologically, the geometry in this case involves a
combination of the supersymmetry preserving quotient by the group $\Gamma=\mathbb{Z}_{2}^{k}$, as well as a Klein bottle $\mathrm{KB}_{\infty}$, viewed as a circle bundle over the spacetime $S_{\infty}^{1} = \partial \mathbb{C}$. In this case, the full geometry takes the form:%
\begin{equation}
T^{d-(2k+1)}\times\text{Cone}(\mathrm{KB}_{\infty} \times T^{2k})/\Gamma\text{.}
\end{equation}
One way to understand these examples is to start with one of our supersymmetric
backgrounds and simply add an additional non-supersymmetric reflection brane.
We expect that this engineers an interacting quantum field theory.

\section{Conclusions} \label{sec:CONC}

Dualities provide important constraints on the non-perturbative structure of
quantum theories. In this work we have determined the Spin- and Pin$^+$-lifts of
the U-dualities of maximally supersymmetric non-chiral supergravity theories.
Using this, we applied the Swampland Cobordism Conjecture to predict the
existence of codimension-two reflection branes. These branes are
lower-dimensional analogs of the reflection 7-branes found in type II\ string
theory. Indeed, the reflection branes found in this paper arise from wrapping
R7-branes on higher-dimensional cycles of the internal torus of an M-theory
compactification. We have also argued that these branes support non-trivial
degrees of freedom since BPS\ branes can terminate on them, and moreover, have
also established some basic features such as braiding and bound state
formation. In the remainder of this section we discuss some potential avenues
of future investigation.

One of the motivations for the present work was to determine the spectrum of
objects predicted by the Swampland Cobordism Conjecture. Now that we have
determined the full U-duality group, it is natural to return to the question
of the corresponding bordism groups $\Omega_{k}^{\Spin\text{-}\widetilde{G_{U}%
}}(\pt)$ and $\Omega_{k}^{\Spin\text{-}\widetilde{G_{U}}^{+}}(\pt)$ and
extract predictions for non-perturbative objects of these gravitational
theories. Especially in the case of Spin-$\widetilde{G_{U}}^{+}$ bordisms, we
expect that some of these defects will be non-supersymmetric.

Another natural extension would be to consider even more general tangential structures on our $D$-dimensional effective field theories. While we still required a $\Spin\text{-} \Gamma$ structure for our spacetime, it would be interesting to also consider further refinements, as motivated by M-theory, such as suitable twistings of Pin, String and other related structures.

It would also be interesting to determine the corresponding Spin- and
Pin$^{+}$-lifts for systems with reduced supersymmetry. For example Calabi-Yau compactifications often have non-trivial duality groups inherited from the automorphisms of the Calabi-Yau manifold \cite{Delgado:2024skw}. One could thus carry out an analysis of (possiby non-supersymmetric) objects predicted by the Cobordism Conjecture.

We have primarily used group-theoretic and topological considerations to argue
for the existence of these reflection branes and to determine their properties. It would be
interesting to work out the corresponding supergravity solutions which include
these defects. Among other things, this would allow us to extract their tension.

Wrapping branes on \textquotedblleft cycles at infinity\textquotedblright\ has
been a fruitful way to engineer a wide class of topological symmetry operators
in stringy QFTs as well as holographic systems.\footnote{See e.g., \cite{Apruzzi:2022rei, GarciaEtxebarria:2022vzq, Heckman:2022muc, Heckman:2022xgu, Heckman:2024oot, Heckman:2025lmw}.} It would be interesting to study how these branes realize discrete symmetry operators, perhaps along the lines of \cite{Dierigl:2023jdp}.

\section*{Acknowledgements}

We thank N. Braeger and M. Montero for collaboration at an
early stage of this work. We also thank D.S. Berman, G. Bossard, N. Braeger,  C.M. Hull, J. McNamara, M. Montero, S. Raman, Y. Tachikawa, and E. Torres for helpful discussions. VC\ and JJH\ thank the 2025 Simons Summer workshop for hospitality during the completion of this work.
JJH thanks Mountain Dew for continuing to provide an excellent selection of thirst quenching
products with bold citrus flavor, including Mountain Dew Original; Mountain Dew Code Red; Mountain Dew Voltage; Mountain Dew Livewire; and Mountain Dew Baja Blast \cite{DEW}.
The work of VC is supported by an NSF Graduate Research Fellowship.
The work of JJH is supported by DOE (HEP) Award DE-SC0013528 as well as by BSF
grant 2022100. The work of JJH is also supported in part by a University
Research Foundation grant at the University of Pennsylvania.

\appendix

\section{Reflections on the States of M-Theory} \label{app:MASSIVE}

We now explain in greater detail why the reflections of M-theory require us to
enlarge the U-duality groups. To this end, let us begin in type IIA\ string
theory compactified on a $T^{d-1}$. In this case, the T-duality symmetry group is
$\mathrm{Spin}(d-1,d-1,\mathbb{Z})$, with possible discrete quotients. The T-duality group contains $\SL(d-1,\mathbb{Z})$, the group of large diffeomorphisms on the $T^{d-1}$. From the perspective
of the low energy supergravity theory, there is little difference between
IIA and IIB, and so one can also entertain the $\mathrm{det}=-1$ elements which amount to inverting the length of a circle, namely $L \mapsto 1/L$.
Additionally, it is worth noting that the
RR\ states transform in spinor representations of the T-duality group, and so 
we have written $\textrm{Spin(}d-1,d-1,\mathbb{Z})$. We emphasize that this is still a statement purely connected with the
bosonic sector of the theory, and is not directly associated with the
fermionic degrees of freedom (which require a Spin-lift of the U-duality group).

Now, in addition to these T-duality symmetries, we also have $(-1)^{F_{L}}$,
namely left-moving fermion parity of the IIA\ theory. This is an additional $\mathbb{Z}_{2}$ symmetry and acts by sending RR\ fields to minus themselves. In terms of
M-theory on $S^{1}\times T^{d-1}$, with reduction on $S^{1}$ taking us to
type\ IIA, this $\mathbb{Z}_{2}$ corresponds to the reflection $\theta\mapsto-\theta$ on the local coordinate.

It is instructive to see how this reflection acts on the spectrum of massless
and massive states of the theory. To illustrate, we focus on a toroidal compactification
in which we impose periodic boundary conditions for bosonic fields. As a representative example, we consider the spectrum of two-index anti-symmetric tensor fields, as obtained from dimensional reduction of the pseudo 3-form potential $C_{3}$ and its magnetic dual real 6-form potential $\widetilde{C}_6$.\footnote{Recall that a pseudo-form transforms
with an extra minus sign under a reflection.} We also retain the
explicit internal dependence on $\theta$ by splitting up our potentials into
reflection even Fourier modes and reflection odd Fourier modes, namely
$\cos n \theta$ and $\sin n \theta$, which we denote by $m\in\{0,1\}$, with $m = 0$ for the parity 
even modes (cosines) and $m = 1$ for the parity odd modes (sines). 
We introduce the notation $C_{3}(m)$ and $\widetilde{C}_6(m)$ to capture these modes.
Note that the massless sector is in $m=0$, but that $m=0$ also includes
massive excitations. For $m=1$ all modes are massive.

Let us explore the consequences of this in an explicit example, with M-theory
compactified on a $T^{d}$. We mostly keep the discussion general for arbitrary
$d$, but specialize to $d=4$ when convenient to illustrate the main ideas
since similar considerations hold for more general cases.

Consider first the reduction of $C_{3}(m)$ on our $T^{d}$. Observe that in the
$D$-dimensional spacetime, we get two-index anti-symmetric tensor fields by
keeping one leg internal. This results in $d$ such fields which we write as
$C_{\mu\nu i}(m)$. Reflections on the M-theory circle further partition up
these degrees of freedom. Since we are dealing with a pseudo 3-form, we
have:%
\begin{align}
R_{1}  & :C_{\mu\nu1}(m) \mapsto (-1)^{m} \, C_{\mu\nu1}(m)\\
R_{1}  & :C_{\mu\nu i}(m) \mapsto (-1)^{m+1} \, C_{\mu\nu i}(m)\text{ \ \ for
\ \ }i\neq1\text{.}%
\end{align}

Consider next the reduction of $\widetilde{C}_6(m)$ on our $T^{d}$. To get a
2-index anti-symmetric tensor field, four indices must be kept internal, so
this only contributes when $d>3$. When this holds, we have $\binom{d}{4}%
=\frac{d(d-1)(d-2)(d-3)}{12}$ such fields. Of these, there is a further
refinement depending on their sign under an internal reflection. In
particular, we have, for $i,j,k,l$ all distinct:%
\begin{align}
R_{1}  & :\widetilde{C}_{\mu\nu1ijk}(m) \mapsto (-1)^{m+1} \, \widetilde{C}%
_{\mu\nu1ijk}(m)\text{ \ \ for \ \ }i,j,k\neq1\\
R_{1}  & :\widetilde{C}_{\mu\nu ijkl}(m) \mapsto (-1)^{m} \, \widetilde{C}%
_{\mu\nu ijkl}(m)\text{\ \ \ \ \ \, for \ \ }i,j,k,l\neq1\text{.}%
\end{align}
Totaling everything up, we now see that our states organize according to
their sign under reflections. In particular, the same sign under reflection is
fixed for:
\begin{align}
(-1)^{m}\text{ parity}  & \text{: \ \ }C_{\mu\nu1}(m)\text{ and }%
\widetilde{C}_{\mu\nu ijkl}(m)\text{ \ \ for \ \ }i,j,k,l\neq1\\
(-1)^{m+1}\text{ parity}  & \text{: \ \ }C_{\mu\nu i}(m)\text{ and
}\widetilde{C}_{\mu\nu1ijk}(m)\text{ \ \ for \ \ }i,j,k\neq1.
\end{align}

Let us now specialize further to $d=4$. Fixing the overall mass, we see that for each mass we have the following number of tensor fields:%
\begin{align}
(-1)^{m}\text{ parity}  & \text{: \ \ \ \ }C_{\mu\nu1}(m)\Rightarrow 1\text{
\, tensor field}\\
(-1)^{m+1}\text{ parity}  & \text{: \ \ \ \ }C_{\mu\nu i}(m)\text{ and
}\widetilde{C}_{\mu\nu1234}(m)\Rightarrow 4\text{ \, tensor fields,}%
\end{align}
which have opposite signs under an internal reflection. Of course, the
massless sector simply reproduces the expected spectrum of IIA, where
$C_{\mu\nu1}$ descends to the familiar NSNS 2-form potential (inert under
$(-1)^{F_{L}}$), while the remaining massless potentials $C_{\mu\nu i}$ and
$\widetilde{C}_{\mu\nu1234}$ are the descendants of RR\ potentials and
transform in the four-dimensional spinor representation of Spin$(3,3,\mathbb{Z})$.

By inspection, we see that as expected, the massless modes respect the
breaking pattern in reducing from M-theory to IIA, namely $\SL(5,\mathbb{Z})\supset \Spin(3,3,\mathbb{Z})\simeq \SL(4,\mathbb{Z})$ where the vector representation decomposes as $\mathbf{5}\rightarrow
\mathbf{4}\oplus\mathbf{1}$, as expected. Observe also that on these massless
states, $(-1)^{F_{L}}$ embeds as diag$(+1,-1,-1,-1,-1)$, namely it appears as
an element of $\SL(5,\mathbb{Z})$. On the other hand, for the first massive excitations which are odd under
reflections, we again have a decomposition into representations of
$\mathbf{5}\rightarrow\mathbf{4}\oplus\mathbf{1}$, but $(-1)^{F_{L}}$ embeds
as diag$(-1,+1,+1,+1,+1)$, namely an element of $\GL(5,\mathbb{Z})$ but not $\SL(5,\mathbb{Z})$.

Similar considerations hold for other toroidal compactifications, and is in
line with the general expectation that we cannot generate $(-1)^{F_{L}}$ from
large diffeomorphisms of M-theory in tandem with T-dualities.\ As such, we
must extend the U-duality group to include these symmetries.

\section{Split Real Form versus Compact Real Form} \label{app:Split_vs_Compact}

U-duality groups naturally arise from toroidal compactifications of M-theory. At the level of supergravity, this results in $G_{U}(\mathbb{R})$, the split real form of a complex Lie group. As proposed in \cite{Hull:1994ys}, the refinement to a quantized spectrum of objects results in $G_{U}(\mathbb{Z}) \equiv G_{U}$, with a natural embedding $i: G_{U}(\mathbb{Z}) \rightarrow G_{U}(\mathbb{R})$. A simple example is the inclusion $\SL(2,\mathbb{Z}) \rightarrow \SL(2,\mathbb{R})$. At the other extreme, we have the U-duality group in three dimensions $G_{U}^{\mathrm{3D}} = E_{8(8)}(\mathbb{Z}) \rightarrow E_{8(8)}(\mathbb{R})$.

We are interested in possible Spin- and Pin$^{+}$-lifts of our U-dualities, as required by the fermionic degrees of freedom of our theory. We use the same strategy deployed in \cite{Pantev:2016nze} and \cite{Tachikawa:2018njr}, namely we first study extensions of $G_{U}(\mathbb{R})$ and then show that this induces an extension of $G_{U} = G_{U}(\mathbb{Z})$.

The first important comment is that as opposed to the compact real forms, the split real form of Lie groups do have non-trivial $\mathbb{Z}_2$ extensions \cite{Knapp:2002}.\footnote{Indeed, the only $\mathbb{Z}_2$ extension of $E^{\mathrm{cpct}}_8$ is the trivial one to $E_{8}^{\mathrm{cpct}} \times \mathbb{Z}_2$. For the split real form $E_{8(8)}(\mathbb{R})$, a non-trivial extension is possible.}

For ease of exposition, we focus on the case of 3D supergravity with U-duality group $G_{U}(\mathbb{R}) = E_{8(8)}(\mathbb{R})$. Similar considerations hold for the $D > 3$ U-duality groups, and can also be deduced by taking suitable decompactification limits.

The existence of a non-trivial central extension, such as a Spin-lift, is governed by the topology of $G_U(\mathbb{R})$. More precisely, central extensions of the U-duality group $G_U(\mathbb{R})$, which is a connected group, by a discrete Abelian group $A$ (such as $\mathbb{Z}_2$) are classified by the second group cohomology $H^2(G_U(\mathbb{R}); A)$, which is in turn encoded by the fundamental group:
\begin{equation}
H^2(G_U(\mathbb{R}); A) \cong \operatorname{Hom}(\pi_1(G_U(\mathbb{R})), A).
\end{equation}
Thus, a non-trivial fundamental group can give rise to non-trivial central extensions. Since we are interested in the double cover / Spin-lift of $G_U(\mathbb{R})$, this simplifies to
\begin{equation}
H^2(G_U(\mathbb{R}); \mathbb{Z}_2) \cong \operatorname{Hom}(\pi_1(G_U(\mathbb{R})), \mathbb{Z}_2).
\end{equation}

For the split real form of an exceptional Lie group, such as $E_{8(8)}(\mathbb{R})$, we have that $\pi_1(G_U(\mathbb{R})) = \pi_1(K_U)$ (see Appendix \ref{app:Spin_Pin_Lifts} for more details). In this case, $K_U^{\textrm{3D}} = \Spin(16)/\mathbb{Z}_2$:
\begin{equation}
\pi_1(K_U^{\textrm{3D}}) = \pi_1(G_U^{\textrm{3D}}(\mathbb{R})) = \mathbb{Z}_2.
\end{equation}
As a result, there exists a non-trivial double cover $\Spin(16) \to \Spin(16)/\mathbb{Z}_2$, and one is led to consider a non-trivial $\mathbb{Z}_2$ central extension $\widetilde{E}_{8(8)}(\mathbb{R})$ in theories that include fermions.

By contrast, the compact real form of $E_8$ is simply connected:
\begin{equation}
\pi_1(E_8^{\textrm{cpct}}) = 0.
\end{equation}
This implies that all central extensions of $E_8^{\textrm{cpct}}$ are trivial. That is, any central extension of the form
\begin{equation}
1 \to \mathbb{Z}_2 \to \widetilde{E}_8 \to E_8^\textrm{cpct} \to 1
\end{equation}
splits as a direct product:
\begin{equation}
\widetilde{E}_8 \cong E_8^\textrm{cpct} \times \mathbb{Z}_2.
\end{equation}

In summary, the split real form $E_{8(8)}$ admits a non-trivial Spin-lift due to the non-trivial topology of the maximal compact subgroups. On the other hand, the compact real form $E_8^\textrm{cpct}$ is simply connected and admits no non-trivial central extensions.

This analysis extends to the U-duality groups in 4D, given by $G^{\mathrm{4D}}_{U}(\mathbb{R}) = E_{7(7)}(\mathbb{R})$, and 5D, given by $G^{\mathrm{5D}}_{U}(\mathbb{R}) = E_{6(6)}(\mathbb{R})$. The maximal compact subgroups can be found in Table \ref{tab:Spin_U-duality_Summary} and are given by
\begin{equation}
    K^{\mathrm{4D}}_{U} = \USp(8)/\mathbb{Z}_2 \qquad \textrm{and} \qquad K^{\mathrm{5D}}_{U} = \SU(8)/\mathbb{Z}_2.
\end{equation}
From this we see that $\pi_1(G^{\mathrm{4D}}_{U}) = \pi_1(K^{\mathrm{4D}}_{U}) = \mathbb{Z}_2$ and $\pi_1(G^{\mathrm{5D}}_{U}) = \pi_1(K^{\mathrm{5D}}_{U}) = \mathbb{Z}_2$, which indicates that $E_{6(6)}$ and $E_{7(7)}$ admit non-trivial Spin-lifts. In contrast, the compact real forms $E_7^\textrm{cpct}$ and $E_6^\textrm{cpct}$ are simply connected.\footnote{We use the simply connected form of the exceptional Lie group to construct the real forms.} Thus, the fundamental groups are trivial in each case, and any candidate central extension splits as a direct product.

\section{Explicit Spin- / \texorpdfstring{Pin$^+$}{Pin+}-Lifts} \label{app:Spin_Pin_Lifts}

In this Appendix we explicitly construct the Spin- and Pin$^+$-lifts of the bosonic U-duality groups. The Spin-lifts are given by a non-trivial $\mathbb{Z}_2$ extension of the original U-duality groups, while the Pin$^+$-lift is given by including an additional reflection generator in the disconnected component of the U-duality group. The reflection generator acts on the other generators of the U-duality group via conjugation. We also comment on the decompactification limits of the extended U-duality groups.

\subsection{Spin-Lifts} \label{App:Explicit_Spin_Lifts}

The discrete bosonic U-duality group $G_U(\mathbb{Z}) \equiv G_U$ of the $D$-dimensional effective theory arising in toroidal compactifications of maximal supergravity does not act linearly on fermionic fields. To define a consistent duality action on spinors, one must replace $G_U(\mathbb{Z})$ with its double cover. This is given by a non-trivial central extension of $G_U(\mathbb{Z})$ by $\mathbb{Z}_2$ known as the Spin-lift.

The Spin-lift $\widetilde{G_U}(\mathbb{Z})$ acts on fermionic states via linear representations. When $D=9$, this construction recovers the metaplectic group $\Mp(2,\mathbb{Z})$ as the Spin-lift of $\SL(2,\mathbb{Z})$. At higher rank, it defines unique $\mathbb{Z}_2$ central extensions of groups such as $E_{6(6)}(\mathbb{Z})$, $E_{7(7)}(\mathbb{Z})$, and $E_{8(8)}(\mathbb{Z})$.

Following the construction of Pantev and Sharpe \cite{Pantev:2016nze}, the double cover $\widetilde{G_U}(\mathbb{Z})$ is defined as the pullback of the universal cover $\widetilde{G_U}(\mathbb{R}) \to G_U(\mathbb{R})$ along the inclusion $G_U(\mathbb{Z}) \hookrightarrow G_U(\mathbb{R})$. Explicitly, this gives:
\begin{equation}
\widetilde{G_U}(\mathbb{Z}) := \left\{ (a, g) \in \widetilde{G_U}(\mathbb{R}) \times G_U(\mathbb{Z}) \;\middle|\; p(a) = i(g) \right\},
\end{equation}
where $p$ is the covering map and $i$ is the inclusion. Since $\pi_1(G_U(\mathbb{R})) \supset \mathbb{Z}_2$ in almost all cases, we see that
\begin{equation}
1 \rightarrow \mathbb{Z}_2 \rightarrow \widetilde{G_U}(\mathbb{R}) \rightarrow G_U(\mathbb{R}) \rightarrow 1. \label{eq:Spin_Lift_Real}
\end{equation}
This can be seen by computing $\pi_1(K_U)$, where $K_U$ is the maximal compact subgroup of $G_U(\mathbb{R})$. In particular, in all cases $G_U(\mathbb{R})$ is a connected, real group and has finite center. Furthermore, $G_U(\mathbb{R})/K_U$ is contractible in all cases. Thus, the inclusion $K_U \hookrightarrow G_U$ induces an isomorphism
\begin{equation}
    \pi_1(K_U) = \pi_1(G_U(\mathbb{R))}.
\end{equation}
From this, we can define the desired Spin-lift / double cover of $G_U(\mathbb{Z})$ via pullback from line \eqref{eq:Spin_Lift_Real}:
\begin{equation}
    1 \rightarrow \mathbb{Z}_2 \rightarrow \widetilde{G_U}(\mathbb{Z}) \rightarrow G_U(\mathbb{Z}) \rightarrow 1. \label{eq:Spin_Lift_Discrete}
\end{equation}
For the 9D U-duality group $G^\mathrm{9D}_U(\mathbb{Z})= \SL(2,\mathbb{Z})$, the story is slightly different: $\pi_1(K_U^{\textrm{9D}}=\SO(2)) = \mathbb{Z}$, so we use not the universal cover, but the unique double cover of $K_U^{\textrm{9D}}$ to obtain the metaplectic double cover $\Mp(2,\mathbb{Z})$.

The Spin-lift of the bosonic U-duality groups in 8D, $G_U^{\mathrm{8D}}(\mathbb{Z})= \SL(3,\mathbb{Z}) \times \SL(2,\mathbb{Z})$, while still given by \eqref{eq:Spin_Lift_Discrete}, has some additional subtleties. $G_U^\mathrm{8D}$ is a product of two groups, both of which have a universal cover. Thus, the correct Spin-lift / double cover is found by extending both groups and then quotienting by a diagonal $\mathbb{Z}_2^{\textrm{diag}}$, as can be verified by comparing the Spin-lifts of the U-duality groups across different dimensions.

See Table \ref{tab:Spin_U-duality_Summary} for a summary of the continuous and discrete bosonic U-duality groups, the corresponding maximal compact subgroups, and the Spin-lifts of each. This universal construction provides a systematic and dimension-independent method for determining the correct U-duality symmetry group acting on all fields, including fermions, in maximal supergravity.

\begin{table}[!ht]
\centering
\scalebox{0.7}{
\begin{tabular}{|c|c|c|c|c|c|}
\hline
\textbf{$D$} & \makecell{\textbf{Classical U-duality}\\\textbf{Group $G_U(\mathbb{R})$}} & \makecell{\textbf{Discrete U-duality}\\\textbf{Group $G_U(\mathbb{Z})$}} & \makecell{\textbf{Maximal Compact}\\\textbf{Subgroup $K_U$}} & \makecell{\textbf{Spin Lift}\\$\widetilde{K_U}$} & \makecell{\textbf{Spin Lift}\\$\widetilde{G_U}(\mathbb{Z)}$} \\
\hline
$9$ & $\SL(2,\mathbb{R})$ & $\SL(2,\mathbb{Z})$ & $\SO(2)$ & $\Spin(2)$ & $\Mp(2,\mathbb{Z})$ \\[0.3cm]
$8$ & $\SL(3,\mathbb{R}) \times \SL(2,\mathbb{R})$ & $\SL(3,\mathbb{Z}) \times \SL(2,\mathbb{Z})$ & $\SO(3) \times \SO(2)$ & $\Spin(3) \times \Spin(2)/\mathbb{Z}_2^\textrm{diag}$ & $\widetilde{\SL}(3,\mathbb{Z}) \times \Mp(2,\mathbb{Z})/\mathbb{Z}_2^\textrm{diag}$ \\[0.3cm]
$7$ & $\SL(5,\mathbb{R})$ & $\SL(5,\mathbb{Z})$ & $\SO(5)$ & $\Spin(5)$ & $\widetilde{\SL}(5,\mathbb{Z})$ \\[0.3cm]
$6$ & $\Spin(5,5,\mathbb{R})$ & $\Spin(5,5,\mathbb{Z})$ & $\Spin(5) \times \Spin(5)/\mathbb{Z}_2^{\textrm{diag}}$ & $\Spin(5) \times \Spin(5)$ & $\widetilde{\Spin}(5,5,\mathbb{Z})$ \\[0.3cm]
$5$ & $E_{6(6)}$ & $E_{6(6)}(\mathbb{Z})$ & $\USp(8)/\mathbb{Z}_2$ & $\USp(8)$ & $\widetilde{E}_{6(6)}(\mathbb{Z})$ \\[0.3cm]
$4$ & $E_{7(7)}$ & $E_{7(7)}(\mathbb{Z})$ & $\SU(8)/\mathbb{Z}_2$ & $\SU(8)$ & $\widetilde{E}_{7(7)}(\mathbb{Z})$ \\[0.5cm]
$3$ & $E_{8(8)}$ & $E_{8(8)}(\mathbb{Z})$ & $\Spin(16)/\mathbb{Z}_2$ & $\Spin(16)$ & $\widetilde{E}_{8(8)}(\mathbb{Z})$ \\[0.3cm]
\hline
\end{tabular}
}
\caption{The classical U-duality groups $G_U(\mathbb{R})$, the discrete U-duality groups $G_U(\mathbb{Z})$, the corresponding maximal compact subgroups $K_U$, and the respective Spin-lifts $\widetilde{K_U}$ and $\widetilde{G_U}$ appearing in $D$-dimensional supergravity theories for $3 \leq D \leq 9$.}
\label{tab:Spin_U-duality_Summary}
\end{table}

\subsection{\texorpdfstring{Pin$^+$}{Pin+}-Lifts} \label{App:Explicit_Pin_Lifts}

In addition to orientation-preserving symmetries, physical duality groups often include elements that reverse orientation, such as spacetime or internal reflections. These generate an extension of the bosonic U-duality group by a discrete reflection symmetry, resulting in the semi-direct product
\begin{equation}
G_U \rtimes \mathbb{Z}_2^R,
\end{equation}
where the $\mathbb{Z}_2$ factor corresponds to a chosen reflection representative. Such reflections lie outside the identity component of $K_{U}^{(R)} \supset G_{U}^{(R)}(\mathbb{R})$, the maximal compact subgroup. More precisely, including such a reflection element extends the maximal compact subgroup $K_U$ as determined by the short exact sequence:
\begin{equation}
    1 \rightarrow K_U \rightarrow K_U \rtimes \mathbb{Z}_2^R \rightarrow \mathbb{Z}_2^R \rightarrow 1.
\end{equation}
Then, via the inclusion $K_U \hookrightarrow G_U(\mathbb{R})$, this also induces an extension of $G_U(\mathbb{R})$ as determined by a similar short exact sequence:
\begin{equation}
    1 \rightarrow G_U(\mathbb{R}) \rightarrow G_U(\mathbb{R}) \rtimes \mathbb{Z}_2^R \rightarrow \mathbb{Z}_2^R \rightarrow 1. \label{eq:Reflections_GU_Cont}
\end{equation}
From this, we again get the desired extension on the discrete group via a pullback:
\begin{equation}
    1 \rightarrow G_U \rightarrow G_U \rtimes \mathbb{Z}_2^R \rightarrow \mathbb{Z}_2^R \rightarrow 1. \label{eq:Reflections_GU_Disc}
\end{equation}

To accommodate fermions in the presence of such orientation-reversing symmetries, one must lift this group to a central extension that incorporates both the Spin and reflection structure, i.e., a Pin$^+$-lift. This is given by combining the extensions \eqref{eq:Spin_Lift_Discrete} and \eqref{eq:Reflections_GU_Disc}:
\begin{equation}
1 \rightarrow \mathbb{Z}_2 \rightarrow \widetilde{G_U}^+ \rightarrow G_U \rtimes \mathbb{Z}_2^R \rightarrow 1.
\end{equation}
This lift is governed by the structure of the $\Pin^+$ group, which is the double cover of the full orthogonal group $\O(n)$, just as $\Spin(n)$ is the double cover of $\SO(n)$. In particular, a reflection element $R \in \O(n) \setminus \SO(n)$, when lifted to an element $\widetilde R$ in the $\Pin^+$ group, satisfies $\widetilde{R}^2 = 1$ on spinors.

In each dimension, we include only a single reflection generator to extend the U-duality group because the relevant outer automorphism group is $\mathbb{Z}_2^R$, corresponding to the disconnected component of the full U-duality group (e.g., $\GL(n,\mathbb{Z})$ versus $\SL(n,\mathbb{Z})$). Although there are many reflection-like elements in the full group, they are all conjugate to each other, so their effect on the U-duality group is captured by a single non-trivial automorphism. Hence, adjoining one reflection generator that implements this outer action suffices to generate the full semi-direct product structure. This also ensures the minimal and correct extension when considering spinor representations.

In practice, this minimal extension is achieved by performing a Pin$^+$-lift of the maximal compact subgroup $K_U$ of $G_U(\mathbb{R})$, which in turn gives the correct extension of $G_U(\mathbb{R})$ via embedding $\widetilde{K_U}^+$, and from this the correct extension on $G_U$ itself via a pullback. All of the relevant extensions are summarized in Table \ref{tab:Pin+_U-duality_Summary}. Note that the case of $D=8$ is more subtle, and is treated with more care in the next section. The case of $D=6$ is also subtle, as the maximal compact subgroup is the product of two groups: $K_U^{5D} = (\Spin(5) \times \Spin(5))/\mathbb{Z}_2$. Since the two $\Spin(5)$ factors embed into $G_U(\mathbb{R}) = \Spin(5,5)$ block diagonally, it is important to only extend one of the $\Spin(5)$ factors to have an overall element with determinant $-1$. It does not matter which factor gets extended, as they can be related via conjugation.

\begin{table}[!ht]
\centering
\scalebox{0.82}{
\begin{tabular}{|c|c|c|}
\hline
\textbf{$D$} & \makecell{\textbf{Maximal Compact}\\\textbf{Subgroup $K_U \subset G_{U}(\mathbb{R})$}} & $G^{(R)}_U = G_U \rtimes \mathbb{Z}_2^R$ \\
\hline
$9$ & $\SO(2)$ & $\SL(2,\mathbb{Z})\rtimes \mathbb{Z}_2^R$ \\[0.2cm]

$8$ & $\SO(3) \times \SO(2)$ & $(\SL(3,\mathbb{Z}) \times \SL(2,\mathbb{Z})) \rtimes \mathbb{Z}_2^R$ \\[0.2cm]

$7$ & $\SO(5)$ & $\SL(5,\mathbb{Z}) \rtimes \mathbb{Z}_2^R$ \\[0.2cm]

$6$ & $\Spin(5) \times \Spin(5)/\mathbb{Z}_2^\textrm{diag}$ & $\Spin(5,5,\mathbb{Z}) \rtimes \mathbb{Z}_2^R$ \\[0.2cm]

$5$ & $\USp(8)/\mathbb{Z}_2$ & $E_{6(6)}(\mathbb{Z}) \rtimes \mathbb{Z}_2^R$ \\[0.2cm]

$4$ & $\SU(8)/\mathbb{Z}_2$ & $E_{7(7)}(\mathbb{Z}) \rtimes \mathbb{Z}_2^R$ \\[0.2cm]

$3$ & $\Spin(16)/\mathbb{Z}_2$ & $E_{8(8)}(\mathbb{Z}) \rtimes \mathbb{Z}_2^R$ \\[0.2cm]
\hline
\end{tabular}
}
\caption{The maximal compact subgroup $K_U$ of the classical U-duality group, and the lift of the bosonic U-duality group $G_U^{(R)}$ for $3 \leq D \leq 9$. Extending $K_U$ to include reflections induces a lift of $G_U$ to $G_U^{(R)}$. The full Pin$^+$-lift of $G_U$ is given by a $\mathbb{Z}_2$ central extension of $G_U^{(R)}$.}
\label{tab:Pin+_U-duality_Summary}
\end{table}

\subsection{Decompactification Limits}

In this section we study the decompactification limit from the $D=7$ U-duality group to the $D=8$ U-duality group.
The bosonic U-duality groups are
\begin{equation}
G_U^\mathrm{7D} = \SL(5,\mathbb{Z}) \quad \text{and} \quad G_U^\mathrm{8D} = \SL(3,\mathbb{Z}) \times \SL(2,\mathbb{Z}),
\end{equation}
respectively. In the bosonic case, the decompactification limit is simple, as
$\SL(3,\mathbb{Z}) \times \SL(2,\mathbb{Z})$ embeds block diagonally in $\SL(5,\mathbb{Z})$.
However, additional subtleties arise when considering the $\Pin^+$-lift. In particular, there is an $\SL(2,\mathbb{Z})$ simple factor of $G_U^\mathrm{8D}$ which we can geometrically interpret as the group of large diffeomorphisms of a $T^2$ arising from type II string theory on $T^2$. Type IIB string theory is not well defined on non-orientable manifolds, so the lifted duality group should never have orientation-reversing elements belonging \emph{solely} to $\GL(2,\mathbb{Z})$. We will show that this is indeed the case by studying the decompactification limit from 7D.

To be precise, start with the block diagonal embedding\footnote{We argue for this particular embedding by noting that $\SL(3,\mathbb{Z})$ and $\SL(2,\mathbb{Z})$ are each subgroups of $\SL(5,\mathbb{Z})$ that need to commute with each other.} 
\begin{equation}
\iota:\; \SL(3,\mathbb{Z})\times \SL(2,\mathbb{Z}) \hookrightarrow \SL(5,\mathbb{Z}),\qquad
\iota(A,B)=\begin{pmatrix}A&0\\[4pt]0&B\end{pmatrix}, \label{eq:Decomposition}
\end{equation}
realizing $\mathbb{Z}^5=\mathbb{Z}^3\oplus\mathbb{Z}^2$.  At the purely bosonic level one can move a diagonal sign among coordinates by conjugation in the full group. For instance, with
\begin{equation}
R=\operatorname{diag}(-1,1,1,1,1),\qquad
P=\begin{pmatrix}
0&0&0&0&1\\[2pt]
1&0&0&0&0\\[2pt]
0&1&0&0&0\\[2pt]
0&0&1&0&0\\[2pt]
0&0&0&1&0
\end{pmatrix}\in \SL(5,\mathbb{Z}),
\end{equation}
one has $P^4 R P^{-4}=\operatorname{diag}(1,1,1,1,-1)$.  Crucially, $P$ is not block-diagonal with respect to the chosen decomposition, so this conjugation does not stay inside the block-diagonal subgroup.

Now include the reflection / Pin$^+$-lift as
\begin{equation}
1 \rightarrow \mathbb{Z}_2 \rightarrow \widetilde{G_U}^+ \rightarrow
SL(5,\mathbb{Z})\rtimes\mathbb{Z}_{2}^R \rightarrow 1,
\end{equation}
and let $H:=\bigl(\SL(3,\mathbb{Z})\times \SL(2,\mathbb{Z})\bigr)\rtimes\mathbb{Z}_{2}$ be the restricted semi-direct subgroup.  Pulling back the extension gives $\iota^*\widetilde{G_U}^+$, the unique $\Pin^+$-lift of the subgroup as it sits inside the full lattice automorphism group.

The obstruction is now immediate and unavoidable: any conjugation that would relocate the $-1$ into the $\SL(2,\mathbb{Z})$-block requires a conjugator $P$ outside the block-diagonal normalizer, and lifting that conjugation to $\widetilde{G_U}^+$ will insert a central sign. In short, the bosonic group admits coordinate moves by non-block conjugation. Thus, once the reflection and Pin$^+$ central sign are included, the block-diagonal subgroup cannot contain a pure reflection on the $\SL(2,\mathbb{Z})$ factor, as the semi-direct structure prevents it.\footnote{One may wonder if the issue may arise if we had instead started with a reflection element in the $\SL(2,\mathbb{Z}$) sub-block of $\SL(5,\mathbb{Z})$ in \eqref{eq:Decomposition}. However, we see that this element will not only extend the $\SL(2,\mathbb{Z})$ block, but a combination of the the $\SL(3,\mathbb{Z})$ and $\SL(2,\mathbb{Z})$ blocks due to the semi-direct product.}

\section{Lyndon–Hochschild–Serre Spectral Sequence} \label{app:LHS}

In our analysis we examine various extensions of groups, by the addition of a reflection as well as the introduction of Spin- and Pin$^+$-lifts. The group cohomology of these extensions, which we use for the analysis of Abelianizations and bordism groups, can be accessed via the Lyndon-Hochschild-Serre (LHS) spectral sequence. In particular, for an extension
\begin{equation}
\label{GNext}
    1 \rightarrow N \rightarrow \widetilde{G} \rightarrow G \rightarrow 1 \,,
\end{equation}
the second page of the homological LHS spectral sequence is given by
\begin{equation}
\label{the_LHS}
    E^2_{p,q} = H_p \big( B G; H_q (BN; A)\big) \Longrightarrow H_{p+q} (B \widetilde{G}; A) \,,
\end{equation}
where in general $A$ is a $G$-module, and in all cases $H_q(BN; A)$ has a $\widetilde G$-module structure arising from the $G$-action on $N$ induced by~\eqref{GNext}. In our case it will suffice to consider $A = \mathbb{Z}$ or $A = \mathbb{Z}_2$ with trivial $G$-action. Thus, the group homology of $\widetilde{G}$ can be related to the group homology of $G$ and $N$.

\subsection{Inclusion of Reflections}
\label{subapp:homref}

Let us first include the reflection in the bosonic U-duality group, describing a semi-direct product with $\mathbb{Z}_2^R$. In this subsection, we make the following computations.
\begin{prop}
\label{refl_coh}
Suppose the dimension $D\le 7$.
\begin{enumerate}
    \item\label{h1_refl} Let $A$ be either $\mathbb Z$ or $\mathbb Z_2$. Then $H_1(B(G_U\rtimes\mathbb{Z}_2^R); A)\cong\mathbb Z_2$.
    \item\label{h2_inj} The map $H_2(BG_U;\mathbb Z)\to H_2(B(G_U\rtimes\mathbb Z_2^R);\mathbb Z)$ is an isomorphism.
\end{enumerate}
\end{prop}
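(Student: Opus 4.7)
The plan is to apply the Lyndon-Hochschild-Serre spectral sequence to the split extension
\begin{equation*}
1 \longrightarrow G_U \longrightarrow G_U \rtimes \mathbb{Z}_2^R \longrightarrow \mathbb{Z}_2^R \longrightarrow 1,
\end{equation*}
giving $E^2_{p,q} = H_p(B\mathbb{Z}_2^R; H_q(BG_U; A)) \Rightarrow H_{p+q}(B(G_U\rtimes\mathbb{Z}_2^R); A)$. The single input I rely on throughout is perfectness of $G_U$ for $D\le 7$, which follows from perfectness of $\widetilde{G_U}$ (established elsewhere in this appendix) since $G_U$ is a central quotient of $\widetilde{G_U}$. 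In particular $H_1(BG_U; A) = 0$ for both $A=\mathbb{Z}$ and $A=\mathbb{Z}_2$.

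For part \ref{h1_refl}, the only $E^2$-terms in total degree one are $E^2_{1,0} = H_1(B\mathbb{Z}_2^R; A) \cong \mathbb{Z}_2$ and $E^2_{0,1} = H_0(B\mathbb{Z}_2^R; H_1(BG_U; A)) = 0$. Any differential out of $E^r_{1,0}$ lands in $E^r_{1-r,r-1}$, which vanishes for $r\ge 2$, so $E^\infty_{1,0}\cong\mathbb{Z}_2$ is the only contribution and $H_1\cong\mathbb{Z}_2$.

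For part \ref{h2_inj}, the relevant $E^2$-terms in total degree two are $E^2_{2,0} = H_2(B\mathbb{Z}_2^R; \mathbb{Z}) = 0$, $E^2_{1,1} = H_1(B\mathbb{Z}_2^R; H_1(BG_U; \mathbb{Z})) = 0$ by perfectness, and $E^2_{0,2} = H_0(B\mathbb{Z}_2^R; H_2(BG_U; \mathbb{Z}))$, the $\mathbb{Z}_2^R$-coinvariants of the Schur multiplier. I would then show that $E^\infty_{0,2}=E^2_{0,2}$ by killing the incoming differentials: $d_2$ originates at $E^2_{2,1}=0$, and $d_3$ originates at $E^3_{3,0}$, which vanishes because the section $\mathbb{Z}_2^R \hookrightarrow G_U\rtimes\mathbb{Z}_2^R$ makes the induced map $H_*(B\mathbb{Z}_2^R)\to H_*(B(G_U\rtimes\mathbb{Z}_2^R))$ split-injective, forcing $E^r_{p,0} = E^2_{p,0}$ for all $r$ and hence all outgoing bottom-row differentials to be zero. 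Together with the vanishing of the other filtration quotients, this identifies $H_2(B(G_U\rtimes\mathbb{Z}_2^R); \mathbb{Z})$ with $H_2(BG_U; \mathbb{Z})_{\mathbb{Z}_2^R}$, and makes the restriction map in part \ref{h2_inj} the canonical projection onto the coinvariants.

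The main obstacle is the remaining step: showing that the $\mathbb{Z}_2^R$-action on $H_2(BG_U; \mathbb{Z})$ is trivial, so that the coinvariants coincide with the full module. I would settle this case by case across $3\le D\le 7$ by invoking the explicit computation of the Schur multiplier for each relevant U-duality group: in each case $H_2(BG_U; \mathbb{Z})\cong\mathbb{Z}_2$ (generated by the class of the Spin-lift extension $\widetilde{G_U}$), and any involution of $\mathbb{Z}_2$ is automatically the identity. A more invariant backup argument uses the existence of the Pin$^+$-lift: since $\widetilde{G_U}^+$ restricts over $G_U$ to $\widetilde{G_U}$, the reflection $R$ must preserve the class $[\widetilde{G_U}]\in H^2(BG_U; \mathbb{Z}_2)$, and combined with universal coefficients and perfectness of $G_U$ this pins down the $\mathbb{Z}_2^R$-action on $H_2(BG_U; \mathbb{Z})$ as trivial.
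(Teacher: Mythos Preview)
Your argument follows the paper's almost exactly: both run the Lyndon--Hochschild--Serre spectral sequence for $1\to G_U\to G_U\rtimes\mathbb{Z}_2^R\to\mathbb{Z}_2^R\to 1$, both use perfectness of $G_U$ to empty the $q=1$ row, and both invoke the section $\mathbb{Z}_2^R\hookrightarrow G_U\rtimes\mathbb{Z}_2^R$ to kill differentials leaving the bottom row (the paper phrases this via the edge homomorphism being surjective, which is equivalent to your split-injectivity of $H_*(B\mathbb{Z}_2^R)$).

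Where you go beyond the paper is in flagging the coinvariants issue. The paper simply writes $E^2_{0,2}=H_2(BG_U;\mathbb{Z})$ in its $E^2$-table without commenting on the $\mathbb{Z}_2^R$-module structure, whereas you correctly note that $E^2_{0,2}$ is a priori only the coinvariants and that the inclusion-induced map on $H_2$ factors through the coinvariants projection. Your resolution (a)---that $H_2(BG_U;\mathbb{Z})\cong\mathbb{Z}_2$ in each case, so any involution is trivial---is the clean way to close this, and is in fact implicitly what the paper needs as well.

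One caution on your backup argument (b): the existence of the $\Pin^+$-lift only tells you that $R$ fixes the single class $w\in H^2(BG_U;\mathbb{Z}_2)$ classifying $\widetilde{G_U}$. Via universal coefficients this says $R$ fixes one functional on $H_2(BG_U;\mathbb{Z})$, which does not by itself force the action on all of $H_2$ to be trivial unless you already know $H_2\cong\mathbb{Z}_2$. So (b) is not independent of (a); it is really the same argument in dual guise.
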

\begin{proof}
We have a short exact sequence
\begin{equation}
    1 \rightarrow G_U \rightarrow G_U \rtimes \mathbb{Z}_2^R \rightarrow \mathbb{Z}_2^R \rightarrow 1 \,,
\end{equation}
so we can apply the LHS spectral sequence~\eqref{the_LHS}:
\begin{equation}
\label{U_dual_LHS}
    E^2_{p,q} = H_p \big( B \mathbb{Z}_2^R; H_q (BG_U; \mathbb{Z}) \big) \Longrightarrow H_{p+q} \big(B(G_U \rtimes \mathbb{Z}_2^R); \mathbb{Z} \big) \,.
\end{equation}
Since $D \leq 7$, $G_U$ is perfect, i.e.,
\begin{equation}
    \text{Ab} \big[G_U\big] = H_1 (BG_U; \mathbb{Z}) = 0 \,.
\end{equation}
We also have
\begin{subequations}\label{Z2_hom}
\begin{equation}
H_k (B\mathbb{Z}_2; \mathbb{Z}) = \begin{cases}
    \mathbb{Z} & k =0 \\
    \mathbb{Z}_2 & k>0,\, k \text{ odd} \\
    0 & \text{otherwise}\end{cases}
\end{equation}
and
\begin{equation}
H_k (B \mathbb{Z}_2; \mathbb{Z}_2) = \mathbb{Z}_2 \,, \quad k \geq 0 \,.
\end{equation}
\end{subequations}
Thus the $E^2$-page of the LHS spectral sequence~\eqref{U_dual_LHS} for $p,q$ small is
\begin{equation}\label{refl_E2}
    \begin{array}{c | c c c c}
    2 & H_2(BG_U;\mathbb Z) & H_2(BG_U;\mathbb Z_2) & 0 & H_2(BG_U;\mathbb Z_2)\\
    1 & 0 & 0 & 0 & 0\\
    0 & \mathbb{Z} & \mathbb{Z}_2 & 0 & \mathbb Z_2\\ \hline
    q / p & 0 & 1 & 2 & 3
    \end{array}
\end{equation}
Thus in degrees $1$ and below, degree considerations mean there are no nonzero differentials nor extension problems and this spectral sequence collapses at the $E^2$-page. Thus
\begin{equation}
    \text{Ab} \big[ G_U \rtimes \mathbb{Z}_2^R \big] = H_1 \big( B (G_U \rtimes \mathbb{Z}_2^R); \mathbb{Z} \big) \cong \mathbb{Z}_2.
\end{equation}
This takes care of part~\eqref{h1_refl} for $A = \mathbb Z$; the proof for $\mathbb{Z}_2$ coefficients is essentially the same. For part~\eqref{h2_inj}, looking at the $E^2$-page~\eqref{refl_E2}, since total degree $2$ otherwise vanishes, it suffices to prove that $E^2_{0,2}\cong H_2(BG_U;\mathbb Z)$ survives to the $E^\infty$-page. The only differential to or from it that does not vanish for degree reasons is $d_3\colon E^2_{3,0}\to E^2_{0,2}$. However, in the LHS spectral sequence for a semidirect product, all differentials which cross the line $q = 0$ must vanish. This is because the quotient map $q\colon G_U\rtimes\mathbb Z_2^R\to \mathbb Z_2^R$ has a section given by a choice of reflection, so the pushforward map $q_*\colon H_*(B(G_U\rtimes\mathbb Z_2^R);\mathbb Z)\to H_*(B\mathbb Z_2^R)$ also has a section, hence must be surjective. This pushforward map is the edge homomorphism in the spectral sequence, meaning that it is realized as the quotient by all elements with $q>0$; for this to be surjective, differentials cannot kill any classes on the line $q = 0$. Therefore the $d_3$ of interest vanishes and the inclusion $G_U\to G_U\rtimes\mathbb Z_2^R$ is indeed an isomorphism on $H_2$.
\end{proof}

\subsection{Spin- and \texorpdfstring{Pin$^+$}{Pin+}-Lifts}

In this section we turn to the analogous statements incorporating Spin- and Pin$^{+}$-Lifts.

\subsubsection{Spin-Lift}

Next, we discuss the non-trivial $\mathbb{Z}_2$ central extension associated to the Spin-lift of $G_U$ and described by the short exact sequence
\begin{equation}
\label{spin_lift_xtn}
    1 \rightarrow \mathbb{Z}_2 \rightarrow \widetilde{G_U} \rightarrow G_U \rightarrow 1 \,.
\end{equation}
\begin{prop}
\label{spin_lift_H1}
For $D\le 7$, $H_1(B\widetilde{G_U}; \mathbb Z) = \mathrm{Ab}\big[\widetilde{G_U}\big] = 0$.
\end{prop}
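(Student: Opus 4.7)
The plan is to run the Lyndon--Hochschild--Serre spectral sequence for the central extension~\eqref{spin_lift_xtn} in integer homology,
\begin{equation*}
E^2_{p,q} = H_p\bigl(BG_U;\, H_q(B\mathbb{Z}_2;\mathbb{Z})\bigr) \Longrightarrow H_{p+q}\bigl(B\widetilde{G_U};\mathbb{Z}\bigr),
\end{equation*}
with trivial coefficient system because $\mathbb{Z}_2$ is central. Using the homology of $B\mathbb{Z}_2$ recorded in~\eqref{Z2_hom} together with the perfectness of $G_U$ for $D\le 7$ (already used in the proof of Proposition~\ref{refl_coh}), the only $E^2_{p,q}$ with $p+q=1$ is $E^2_{0,1}=H_1(B\mathbb{Z}_2;\mathbb{Z})=\mathbb{Z}_2$, while $E^2_{2,0}=H_2(BG_U;\mathbb{Z})$. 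Consequently $H_1(B\widetilde{G_U};\mathbb{Z})$ is identified with the cokernel of the transgression $d_2\colon H_2(BG_U;\mathbb{Z})\to \mathbb{Z}_2$, and the problem reduces to showing that $d_2$ is surjective.

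Next I would identify $d_2$ with the class of the central extension. A standard fact about the LHS spectral sequence of a central extension is that this transgression corresponds, up to sign, to evaluation against the extension class $[\widetilde{G_U}]\in H^2(BG_U;\mathbb{Z}_2)$. Since $H_1(BG_U;\mathbb{Z})=0$, the universal coefficient theorem gives
\begin{equation*}
H^2(BG_U;\mathbb{Z}_2)\cong \operatorname{Hom}\bigl(H_2(BG_U;\mathbb{Z}),\mathbb{Z}_2\bigr),
\end{equation*}
and $d_2$ is precisely the homomorphism representing $[\widetilde{G_U}]$ under this identification. Because the target is $\mathbb{Z}_2$, surjectivity of $d_2$ is equivalent to the statement $[\widetilde{G_U}]\ne 0$, i.e.\ to the Spin-lift being a non-split central extension.

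The main obstacle is therefore to verify that the Spin-lift is genuinely non-trivial over the discrete group $G_U = G_U(\mathbb{Z})$; a priori one might worry that the unique non-trivial double cover of $G_U(\mathbb{R})$ could become split upon restriction to the arithmetic subgroup. This non-splitting is the content of the pullback construction reviewed in Appendix~\ref{App:Explicit_Spin_Lifts}: $\widetilde{G_U}(\mathbb{Z})$ is defined as the fiber product so that its central $\mathbb{Z}_2$ maps isomorphically onto the non-trivial kernel of $\widetilde{G_U}(\mathbb{R})\to G_U(\mathbb{R})$. A short case-by-case verification in each of $D=3,\dots,7$, using the explicit generators listed there, or equivalently a direct check that the non-identity element of the central $\mathbb{Z}_2$ can be written as a commutator in $\widetilde{G_U}(\mathbb{Z})$, rules out splitting. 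Granting this, $d_2$ is non-zero and hence surjective, so $H_1(B\widetilde{G_U};\mathbb{Z}) = \operatorname{Ab}\bigl[\widetilde{G_U}\bigr] = 0$.
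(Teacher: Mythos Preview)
Your proof is correct and follows essentially the same route as the paper: run the LHS spectral sequence for the central extension, observe that only $E^2_{0,1}\cong\mathbb Z_2$ sits in total degree~$1$, identify $d_2\colon H_2(BG_U;\mathbb Z)\to\mathbb Z_2$ with evaluation against the extension class $w\in H^2(BG_U;\mathbb Z_2)$ via the universal coefficient theorem (which collapses since $H_1(BG_U;\mathbb Z)=0$), and conclude from $w\ne 0$. The paper packages the identification of $d_2$ as a separate lemma (proved by comparison with the universal $B\mathbb Z_2$-bundle over $B^2\mathbb Z_2$) rather than citing it as a standard transgression fact, and simply asserts non-splitness of the extension rather than discussing it; otherwise the arguments coincide.
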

We will once again use the LHS spectral sequence, but this time we will need to know the value of a differential. To do so, we will want to work in a more general setting. Let
\begin{equation}
\label{the_Serre_gen}
    B\mathbb Z_2\to Y\to X
\end{equation}
be a principal $B\mathbb Z_2$-bundle, which is classified by a map $f\colon X\to B^2\mathbb Z_2$. Since $B^2\mathbb Z_2$ is a $K(\mathbb Z_2, 2)$, the homotopy class of $f$ is equivalent data to a class $w\in H^2(X;\mathbb Z_2)$. This generalizes the case of a central extension 
\begin{equation}
\label{to_be_Serre}
1 \to \mathbb Z_2\to \widetilde G\to G\to 1;
\end{equation}
set $X = BG$ and $Y = B\widetilde G$. Then the classifying space functor turns the extension~\eqref{to_be_Serre} into the fibration~\eqref{the_Serre_gen}, and identifies the LHS spectral sequence for this extension with the Serre spectral sequence for the fibration. The class $w\in H^2(BG;\mathbb Z_2)$ equals the cohomology class classifying the central extension~\eqref{to_be_Serre}.
\begin{lem}
\label{LHS_d2}
Given $X$, $Y$, $f$, and $w$ as above, such that $X$ is connected, then in the homological Serre spectral sequence for~\eqref{the_Serre_gen}, which has signature
\begin{equation}\label{Serre_sig}
    E^2_{p,q} = H_p(X; H_q(B\mathbb Z_2; \mathbb Z)) \Longrightarrow H_{p+q}(Y;\mathbb Z),
\end{equation}
the differential\footnote{Typically the differential is denoted by $d^2$ in the homological version of the spectral sequence, but we stick to $d_2$ here in order to avoid confusion with the square of the differential.} $d_2\colon E^2_{2,0}\to E^2_{0,1}$ is the map
\begin{equation}
\label{diffform}
    \begin{aligned}
        H_2(X;\mathbb Z) &\longrightarrow H_0(X;\mathbb Z_2)\cong\mathbb Z_2\\
        x &\longmapsto w\frown (x\bmod 2).
    \end{aligned}
\end{equation}
\end{lem}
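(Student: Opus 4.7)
The plan would be to reduce to the universal example of a $B\mathbb{Z}_2$-bundle and then exploit contractibility of a path space. First I would represent $w$ by a classifying map $f\colon X \to B^2\mathbb{Z}_2$ satisfying $f^*\iota_2 = w$, where $\iota_2$ is the tautological class, so that the fibration $B\mathbb{Z}_2 \to Y \to X$ becomes the pullback along $f$ of the universal path-loop fibration $B\mathbb{Z}_2 \to P(B^2\mathbb{Z}_2) \to B^2\mathbb{Z}_2$, whose total space is contractible. The next step is to observe that both the Serre differential $d_2$ and the cap-product expression $x \mapsto w \frown (x \bmod 2)$ are natural with respect to pullback of the classifying map: the first because the Serre spectral sequence is natural in the base, the second by naturality of mod-$2$ reduction combined with the projection formula $f_*(f^*\iota_2 \frown z) = \iota_2 \frown f_*(z)$. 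Since $f_*\colon H_0(X; \mathbb{Z}_2) \to H_0(B^2\mathbb{Z}_2; \mathbb{Z}_2)$ is an isomorphism (both sides are $\mathbb{Z}_2$ and $X$ is connected by hypothesis), proving the identity on the universal example $(B^2\mathbb{Z}_2, \iota_2)$ will transport back to the desired equality on $(X, w)$.

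The real work then reduces to the universal case $X = B^2\mathbb{Z}_2$. In that example, I would use $H_1(B^2\mathbb{Z}_2; \mathbb{Z}) = 0$ and $H_2(B^2\mathbb{Z}_2; \mathbb{Z}) \cong \pi_2(B^2\mathbb{Z}_2) \cong \mathbb{Z}_2$ (Hurewicz) to read off that the relevant corner of the $E^2$-page is $E^2_{2,0} \cong \mathbb{Z}_2 \cong E^2_{0,1}$ with $E^2_{1,0} = 0$. Contractibility of the total space forces $E^\infty_{0,1} = 0$, and the only differential that can land in $E^r_{0,1}$ for $r \geq 2$ is the $d_2$ in question, so this differential must be surjective and hence an isomorphism between two copies of $\mathbb{Z}_2$. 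For the other side, mod-$2$ reduction $H_2(B^2\mathbb{Z}_2; \mathbb{Z}) \to H_2(B^2\mathbb{Z}_2; \mathbb{Z}_2)$ is an isomorphism by the universal coefficient theorem (the Tor summand vanishes because $H_1 = 0$), and cap with $\iota_2$ realizes the Kronecker pairing $H^2 \otimes H_2 \to H_0$ over $\mathbb{Z}_2$, which is non-degenerate on the generator since $\iota_2$ is tautologically the generator of $H^2$. Both maps are therefore the unique non-trivial endomorphism of $\mathbb{Z}_2$, and so coincide.

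Combined with the naturality argument, this will finish the proof. I expect the main obstacle to be bookkeeping rather than substance: one has to be scrupulous in verifying that the cap-product expression really does assemble into a natural transformation with respect to pullback of the classifying map, so that the universal-case equality of two isomorphisms $\mathbb{Z}_2 \to \mathbb{Z}_2$ genuinely descends to the stated formula on arbitrary $(X, w)$. Once that naturality package is in place, no input more sophisticated than Hurewicz for $B^2\mathbb{Z}_2$, the universal coefficient theorem, the projection formula, and contractibility of the path space is needed.
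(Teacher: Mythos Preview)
Your proposal is correct and follows essentially the same approach as the paper: reduce to the universal principal $B\mathbb{Z}_2$-bundle over $B^2\mathbb{Z}_2$ via naturality of the Serre spectral sequence and of the cap-product formula, then in the universal case use Hurewicz, the universal coefficient theorem, and contractibility of the total space to conclude that both candidate maps are the unique nonzero homomorphism $\mathbb{Z}_2\to\mathbb{Z}_2$. The paper phrases the universal total space as $E(B\mathbb{Z}_2)$ rather than the path space $P(B^2\mathbb{Z}_2)$, and makes the naturality step explicit via a commutative square rather than the projection formula, but these are cosmetic differences.
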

Here, ``$\frown$'' denotes the cap product, which evaluates a cohomology class on a homology class. Sometimes this can be calculated using the universal coefficient theorem, which gives us a short exact sequence
\begin{equation}
\label{UCT_pair}
    0\to \mathrm{Ext}(H_{n-1}(X; \mathbb Z), \mathbb Z_2) \to
    H^n(X;\mathbb Z_2) \overset{h}{\to}
    \mathrm{Hom}(H_n(X;\mathbb Z), \mathbb Z_2)\to 0
\end{equation}
such that the map $h$ sends a cohomology class $y$ to the homomorphism $x\mapsto y\frown (x\bmod 2)$.

\begin{proof}
The classifying map $f$ induces a map of $E^2$-pages of spectral sequences from~\eqref{Serre_sig} to the Serre spectral sequence for the universal principal $B\mathbb Z_2$-bundle
\begin{equation}
\label{univ_fib}
    1\to B\mathbb Z_2 \to E(B\mathbb Z_2) \to B^2\mathbb Z_2\to 1.
\end{equation}
This map of spectral sequences commutes with differentials, and under the identification of the $E^2$-page with homology, this map is the pushforward map on homology. Let $'E^r_{p,q}$ denote the Serre spectral sequence associated to~\eqref{univ_fib}.
Then we obtain a commutative diagram
\begin{equation}
\begin{tikzcd}
	{E^2_{2,0} = H_2(X;\mathbb Z)} && {'E_{2,0}^2 = H_2(B^2\mathbb Z_2; \mathbb Z)} & {} \\
	{E^2_{0,1} = H_0(X;\mathbb Z_2)} && {'E_{0,1}^2 = H_0(B^2\mathbb Z_2; \mathbb Z_2)} \\
	{} & {\mathbb Z_2.}
	\arrow["{f_*}", from=1-1, to=1-3]
	\arrow["{d_2}", from=1-1, to=2-1]
	\arrow["{d_2}", from=1-3, to=2-3]
	\arrow["{f_*}", from=2-1, to=2-3]
	\arrow["\cong", from=2-1, to=3-2]
	\arrow["\cong"', from=2-3, to=3-2]
\end{tikzcd}
\end{equation}
The claimed formula for $d_2$ in~\eqref{diffform} also commutes with these maps, so it suffices to prove the lemma in the case of the universal principal $B\mathbb Z_2$-bundle~\eqref{univ_fib}, for which $f = \mathrm{id}$ and $w$ is the tautological class.

The Hurewicz theorem implies $H_2(B^2\mathbb Z_2;\mathbb Z)\cong\mathbb Z_2$; let $y$ be the nonzero element. Since $H_1(B^2\mathbb Z_2;\mathbb Z) = 0$ because $B^2\mathbb Z_2$ is simply connected, the universal coefficient long exact sequence~\eqref{UCT_pair} collapses to an isomorphism $h\colon H^2(B^2\mathbb Z_2;\mathbb Z_2)\to\mathrm{Hom}(H_2(B^2 \mathbb Z_2;\mathbb Z), \mathbb Z_2)$. This map sends $w$ to the homomorphism ``cap product with $w$,'' so $w\frown (y\bmod 2) =1\in\mathbb Z_2$, since $y\bmod 2$ is the only class on which that cap product can be nonzero. Since $B^2\mathbb Z_2$ is connected, $H_0(B^2\mathbb Z_2;\mathbb Z_2)\cong\mathbb Z_2$. Therefore $d_2(y) = w\frown(y\bmod 2)$ if and only if this $d_2$ is nonzero. But this $d_2$ must be nonzero: since $E(B\mathbb Z_2)$ is contractible, $E^\infty_{p,q}$ must vanish if $p+q>0$, and since $E^2_{0,1}$ can only be killed by this $d_2$, this $d_2$ is nonzero.
\end{proof}
\begin{proof}[Proof of \cref{spin_lift_H1}]
We study the homological LHS spectral sequence with $\mathbb Z$ coefficients associated to~\eqref{spin_lift_xtn}, which has $E^2$-page
\begin{equation}
\label{general_spin_lift_LHS}
\begin{array}{c | c c c c}
2 & 0 & 0 & 0 & 0 \\
1 & H_0(BG_U; \mathbb{Z}_2) & H_1(BG_U; \mathbb{Z}_2) & H_2( BG_U; \mathbb{Z}_2) & H_3(BG_U; \mathbb{Z}_2) \\
0 & H_0(BG_U; \mathbb{Z}) & H_1(BG_U; \mathbb{Z}) & H_2( BG_U; \mathbb{Z}) & H_3(BG_U; \mathbb{Z}) \\ \hline
q/p & 0 & 1 & 2 & 3
\end{array}
\end{equation}
where we already plugged in the homology groups of $B\mathbb{Z}_2$ from~\eqref{Z2_hom}. Since we have restricted to $D \leq 7$, so that $G_U$ is perfect, \eqref{general_spin_lift_LHS} simplifies to
\begin{equation}
\begin{array}{c | c c c c}
2 & 0 & 0 & 0 & 0 \\
1 & \mathbb{Z}_2 & 0 & H_2( BG_U; \mathbb{Z}_2) & H_3(BG_U; \mathbb{Z}_2) \\
0 & \mathbb{Z} & 0 & H_2( BG_U; \mathbb{Z}) & H_3(BG_U; \mathbb{Z}) \\ \hline
q/p & 0 & 1 & 2 & 3
\end{array}
\end{equation}
From this we can see that $H_0 (B\widetilde{G}_U; \mathbb{Z}) = \mathbb{Z}$, and $H_1 (B\widetilde{G}_U; \mathbb{Z})$ is either $0$ or $\mathbb Z_2$ if $d_2\colon E^2_{2,0}\to E^2_{0,1}$ is nonzero, resp.\ $0$. By \cref{LHS_d2}, for any class $x\in H_2(BG_U;\mathbb Z)$, $d_2(x) = w\frown (x\bmod 2)$.

Now use the universal coefficient theorem again. Since $G_U$ is perfect, $H_1(BG_U;\mathbb Z) = 0$, so just as in the proof of \cref{LHS_d2}, the universal coefficient short exact sequence~\eqref{UCT_pair} collapses to an isomorphism $h\colon H^2(BG_U;\mathbb Z_2)\to\mathrm{Hom}(H_2(B^2 \mathbb Z_2;\mathbb Z), \mathbb Z_2)$, where $h(y)$ is the function $x\mapsto y\frown(x\bmod 2)$. The class $w\in H^2(BG_U;\mathbb Z_2)$ representing the extension $\widetilde{G_U}\to G_U$ is nonzero, because in all examples of interest, this extension is nonsplit. Therefore the function $d_2(x) = x\mapsto w\frown (x\bmod 2)$ is also nonzero, which implies $H_1(B\widetilde{G_U};\mathbb Z) = 0$.
\end{proof}
As a consistency check, this matches what we observed in bordism in \S\ref{ss:spin_lift_bordism}. The argument in the proof of \cref{spin_lift_H1} also fixes the second homology group to be
\begin{equation}
    H_2 (B\widetilde{G_U}; \mathbb{Z}) = \text{ker}(d_2) \subset H_2 (BG_U; \mathbb{Z}) \,,
\end{equation}
which in the case of $\SL(n,\mathbb{Z})$ vanishes.

\subsubsection{\texorpdfstring{Pin$^+$-Lift}{Pin+ Lift}}

Finally, we analyze the homology for the full Pin$^+$-lift of U-duality groups in $D\leq7$. This lift is described by a short exact sequence
\begin{equation}
\label{pinp_LES}
1 \rightarrow \mathbb{Z}_2 \rightarrow \widetilde{G_U}^+ \rightarrow G_U \rtimes \mathbb{Z}_2^R \rightarrow 1 \,.
\end{equation}
\begin{prop}
\label{pinp_LHS}
For $D\le 7$, $H_1(B\widetilde{G_U}^+;\mathbb Z) \cong \mathrm{Ab}\big[\widetilde{G_U}^+\big] \cong\mathbb Z_2$.
\end{prop}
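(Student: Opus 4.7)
The plan is to apply the homological LHS spectral sequence to the central extension~\eqref{pinp_LES}, exactly as in the proof of \cref{spin_lift_H1}, but now with the base group $G_U\rtimes\mathbb{Z}_2^R$ instead of $G_U$. The $E^2$-page has the form
\begin{equation}
E^2_{p,q} = H_p\bigl(B(G_U\rtimes\mathbb{Z}_2^R); H_q(B\mathbb{Z}_2;\mathbb{Z})\bigr) \Longrightarrow H_{p+q}(B\widetilde{G_U}^+;\mathbb{Z}),
\end{equation}
and since $H_q(B\mathbb{Z}_2;\mathbb{Z})$ vanishes in positive even degrees, only the rows $q=0$ and $q=1$ contribute in low total degree.

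First I would populate the relevant entries using \cref{refl_coh}. On the $q=0$ row, part~\eqref{h1_refl} gives $E^2_{1,0}\cong\mathbb{Z}_2$, and part~\eqref{h2_inj} tells us $E^2_{2,0}\cong H_2(BG_U;\mathbb{Z})$. On the $q=1$ row we have $E^2_{0,1}=H_0(B(G_U\rtimes\mathbb{Z}_2^R);\mathbb{Z}_2)\cong\mathbb{Z}_2$. In total degree $\le 1$, the only potentially nonzero differential is $d_2\colon E^2_{2,0}\to E^2_{0,1}$, and there is no differential entering or leaving $E^2_{1,0}$ for degree reasons; hence $E^\infty_{1,0}=\mathbb Z_2$ and $H_1(B\widetilde{G_U}^+;\mathbb Z)$ sits in a short exact sequence whose outer terms are $E^\infty_{0,1}$ and $\mathbb{Z}_2$.

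The heart of the argument is to show that this $d_2$ is surjective, so that $E^\infty_{0,1}=0$ and the remaining extension has no room to do anything nontrivial. By \cref{LHS_d2} this differential is the map $x\mapsto w^+\frown(x\bmod 2)$, where $w^+\in H^2(B(G_U\rtimes\mathbb{Z}_2^R);\mathbb{Z}_2)$ classifies the Pin$^+$ central extension~\eqref{pinp_LES}. Restricting~\eqref{pinp_LES} along the inclusion $G_U\hookrightarrow G_U\rtimes\mathbb{Z}_2^R$ recovers the Spin-lift extension~\eqref{spin_lift_xtn}, so the pullback of $w^+$ to $H^2(BG_U;\mathbb{Z}_2)$ is precisely the Spin-lift class $w$ analyzed in the proof of \cref{spin_lift_H1}. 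Combining naturality of the cap product with \cref{refl_coh}\eqref{h2_inj}, which says the inclusion induces an isomorphism on $H_2(-;\mathbb{Z})$, the Pin$^+$-lift differential on $E^2_{2,0}$ is identified with the Spin-lift differential, which is already known to be surjective since $w\neq 0$.

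Putting this together, $E^\infty_{0,1}=0$ and $E^\infty_{1,0}=\mathbb{Z}_2$, giving $H_1(B\widetilde{G_U}^+;\mathbb Z)\cong\mathbb{Z}_2$, as desired. The main technical obstacle I expect is checking that the naturality identification really goes through: one must verify that the class $w^+$ restricts to $w$ (i.e., that the pullback of the Pin$^+$ extension along $G_U\hookrightarrow G_U\rtimes\mathbb{Z}_2^R$ is the Spin-lift, not a twist of it by the reflection), and then carefully combine that with the universal-coefficients description of the cap product used in \cref{spin_lift_H1} to transfer nonvanishing of $d_2$ from the Spin-lift setting to the Pin$^+$ setting. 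Once those compatibilities are checked, everything else is a bookkeeping exercise on the spectral sequence.
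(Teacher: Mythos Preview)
Your proposal is correct and follows essentially the same approach as the paper: both run the LHS spectral sequence for~\eqref{pinp_LES}, note that $E^2_{1,0}\cong\mathbb{Z}_2$ survives, and show $d_2\colon E^2_{2,0}\to E^2_{0,1}$ is surjective by comparing with the Spin-lift case via the inclusion $j\colon G_U\hookrightarrow G_U\rtimes\mathbb{Z}_2^R$ (using that $j^*w^+=w$ and that $j_*$ is an isomorphism on $H_2$ and $H_0$). The only cosmetic difference is that the paper phrases the comparison as a map of LHS spectral sequences yielding a commutative square of $d_2$'s, whereas you invoke the explicit cap-product formula from \cref{LHS_d2} together with naturality; these are equivalent.
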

\begin{proof}
Set up the homological LHS spectral sequence associated to~\eqref{pinp_LES} with $\mathbb Z$ coefficients. The $E^2$-page is
\begin{equation}
   \begin{array}{c | c c c}
   2 & 0 & 0 & 0 \\
   1 & \mathbb{Z}_2 & H_1 \big( B(G_U \rtimes \mathbb{Z}_2^R); \mathbb{Z}_2\big) & H_2 \big( B(G_U \rtimes \mathbb{Z}_2^R); \mathbb{Z}_2\big)  \\
   0 & \mathbb{Z} & \mathbb{Z}_2 & H_2\big( B(G_U \rtimes \mathbb{Z}_2^R); \mathbb{Z}\big).  \\ \hline
   q/p & 0 & 1 & 2
   \end{array}
\end{equation}
The $\mathbb Z_2$ summand in $E^2_{1,0}$ survives to the $E^\infty$-page for degree reasons, so we are done if we can show that $d_2\colon E^2_{2,0}\to E^2_{0,1}$ is nonzero, to remove the other $\mathbb Z_2$ summand in total degree $1$. For this, consider the map of short exact sequences induced by the homomorphism $j\colon\widetilde{G_U}\hookrightarrow\widetilde{G_U}^+$ including the Spin-lift of $G_U$ into the $\Pin^+$-lift:
\begin{equation}
\begin{tikzcd}
	1 & {\mathbb Z_2} & {\widetilde{G_U}} & {G_U} & 1 \\
	1 & {\mathbb Z_2} & {\widetilde{G_U}^+} & {G_U\rtimes\mathbb Z_2^R} & 1
	\arrow[from=1-1, to=1-2]
	\arrow[from=1-2, to=1-3]
	\arrow[equals, from=1-2, to=2-2]
	\arrow[from=1-3, to=1-4]
	\arrow[from=1-3, to=2-3]
	\arrow[from=1-4, to=1-5]
	\arrow["j", from=1-4, to=2-4]
	\arrow[from=2-1, to=2-2]
	\arrow[from=2-2, to=2-3]
	\arrow[from=2-3, to=2-4]
	\arrow[from=2-4, to=2-5]
\end{tikzcd}
\end{equation}
This induces a map of LHS spectral sequences which on the $E^2$-page is the pushforward map $j_*$ on homology, and which commutes with all differentials. For the time being, let $E_{p,q}^r$ denote the LHS for the Spin-lift and $^+E_{p,q}^r$ denote the LHS for the $\Pin^+$-lift. Thus we have a commutative diagram
\begin{equation}
\label{LHS_compare}
\begin{tikzcd}
	{E^2_{2,0} = H_2(BG_U;\mathbb Z)} && {{}^+E_{2,0}^2 = H_2(B(G_U\rtimes\mathbb Z_2); \mathbb Z)} & {} \\
	{E^2_{0,1} = H_0(BG_U;\mathbb Z_2)} && {{}^+E_{0,1}^2 = H_0(B(G_U\rtimes\mathbb Z_2); \mathbb Z_2)} \\
	{} & {\mathbb Z_2.}
	\arrow["{j_*}", from=1-1, to=1-3]
	\arrow["{d_2}", from=1-1, to=2-1]
	\arrow["{^+d_2}", from=1-3, to=2-3]
	\arrow["{j_*}", from=2-1, to=2-3]
	\arrow["\cong", from=2-1, to=3-2]
	\arrow["\cong"', from=2-3, to=3-2]
\end{tikzcd}
\end{equation}
In the proof of \cref{spin_lift_H1} we saw that the leftmost $d_2$ is surjective; since $j_*$ is an isomorphism in degree $0$, then ${}^+d_2\circ j_*$ (i.e.\ traveling along the upper right of~\eqref{LHS_compare}) is also surjective. Therefore ${}^+d_2$ must also be surjective, so as we mentioned above, we have finished showing $H_1(B\widetilde{G_U}^+;\mathbb Z)\cong\mathbb Z_2$.
\end{proof}
Like for the Spin-lift, \cref{pinp_LHS} matches the results we obtained in bordism in \S\ref{ss:pinp_lift_bordism}.

\section{Calculation of \texorpdfstring{$\Omega_1^{\Spin\text{-}\widetilde{G_U}} (\mathrm{pt})$}{first spin-twisted bordism} and \texorpdfstring{$\Omega_1^{\Spin\text{-}\widetilde{G_U}^+}(\mathrm{pt})$}{first pin-twisted bordism}}
\label{app:Adams}
In this Appendix we calculate $\Omega_1^{\Spin\text{-}\widetilde{G_U}}(\text{pt})$ and $\Omega_1^{\Spin\text{-}\widetilde{G_U}^+}(\text{pt})$ for the U-duality groups $G_U$ in dimensions $3\le D\le 9$. We summarize the answers in the following theorems.
\begin{thm}
\label{spin_lift_bordism}
For $3\le D\le 9$, there is an isomorphism
\begin{equation}
    \Omega_1^{\Spin\text{-}\widetilde{G_U}} (\mathrm{pt})\cong \mathrm{Ab}\big[\widetilde{G_U}\big] \cong
        \begin{cases}
            0, & 3\le D\le 7\\
            \mathbb Z_{24}, &D = 8, 9.
        \end{cases}
\end{equation}
\end{thm}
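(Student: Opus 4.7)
The plan is to identify $\Omega_1^{\Spin\text{-}\widetilde{G_U}}(\pt)$ with a twisted Spin bordism group of $BG_U$ and then read off total degree $1$ from a low-degree Atiyah--Hirzebruch spectral sequence (AHSS), feeding in the group-homology input of Appendix \ref{app:LHS}. The central extension $1\to\mathbb Z_2\to\widetilde{G_U}\to G_U\to 1$ is classified by a class $w\in H^2(BG_U;\mathbb Z_2)$, and a $\Spin\text{-}\widetilde{G_U}$ structure on a closed manifold $M$ amounts to a principal $G_U$-bundle $f\colon M\to BG_U$ together with a trivialization of $w_2(TM)+f^*w$. Passing to Thom spectra identifies $\Omega_1^{\Spin\text{-}\widetilde{G_U}}(\pt)\cong\Omega_1^{\Spin}(BG_U;\tau_w)$, the $\Spin$ bordism of $BG_U$ twisted by $w$.

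I would then run the twisted AHSS
\[
E^2_{p,q}=H_p\bigl(BG_U;\Omega_q^{\Spin}(\pt)\bigr)\Longrightarrow \Omega_{p+q}^{\Spin}(BG_U;\tau_w),
\]
and focus on the only contributors to total degree $1$, namely $E^2_{1,0}=H_1(BG_U;\mathbb Z)$ and $E^2_{0,1}=H_0(BG_U;\mathbb Z_2)=\mathbb Z_2$ (using $\Omega_0^{\Spin}=\mathbb Z$ and $\Omega_1^{\Spin}=\mathbb Z_2$). The only potentially nonzero incoming differential to $E^2_{0,1}$ is $d_2\colon E^2_{2,0}\to E^2_{0,1}$, which in the twisted setting is cap product with the twisting class, $x\mapsto w\frown(x\bmod 2)$ (the Steenrod contribution $Sq^2_*$ vanishes for dimension reasons), i.e.\ exactly the differential analyzed in Lemma \ref{LHS_d2}.

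For $3\le D\le 7$, $G_U$ is perfect (Appendix \ref{app:LHS}), so $E^2_{1,0}=H_1(BG_U;\mathbb Z)=0$. Since $\widetilde{G_U}\to G_U$ is a non-split central extension, $w$ is non-zero, and by the universal-coefficient argument employed in the proof of Proposition \ref{spin_lift_H1}, the differential above is surjective. Hence $E^\infty$ vanishes in total degree $1$ and $\Omega_1^{\Spin\text{-}\widetilde{G_U}}(\pt)=0=\mathrm{Ab}\bigl[\widetilde{G_U}\bigr]$, matching Proposition \ref{spin_lift_H1}. For $D=8,9$, by contrast, the $\SL(2,\mathbb Z)$ factor of $G_U$ has Abelianization $\mathbb Z_{12}$, so $E^2_{1,0}\supseteq\mathbb Z_{12}$ and the surviving $\mathbb Z_2$ from $E^2_{0,1}$ is no longer killed. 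For $D=9$, where $\widetilde{G_U}=\Mp(2,\mathbb Z)$, the answer $\mathbb Z_{24}$ is the bordism computation of \cite{Debray:2023yrs}. For $D=8$ I would use $\widetilde{G_U}=\widetilde{\SL}(3,\mathbb Z)\times\Mp(2,\mathbb Z)/\mathbb Z_2^{\mathrm{diag}}$, together with the fact that $\SL(3,\mathbb Z)$ is perfect (so its factor contributes nothing to $E^2_{1,0}$), to reduce via a Künneth-style argument to the $D=9$ computation and obtain $\mathbb Z_{24}$ again.

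The main obstacle is the extension problem in $D=8,9$: the AHSS only yields a short exact sequence $0\to\mathbb Z_2\to\Omega_1\to\mathbb Z_{12}\to 0$, and one must show that this is the non-split extension giving the cyclic group $\mathbb Z_{24}$ rather than $\mathbb Z_{12}\oplus\mathbb Z_2$. This is precisely the content of the $\Mp(2,\mathbb Z)$ calculation of \cite{Debray:2023yrs}; the remaining work for $D=8$ is checking that the diagonal $\mathbb Z_2$-quotient in the definition of $\widetilde{G_U}$ does not disturb the extension data, which follows because the quotient identifies the $(-1)^F$ generator consistently across the two factors, so the $\mathbb Z_2$ class in $\Omega_1^\Spin$ is fed into the $\Mp(2,\mathbb Z)$ factor just as in $D=9$.
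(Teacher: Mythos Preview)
Your approach is essentially the paper's: identify $\Spin\text{-}\widetilde{G_U}$ bordism with $(BG_U,w)$-twisted Spin bordism and run the James/twisted Atiyah--Hirzebruch spectral sequence, with $d_2\colon E^2_{2,0}\to E^2_{0,1}$ given by $(\Sq^2_w)^\vee\circ(\text{mod }2)$, which on $H^0$ reduces to your cap-with-$w$ formula. For $D\le 7$ your argument is correct and matches the paper verbatim.

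For $D=8,9$ there is a small logical slip. You write that because $E^2_{1,0}\supseteq\mathbb Z_{12}$, ``the surviving $\mathbb Z_2$ from $E^2_{0,1}$ is no longer killed,'' but the size of $E^2_{1,0}$ has no bearing on whether $d_2\colon E^2_{2,0}\to E^2_{0,1}$ is surjective. The actual reason $d_2$ vanishes is different in the two cases: for $D=9$ one has $H_2(B\SL(2,\mathbb Z);\mathbb Z)=0$, so $E^2_{2,0}=0$ and $d_2=0$ trivially; for $D=8$ one must check that the generator of $H_2(B(\SL(3,\mathbb Z)\times\SL(2,\mathbb Z));\mathbb Z)\cong\mathbb Z_2$ (coming from the $\SL(3)$ factor) pairs trivially with $w$ under cap product, which the paper also leaves to the reader. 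Your universal-coefficient argument from \cref{spin_lift_H1} fails here precisely because $H_1(BG_U;\mathbb Z)\ne 0$, so the map $h$ in~\eqref{UCT_pair} need not be injective---but that only says the argument does not go through, not that $d_2=0$.

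On the hidden extension: the paper resolves it not by invoking the full $\Mp(2,\mathbb Z)$ computation of \cite{Debray:2023yrs}, but more economically by embedding $\mathbb Z_4\hookrightarrow\SL(2,\mathbb Z)$ (or into the $D=8$ group) and comparing with the known non-trivial extension in $(B\mathbb Z_4,i^*w)$-twisted Spin bordism from \cite[\S 13.4]{Debray:2023yrs}. Your K\"unneth reduction for $D=8$ is plausible in spirit but would need more care, since $\widetilde{G_U}$ is a central quotient rather than a product; the paper's subgroup-comparison argument sidesteps this.
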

\begin{thm}
\label{pinp_lift_bordism}
For $3\le D\le 9$, there is an isomorphism
\begin{equation}
    \Omega_1^{\Spin\text{-}\widetilde{G_U}^+}(\mathrm{pt}) \cong \mathrm{Ab}\big[\widetilde{G_U}^+\big] \cong
        \begin{cases}
            \mathbb Z_2, & 3\le D\le 7\\
            \mathbb Z_2\oplus\mathbb Z_2, &D = 8,9.
        \end{cases}
\end{equation}
In all cases, one $\mathbb Z_2$ summand is generated by the bordism class of a circle whose duality bundle has monodromy given by the non-trivial element of $\mathbb Z_2^R$; in $D = 8,9$ the other $\mathbb Z_2$ summand is represented by a circle with monodromy given by a Spin-lift of $S\in\SL(2, \mathbb Z)$ to $\Mp(2, \mathbb Z)$.
\end{thm}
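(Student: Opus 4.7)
The plan is to prove \cref{pinp_lift_bordism} in two stages: first establish the general identification $\Omega_1^{\Spin\text{-}\Gamma}(\mathrm{pt}) \cong \mathrm{Ab}[\Gamma]$ for the relevant central $\mathbb{Z}_2$-extensions $\Gamma$, and then compute $\mathrm{Ab}[\widetilde{G_U}^+]$ in each dimension using the group-cohomological machinery of Appendix~\ref{app:LHS}. The identification is essentially the one sketched in \S\ref{ss:pinp_lift_bordism}, which I would make rigorous via the Adams spectral sequence at the prime $2$ applied to the Thom spectrum $MT(\Spin\text{-}\Gamma)$, with a parallel geometric argument as sanity check.

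For the general identification, I would proceed in parallel geometrically and spectrally. Geometrically, every class in $\Omega_1^{\Spin\text{-}\Gamma}(\mathrm{pt})$ is represented by a disjoint union of circles $(S^1_{\pm})_g$, and the twist identity $(S^1_+)_g \simeq (S^1_-)_{Fg}$ from~\eqref{eq:twistident} normalizes to non-bounding Spin structure, yielding a natural monodromy map to $\Gamma$. Two normalized circles $(S^1_-)_g$ and $(S^1_-)_{g'}$ cobound a Riemann surface precisely when $gg'^{-1}$ lies in $[\Gamma,\Gamma]$, because attaching a genus-one handle introduces a commutator in the monodromy representation. This produces an isomorphism
\[\Omega_1^{\Spin\text{-}\Gamma}(\mathrm{pt}) \cong \mathrm{Ab}[\Gamma],\]
which the Adams spectral sequence computation would confirm is complete, i.e.\ no exotic filtration jumps appear in total degree one.

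With the general identification in hand, the case $3 \le D \le 7$ follows immediately from \cref{pinp_LHS}, which gives $\mathrm{Ab}[\widetilde{G_U}^+] \cong \mathbb{Z}_2$. The surviving generator is $(S^1_-)_{\widetilde R}$, since $G_U$ is perfect for $D \le 7$ and the lifted reflection is the unique class whose image in the $\mathbb{Z}_2^R$ quotient is non-trivial, and hence cannot be written as a commutator. For $D = 8, 9$ the simple $\SL(2,\mathbb{Z})$ factor of $G_U$ contributes additional torsion, so I would redo the LHS argument of \cref{pinp_LHS} without the perfectness hypothesis, tracking $\mathrm{Ab}[\SL(2,\mathbb{Z})] \cong \mathbb{Z}_{12}$. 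The reflection conjugates $T \mapsto T^{-1}$, forcing $T^2 = 1$ in the Abelianization of $\SL(2,\mathbb{Z}) \rtimes \mathbb{Z}_2^R$ and collapsing $\mathbb{Z}_{12}$ down to a single $\mathbb{Z}_2$. Passing to the $\Pin^+$-lift and combining with the reflection's own $\mathbb{Z}_2$ gives $\mathrm{Ab}[\widetilde{G_U}^+] \cong \mathbb{Z}_2 \oplus \mathbb{Z}_2$, generated by $(S^1_-)_{\widetilde R}$ and $(S^1_-)_{\widetilde S}$ with $\widetilde S \in \Mp(2,\mathbb{Z})$ as claimed.

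The principal obstacle is making the general identification fully rigorous, in particular ruling out hidden extensions in the Adams spectral sequence and verifying that every bounding Riemann surface admits a compatible Spin-$\Gamma$ extension of its boundary data. The first point requires computing the $\mathcal{A}(1)$-module structure on $H^*(MT(\Spin\text{-}\Gamma); \mathbb{F}_2)$ in low degrees, which is tractable because the Adams $E_2$-page in total degree one has only finitely many contributing entries; the second point is a standard obstruction-theoretic argument, but must be checked compatibly with the twisted structure. The parallel geometric picture of \S\ref{ss:pinp_lift_bordism} and the cohomological input from Appendix~\ref{app:LHS} together keep each computational step cross-checked.
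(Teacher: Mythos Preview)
Your approach is essentially the paper's: the geometric normalization argument of \S\ref{ss:pinp_lift_bordism}, the LHS computation of \cref{pinp_LHS} for $D\le 7$, and the Adams spectral sequence of \S\ref{ss:Adams} for rigor. Two small points. First, a terminology slip: you want to normalize to the \emph{bounding} spin structure (the paper's $S^1_-$), not the non-bounding one, since it is precisely on $S^1_-$ that the disk provides a null-bordism once the monodromy is a product of commutators. Second, for $D=8,9$ your argument that the reflection collapses $\mathbb Z_{12}$ (or $\mathbb Z_{24}$ after the lift) to $\mathbb Z_2$ is correct, but you have not verified that the resulting $\mathrm{Ab}[\widetilde{G_U}^+]$ is $\mathbb Z_2\oplus\mathbb Z_2$ rather than $\mathbb Z_4$, nor that the $\Mp(2,\mathbb Z)$ contribution does not vanish entirely. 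The paper resolves the extension question by restricting to the dihedral subgroup $D_{16}\hookrightarrow\SL(2,\mathbb Z)\rtimes\mathbb Z_2^R$ and invoking the known computation there; your $\widetilde R^2=1$ gives the splitting $\mathbb Z_2^R\to\widetilde{G_U}^+$ which handles half of this, but showing $\widetilde S$ survives in the Abelianization still needs a witness.
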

We also discuss the bosonic version of the spin result; this is not a new result, but offers a nice parallel to \cref{spin_lift_bordism}.
\begin{prop}
\label{bosonic_bord}
For $3\le D\le 9$, there is an isomorphism
\begin{equation}
    \Omega_1^{\Spin}(BG_U) \cong \mathbb Z_2\oplus \mathrm{Ab}\big[G_U\big] \cong
        \begin{cases}
            \mathbb Z_2, & 3\le D\le 7\\
            \mathbb Z_2\oplus \mathbb Z_{12}, &D = 8,9.
        \end{cases}
\end{equation}
In all cases, one $\mathbb Z_2$ summand is generated by the circle with periodic spin structure and trivial duality bundle; in $D = 8,9$, the other $\mathbb Z_{12}$ summand is generated by a circle with either spin structure and a duality bundle whose monodromy is $T\in\SL(2, \mathbb Z)$.
\end{prop}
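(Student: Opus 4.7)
The plan is to use the Atiyah--Hirzebruch spectral sequence (AHSS) for spin bordism,
\begin{equation}
E^2_{p,q} = H_p(BG_U; \Omega_q^{\Spin}(\mathrm{pt})) \Longrightarrow \Omega_{p+q}^{\Spin}(BG_U).
\end{equation}
Using $\Omega_0^{\Spin}(\mathrm{pt}) = \mathbb{Z}$ and $\Omega_1^{\Spin}(\mathrm{pt}) = \mathbb{Z}_2$, the only two entries contributing to total degree $1$ are $E^2_{0,1} = \mathbb{Z}_2$ and $E^2_{1,0} = H_1(BG_U; \mathbb{Z}) = \mathrm{Ab}[G_U]$. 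First I would verify that every differential into or out of these bidegrees vanishes, then split the resulting filtration via naturality, and finally identify the generators.

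For $E^2_{1,0}$, every incoming $d_r$-differential with $r\ge 2$ has source in a row with $q<0$, and every outgoing one has target with $p<0$, so all such differentials vanish for degree reasons. For $E^2_{0,1}$, the same argument rules out outgoing differentials and all incoming $d_r$ with $r\ge 3$; the only remaining candidate is $d_2\colon E^2_{2,0} = H_2(BG_U;\mathbb{Z}) \to E^2_{0,1} = \mathbb{Z}_2$. To show this vanishes, I would invoke naturality of the AHSS along the collapse $BG_U\to\mathrm{pt}$: the class of $S^1$ with periodic spin structure and trivial duality bundle pulls back from $\Omega_1^{\Spin}(\mathrm{pt}) = \mathbb{Z}_2$, and its image in $E^2_{0,1}$ is the non-trivial element. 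Since the AHSS for a point collapses trivially, this class survives to $E^\infty$ over $\mathrm{pt}$, and hence over $BG_U$ as well, forcing $d_2 = 0$. As a sanity check, one can alternatively identify $d_2$ with the homological dual of $\mathrm{Sq}^2$ acting on $H^0(BG_U;\mathbb{Z}_2)$, which vanishes automatically since $\mathrm{Sq}^2$ raises cohomological degree beyond zero.

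The AHSS therefore collapses at $E^2$ in total degree $1$, yielding a short exact sequence
\begin{equation}
0 \to \mathbb{Z}_2 \to \Omega_1^{\Spin}(BG_U) \to \mathrm{Ab}[G_U] \to 0,
\end{equation}
which splits because the collapse $BG_U \to \mathrm{pt}$ induces a retraction $\Omega_1^{\Spin}(BG_U) \to \Omega_1^{\Spin}(\mathrm{pt}) = \mathbb{Z}_2$ onto the kernel. To finish, I would quote the Abelianization calculations of Appendix \ref{app:LHS}: for $3\le D\le 7$ the group $G_U$ is perfect, so $\mathrm{Ab}[G_U] = 0$, while for $D = 8,9$ the factor $\SL(2,\mathbb{Z})$ dominates (since $\SL(n,\mathbb{Z})$ is perfect for $n\ge 3$) and $\mathrm{Ab}[G_U] = \mathrm{Ab}[\SL(2,\mathbb{Z})] = \mathbb{Z}_{12}$, generated by the image of $T$. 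The generator of the $\mathbb{Z}_{12}$ summand is thus represented by a circle with $T$-monodromy, and toggling its spin structure shifts the bordism class only by the generator of the $\mathbb{Z}_2$ summand.

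The one genuine technical point is the vanishing of the $d_2$-differential into $E^2_{0,1}$; naturality along $BG_U \to \mathrm{pt}$ handles it cleanly, so I do not anticipate serious obstacles beyond bookkeeping.
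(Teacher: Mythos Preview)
Your proposal is correct and follows essentially the same Atiyah--Hirzebruch spectral sequence approach as the paper. The only cosmetic difference is that the paper works with the \emph{reduced} AHSS, so that $E^2_{0,1}=\widetilde H_0(BG_U;\mathbb Z_2)=0$ and the $\Omega_1^{\Spin}(\mathrm{pt})\cong\mathbb Z_2$ summand is added back at the end; this sidesteps the $d_2$ and extension analyses, but your naturality-along-$BG_U\to\mathrm{pt}$ arguments accomplish exactly the same thing.
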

We will prove these theorems by two different methods: an Atiyah-Hirzebruch spectral sequence calculation in \S\ref{ss:AHSS}, and an Adams spectral sequence calculation in \S\ref{ss:Adams}. We focus on the case $D\le 7$ to simplify the arguments. Adams and Atiyah-Hirzebruch spectral sequence calculations for the $D = 9$ cases of \cref{spin_lift_bordism,pinp_lift_bordism,bosonic_bord} appear in~\cite[\S A]{Dierigl:2020lai} and~\cite[\S\S 12--14]{Debray:2023yrs}. This leaves $D = 8$, which we briefly discuss at the end of \S\ref{ss:AHSS}.

\subsection{Atiyah-Hirzebruch Spectral Sequence Calculations}
\label{ss:AHSS}
The Atiyah-Hirzebruch spectral sequence for the reduced Spin bordism of a space $X$ has signature
\begin{equation}
\label{untwisted_AHSS}
E^2_{p,q} = \widetilde H_p(X; \Omega_q^\Spin(\pt)) \Longrightarrow \widetilde \Omega_{p+q}^\Spin(X).
\end{equation}

\begin{proof}[Proof of \cref{bosonic_bord} using the Atiyah-Hirzebruch spectral sequence]
Apply the Atiyah-Hirzebruch spectral sequence~\eqref{untwisted_AHSS} with $X = BG_U$, and recall $\Omega_0^\Spin(\pt)\cong\mathbb Z$ and $\Omega_1^\Spin(\pt)\cong\mathbb Z_2$. On the $E^2$-page, in total degree $1$, we have
\begin{subequations}
  \begin{align}
    E^2_{1,0} &\cong \widetilde H_1(BG_U;\Omega_0^\Spin(\pt)) \cong \widetilde H_1(BG_U;\mathbb Z) = \mathrm{Ab}\big[G_U\big]\\
    E^2_{0,1} &\cong \widetilde H_0(BG_U; \Omega_1^\Spin(\pt)) \cong \widetilde H_0(BG_U;\mathbb Z_2)\cong 0.
  \end{align}
\end{subequations}
For degree reasons, all differentials into or out of $E^2_{1,0}$ vanish, so it survives intact to the $E^\infty$-page. There is no extension problem, so $\widetilde\Omega_1^\Spin(BG_U)\cong \mathrm{Ab}\big[G_U\big]$. For unreduced Spin bordism, $\Omega_1^\Spin(BG_U)\cong\widetilde\Omega_1^\Spin(BG_U)\oplus\Omega_*^\Spin(\pt)$, so we direct-sum on $\Omega_1^\Spin(\pt)\cong\mathbb Z_2$.
\end{proof}
For the Spin- and Pin-lifts, we must use a twisted variant.
\begin{defn}[{Wang~\cite[Definition 8.2]{Wan08}}]
\label{twspin}
Let $X$ be a space and $w\in H^2(X;\mathbb Z_2)$. An $(X, w)$-twisted spin structure on an oriented vector bundle $E\to M$ is the data of a map $f\colon M\to X$ and a trivialization of $w_2(E) + f^*(w)$.\footnote{There is a more general notion of twisted spin structure allowing modifications of both $w_1$ and $w_2$: see Hebestreit-Joachim~\cite{HJ20} as well as~\cite[Definition 1.24]{Debray:2023tdd}. The James spectral sequence, and the formulas for its low-degree $d_2$s, both exist in this generality, as is proven in Kasprowski-Powell~\cite[Proposition 3.9]{KP25} following Teichner~\cite{Tei97}; see also~\cite[\S 5]{OP25}. The Adams spectral sequence we use in \S\ref{ss:Adams} for twisted Spin bordism also generalizes to this setting; see~\cite{Debray:2023tdd}.}
\end{defn}
We will let $\Omega_k^\Spin(X, w)$ denote the group of bordism classes of $k$-dimensional manifolds with $(X, w)$-twisted spin structures.
\begin{lem}[\!\!{\cite[\S 3.1]{Debray:2023tdd}}]
\label{twist_is_twist}
Let $w\in H^2(BG;\mathbb Z_2)$ be the cohomology class of the extension $1\to\mathbb Z_2\to\widetilde{G}\to G\to 1$. Then the notions of a $\Spin\text{-}\widetilde G$ structure and a $(BG, w)$-twisted spin structure on a vector bundle are canonically equivalent.
\end{lem}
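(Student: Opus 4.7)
The plan is to package both notions as lifts of a common classifying map through a single fiber sequence of classifying spaces. First I would observe that the quotient in the definition of $\Spin\text{-}\widetilde{G}$ gives a short exact sequence
\begin{equation}
1 \to \mathbb{Z}_2 \to \frac{\Spin(n)\times\widetilde{G}}{\mathbb{Z}_2} \to \SO(n)\times G \to 1,
\end{equation}
where the kernel $\mathbb{Z}_2$ is the residual image of either $(-1)^F\in\Spin(n)$ or of the central element of $\widetilde{G}$ after the diagonal has been quotiented. Tautologically, a $\Spin\text{-}\widetilde{G}$ structure on an oriented rank-$n$ vector bundle $E\to M$ is the data of a principal $G$-bundle $f\colon M \to BG$ together with a lift of the combined classifying map $(E, f)\colon M \to B(\SO(n)\times G) = B\SO(n)\times BG$ through $B((\Spin(n)\times\widetilde{G})/\mathbb{Z}_2)$.

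Next I would identify the characteristic class $\alpha \in H^2(B(\SO(n)\times G);\mathbb{Z}_2)$ of that extension. By naturality of the classifying class of a central extension under pushout of the kernel, $\alpha$ is the image of the class of the product extension
\begin{equation}
1 \to \mathbb{Z}_2 \times \mathbb{Z}_2 \to \Spin(n) \times \widetilde{G} \to \SO(n) \times G \to 1
\end{equation}
under the sum homomorphism $\mathbb{Z}_2 \times \mathbb{Z}_2 \to \mathbb{Z}_2$. The product extension is classified by the pair $(w_2, w)$, so $\alpha = w_2 + w$. Delooping once gives the homotopy fiber sequence
\begin{equation}
B\!\left(\tfrac{\Spin(n)\times\widetilde{G}}{\mathbb{Z}_2}\right) \to B(\SO(n)\times G) \xrightarrow{w_2 + w} K(\mathbb{Z}_2, 2),
\end{equation}
so a lift of $(E, f)$ through the left map is exactly a nullhomotopy of $(w_2 + w)\circ(E, f) = w_2(E) + f^*(w)$, which by \cref{twspin} is precisely the data of a $(BG, w)$-twisted spin structure on $E$.

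The main obstacle is making ``canonical'' precise: the bijection just constructed should actually be an equivalence of groupoids of structures, natural in $(M, E, f)$. The check amounts to verifying that the two $H^1(M;\mathbb{Z}_2)$-torsor structures on the sets of equivalent structures agree under the identification. On the $\Spin\text{-}\widetilde{G}$ side, the lifts form a torsor for $H^1(M;\mathbb{Z}_2) = [M, B\mathbb{Z}_2]$ via the fiber of the sequence above; on the twisted spin side, nullhomotopies of a fixed cocycle are a torsor for the same group. Both torsor actions descend from the principal $B\mathbb{Z}_2$-bundle $B((\Spin(n)\times\widetilde{G})/\mathbb{Z}_2) \to B(\SO(n)\times G)$, so they agree, and the equivalence is canonical and functorial in $(M, E, f)$.
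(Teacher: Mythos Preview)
The paper does not give its own proof of this lemma; it merely cites it as~\cite[\S 3.1]{Debray:2023tdd}. Your argument is correct and is essentially the standard one: identify the extension class of $(\Spin(n)\times\widetilde G)/\mathbb Z_2 \to \SO(n)\times G$ as $w_2+w$ via the pushout along the sum $\mathbb Z_2\times\mathbb Z_2\to\mathbb Z_2$, deloop to a fiber sequence, and read off that lifts of $(E,f)$ are the same as trivializations of $w_2(E)+f^*(w)$. The torsor comparison you sketch is the right way to upgrade the bijection to a canonical equivalence, and the point that both torsor actions come from the same principal $B\mathbb Z_2$-bundle over $B(\SO(n)\times G)$ is exactly what makes it go through.
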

Thus, $\Omega_*^{\Spin\text{-}\widetilde{G_U}}\cong \Omega_*^\Spin(BG_U, w)$ and $\Omega_*^{\Spin\text{-}\widetilde{G_U}^+}\cong \Omega_*^\Spin(B(G_U\rtimes\mathbb Z_2^R), w)$. Here we are implicitly using that the class $w\in H^2(BG_U;\mathbb Z_2)$ is invariant under the $\mathbb Z_2^R$-action, therefore passes through the LHS spectral sequence to define a class in $H^2(B(G_U\rtimes\mathbb Z_2^R);\mathbb Z_2)$, which we also call $w$. Alternatively, the Spin double cover $\widetilde{G_U}\to G_U$ extends to the $\Pin^+$ double cover $\widetilde{G_U}^+\to G_U\rtimes\mathbb Z_2^R$, so the cohomology class $w$ classifying it is the restriction of a class $w\in H^2(B(G_U\rtimes\mathbb Z_2^R);\mathbb Z_2)$. When we write ``$w$,'' it will always be clear from context which of these two classes we mean.\footnote{In addition to the $\Pin^+$ lift of $G_U\rtimes\mathbb Z_2^R$, there is also a $\Pin^-$ lift $\widetilde{G_U}{}^-$, classified by $w + r^2\in H^2(B(G_U\rtimes\mathbb Z_2);\mathbb Z_2)$. Here $r$ is the pullback of the unique nonzero class in $H^1(B\mathbb Z_2^R;\mathbb Z_2)$ by the quotient map $G_U\rtimes\mathbb Z_2^R\to\mathbb Z_2^R$.}
\begin{thm}[{Teichner~\cite[Proposition 1]{Tei93}}]
\label{jamesSS}
With $(X, w)$ as in \cref{twspin}, there is a spectral sequence with signature
\begin{subequations}
\label{jamessig}
    \begin{equation}\label{jamesE2}
        E^2_{p,q} = H_p(X; \Omega_q^\Spin(\pt)) \Longrightarrow \Omega_{p+q}^\Spin(X, w).
    \end{equation}
    such that, at least for $2\le p\le 4$, the differential $d_2\colon E^2_{p,0}\to E^2_{p-2,1}$, as a map $H_p(X; \mathbb Z)\to H_{p-2}(X;\mathbb Z_2)$, is reduction modulo $2$ followed by the dual of the map
    \begin{equation}
    \label{Sq2w}
      \begin{aligned}
        \Sq^2_w \colon H^{p-2}(X;\mathbb Z_2) &\to H^p(X;\mathbb Z_2)\\
                 x &\mapsto \Sq^2x + wx.
      \end{aligned}
    \end{equation}
\end{subequations}
\end{thm}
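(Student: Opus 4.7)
The plan is to realize $(X, w)$-twisted spin bordism as the homotopy groups of a Thom spectrum, apply the Atiyah--Hirzebruch spectral sequence for that spectrum, and identify the low-degree $d_2$ via the Wu formula relating Steenrod operations to the Thom isomorphism. First, pick a rank-zero stable virtual vector bundle $V\to X$ with $w_1(V) = 0$ and $w_2(V) = w$; such a $V$ exists because the classifying map $w\colon X\to B^2\mathbb{Z}_2$ lifts stably through $BSO\to B^2\mathbb{Z}_2$ in the range of interest (the higher obstructions do not affect the low-degree story). A $(X,w)$-twisted spin structure on $TM\to M$ equipped with $f\colon M\to X$ is then equivalent to a stable spin structure on $TM\oplus f^*V$, since both amount to a trivialization of $w_2(TM) + f^*w$. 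A standard Pontryagin--Thom argument identifies $\Omega_n^{\Spin}(X,w)$ with $\pi_n(X^V\wedge M\Spin)$, where $X^V$ denotes the Thom spectrum of $V$.

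Next, apply the Atiyah--Hirzebruch spectral sequence for $X^V\wedge M\Spin$,
\[ E^2_{p,q} = H_p(X^V;\pi_q M\Spin) \Longrightarrow \pi_{p+q}(X^V\wedge M\Spin), \]
and use the Thom isomorphism $\Phi\colon H_p(X;\mathbb{Z})\xrightarrow{\sim} H_p(X^V;\mathbb{Z})$ (available because $w_1(V)=0$, and similarly with mod-$2$ coefficients) to identify the $E^2$-page with $H_p(X;\Omega_q^{\Spin}(\pt))$, producing the stated signature \eqref{jamesE2}.

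The heart of the argument is the identification of $d_2$ in the range $2\le p\le 4$. Here the only relevant homotopy groups of $M\Spin$ are $\pi_0\cong\mathbb{Z}$ and $\pi_1\cong\mathbb{Z}_2$ (using $\pi_2 M\Spin = 0$), and the first $k$-invariant of the Postnikov tower of $M\Spin$ connecting them is $\Sq^2\colon H\mathbb{Z}\to \Sigma^2 H\mathbb{Z}_2$. Consequently the cohomological $d_2\colon H^{p-2}(X^V;\mathbb{Z}_2)\to H^p(X^V;\mathbb{Z}_2)$ is just $\Sq^2$. The Wu formula for Steenrod operations on the Thom class,
\[ \Sq^2(\Phi(x)) = \Phi\bigl(\Sq^2 x + w_1(V)\cup\Sq^1 x + w_2(V)\cup x\bigr) = \Phi(\Sq^2 x + wx), \]
then transports this, via the Thom isomorphism, to the cohomological operation $\Sq^2_w(x) = \Sq^2 x + wx$ of \eqref{Sq2w}. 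Dualizing via universal coefficients, and using that the homological $d_2$ on $H_p(X;\mathbb{Z})$ lands in $H_{p-2}(X;\mathbb{Z}_2)$, yields the formula stated in \cref{jamesSS}: reduction modulo $2$ followed by the dual of $\Sq^2_w$.

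The main obstacle is the third step: justifying that only the first $k$-invariant $\Sq^2$ of $M\Spin$ contributes to $d_2$ in the range $2\le p\le 4$, and checking that the identification of the cohomological $d_2$ with $\Sq^2$ survives passage to the spectrum $X^V\wedge M\Spin$ with all signs and Thom-class conventions correct. This relies on the classical low-degree Postnikov analysis of $M\Spin$ and on the compatibility of the Atiyah--Hirzebruch differential with smashing by $X^V$. Outside the claimed range, further $k$-invariants of $M\Spin$ and the groups $\pi_q M\Spin$ for $q\ge 2$ intervene, so $d_2$ need not take the clean form $\Sq^2 + w\cup$; this is precisely the reason for the hypothesis $2\le p\le 4$ in the statement.
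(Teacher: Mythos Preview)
The paper does not give its own proof of this theorem; it is cited from Teichner. However, the paper's commentary immediately following the statement is directly relevant to your approach, and in fact anticipates its main weakness.

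Your first step posits a stable virtual oriented bundle $V\to X$ with $w_2(V)=w$, then runs the Atiyah--Hirzebruch spectral sequence for the Thom spectrum $X^V\wedge M\Spin$. The paper explicitly warns that this is not always available: ``It is not always possible to realize every degree-$2$ cohomology class $w$ as $w_2$ of a vector bundle~[GKT89, \S 2], and indeed this issue can occur when $w$ is the extension class for the Spin-lift of a real U-duality group $G_U(\mathbb R)$~[DY24, Theorem~4.2], so the extra generality of the James spectral sequence is necessary.'' In other words, Teichner's construction is \emph{not} via a Thom spectrum; the Thom-spectrum identification is a special case that Teichner establishes separately, and which does not cover the examples the paper cares about. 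Your argument therefore proves only this special case.

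Your parenthetical ``the higher obstructions do not affect the low-degree story'' gestures at a possible repair: restrict to the $4$-skeleton $X^{(4)}$, where the first obstruction to lifting $w$ through $BSO\to K(\mathbb Z_2,2)$ (lying in $H^5(X;\pi_4 BSpin)$) vanishes, so a suitable $V$ exists there; then argue that both the spectral sequence and the $d_2$ for $p\le 4$ depend only on this skeleton. This can be made to work, but it is genuine content that you have not supplied: you must verify that $\Omega_*^{\Spin}(X,w)$ and $\Omega_*^{\Spin}(X^{(4)},w|_{X^{(4)}})$ agree through the relevant range, that the James and Thom-AHSS spectral sequences match there, and that the Wu-formula identification of $d_2$ survives the truncation. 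Without those checks, the proof has a gap precisely at the point the paper singles out as delicate. The remainder of your argument (Postnikov identification of the $M\Spin$ $k$-invariant with $\Sq^2$, the Wu formula transporting $\Sq^2$ across the Thom isomorphism to $\Sq^2_w$, and the dualization) is correct in the presence of $V$.
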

``Dual'' in \cref{jamesSS} means: the universal coefficient theorem shows that the cap product pairing canonically identifies $H_k(X;\mathbb Z_2)$ and $H^k(X;\mathbb Z_2)$ as dual $\mathbb Z_2$-vector spaces, and we take the dual of the linear map $\Sq^2_w$.

Teichner calls the spectral sequence~\eqref{jamessig} the James spectral sequence. See~\cite{Tei93, Boh03, Olb08, Ped17} for some example computations with this spectral sequence similar to those in this paper.

Teichner shows that if there is an orientable vector bundle $V\to X$ with $w_2(V) = w$, then the James spectral sequence is isomorphic to the Atiyah-Hirzebruch spectral sequence for the Spin bordism of the Thom spectrum $X^{V - \mathrm{rank}(V)}$ (that this is $\Omega_*^\Spin(X, w_2(V))$-twisted Spin bordism is a folk theorem, with one proof given in \cite[Corollary 10.19]{Debray:2023yrs}). It is not always possible to realize every degree-$2$ cohomology class $w$ as $w_2$ of a vector bundle~\cite[\S 2]{GKT89}, and indeed this issue can occur when $w$ is the extension class for the Spin-lift of a real U-duality group $G_U(\mathbb R)$~\cite[Theorem 4.2]{DY24}, so the extra generality of the James spectral sequence is necessary.\footnote{More general twisted Atiyah-Hirzebruch spectral sequences have been constructed and discussed in~\cite{MS06, BN19, GS17, GS19a, GS19, BM21, GS22}, but the differentials we need in the case of twisted Spin bordism have not been computed, so we use the James spectral sequence.}
\begin{proof}[Proof of \cref{spin_lift_bordism} for $D\le 7$ using the James spectral sequence]
We begin by drawing the $E^2$-page of this spectral sequence~\eqref{jamesE2} for $(BG_b, w)$-twisted Spin bordism, which by \cref{twist_is_twist} also computed $\Spin\text{-}\widetilde{G_U}$ bordism. Since $D\le 7$, $G_U$ is perfect and so $H_1(BG_U;\mathbb Z)$ and $H_1(BG_U;\mathbb Z_2)$ both vanish.
\begin{equation}
   \begin{array}{c | c c c}
   1 & \mathbb{Z}_2 & 0 & H_2 \big( BG_U; \mathbb{Z}_2\big)  \\
   0 & \mathbb{Z} & 0 & H_2\big( BG_U; \mathbb{Z}\big).  \\ \hline
   q/p & 0 & 1 & 2
   \end{array}
\end{equation}
The only nonzero group in total degree $1$ is $E^2_{0,1}\cong\mathbb Z_2$, and this is the target of $d_2\colon E^2_{2,0}\to E^2_{0,1}$. So if we can show that this $d_2$ is nonzero, we are done. In \cref{jamesSS}, we learned that $d_2 = (\Sq^2_w)^\vee\circ r$, where $r$ is reduction mod $2$; we will show $(\Sq^2_w)^\vee$ and $r$ are both surjective, which implies their composition is nonzero.

For $r$, surjectivity follows from the Bockstein long exact sequence
\begin{equation}
    \dotsb\to H_2(BG_U; \mathbb Z) \overset r\to H_2(BG_U;\mathbb Z_2) \to H_1(BG_U;\mathbb Z) = 0\to\dotsb
\end{equation}
The codomain of $(\Sq^2_w)^\vee$ is isomorphic to $\mathbb Z_2$, so this map is surjective if and only if it is nonzero, which is true if and only if its dual $\Sq_w^2\colon H^0(BG_U;\mathbb Z_2)\to H^2(BG_U;\mathbb Z_2)$ is nonzero. And indeed:
\begin{equation}
\label{sq2w1}
    \Sq_w^2(1) = \Sq^2(1) + w\cdot 1 = 0 +w = w,
\end{equation}
as $\Sq^2$ vanishes in degree $0$ of any space. Since $\widetilde{G_U}\to G_U$ is a nonsplit extension, $w\ne 0$ and so we are done.
\end{proof}
\begin{proof}[Proof of \cref{pinp_lift_bordism} for $D\le 7$ using the James spectral sequence]
The proof for the $\Pin^+$-lift is similar. By \cref{twist_is_twist}, we want to calculate $(B(G_U\rtimes\mathbb Z_2^R), w)$-twisted Spin bordism. To draw the $E^2$-page, we recall from \cref{refl_coh} that, since $D\le 7$, $H_1(BG_U;\mathbb Z)$ and $H_1(BG_U;\mathbb Z_2)$ are both isomorphic to $\mathbb Z_2$.
\begin{equation}
   \begin{array}{c | c c c}
   1 & \mathbb{Z}_2 & \mathbb Z_2& H_2 \big( B(G_U\rtimes \mathbb Z_2^R); \mathbb{Z}_2\big)  \\
   0 & \mathbb{Z} & \mathbb Z_2 & H_2\big( B(G_U\rtimes \mathbb Z_2^R); \mathbb{Z}\big).  \\ \hline
   q/p & 0 & 1 & 2
   \end{array}
\end{equation}
Once again we are done if we can show that $d_2\colon E^2_{2,0}\to E^2_{0,1}$ is nonzero; this kills $E^2_{0,1}$, and the $\mathbb Z_2$ in $E^2_{1,0}$ survives to the $E^\infty$-page for degree reasons, so this would imply that in total degree $1$ on the $E^\infty$-page, there is a single $\mathbb Z_2$ summand and no other nonzero classes, which would finish the proof.

We will compute this differential by comparing it with the differential for $(BG_U, w)$-twisted Spin bordism. Specifically, because the inclusion map $j\colon G_U\to G_U\rtimes\mathbb Z_2^R$ pulls $w\in H^2(B(G_U\rtimes\mathbb Z_2^R);\mathbb Z_2)$ back to $w\in H^2(BG_U;\mathbb Z_2)$, there is a map of James spectral sequences commuting with differentials, analogous to the map of LHS spectral sequences that we used in the proof of \cref{pinp_LHS}.

Specifically, if we let $E^r_{p,q}$ denote the James spectral sequence for $(BG_U, w)$ and ${}^+E^r_{p,q}$ denote the James spectral sequence for $(B(G_U\rtimes\mathbb Z_2^R), w)$, then we have a commutative diagram
\begin{equation}
\begin{tikzcd}
	{E^2_{2,0} = H_2(BG_U;\mathbb Z)} & {} & {{}^+E^2_{2,0} = H_2(BG_U;\mathbb Z)} \\
	{E^2_{0,1} = H_0(BG_U;\mathbb Z_2)} && {{}^+E^2_{0,1} = H_0(B(G_U\rtimes\mathbb Z_2^R);\mathbb Z_2)} \\
	& {\mathbb Z_2}
	\arrow["{j_*}", from=1-1, to=1-3]
	\arrow["{d_2}", two heads, from=1-1, to=2-1]
	\arrow["{d_2}", from=1-3, to=2-3]
	\arrow["{j_*}", from=2-1, to=2-3]
	\arrow["\cong"{description}, from=2-1, to=3-2]
	\arrow["\cong"{description}, from=2-3, to=3-2]
\end{tikzcd}
\end{equation}
Since both classifying spaces are connected, the pushforward map $j_*$ on $H_0$ is an isomorphism. We proved in \cref{refl_coh} that $j_*$ is also an isomorphism on $H_2$. In the proof of \cref{spin_lift_bordism} by the James spectral sequence earlier in this subsection, we showed that the left-hand vertical map, the $d_2$ for $(BG_U, w)$, is surjective. Commutativity thus implies the right-hand vertical map, which is the $d_2$ of interest, is also surjective. As noted above, this finishes the proof.
\end{proof}
Thus only dimensions $D = 8,9$ are left. These fall to the James spectral sequence in a similar manner; we highlight a few differences and leave the details to the interested reader. For the Spin lifts:
\begin{itemize}
    \item In these examples, $d_2\colon E^2_{2,0}\to E^2_{0,1}$ is $0$, so in total degree $1$, the $E^\infty$ page consists of a $\mathbb Z_2$ in $E^2_{0,1}$ and a $\mathbb Z_{12} = \mathrm{Ab}\big[G_U\big]$ in $E^\infty_{1,0}$.
    \item There is a hidden extension joining $E^\infty_{1,0}$ and $E^\infty_{0,1}$. This can be seen by embedding $i\colon \mathbb Z_4\hookrightarrow\SL(2, \mathbb Z)$ (or into $\SL(2, \mathbb Z)\times\SL(3, \mathbb Z)$) and using the map of spectral sequences. The presence of a non-trivial hidden extension in this degree in $(B\mathbb Z_4, i^*w)$-twisted Spin bordism follows from~\cite[\S 13.4]{Debray:2023yrs}.
\end{itemize}
For the $\Pin^+$-lifts, the story is similar: $E^2_{1,0}\cong\mathbb Z_2$, and again the differential vanishes. This time, the extension splits, as can be seen by embedding $D_{16}\hookrightarrow \SL(2, \mathbb Z)\rtimes\mathbb Z_2^R$ or $(\SL(2, \mathbb Z)\times\SL(3, \mathbb Z))\rtimes\mathbb Z_2^R$ and comparing with~\cite[Theorem 14.18]{Debray:2023yrs}.

\subsection{Adams Spectral Sequence Calculations}
\label{ss:Adams}\label{app:AdamsSpin}
Lastly, we use the Adams spectral sequence to compute the one-dimensional $\Spin\text{-}\widetilde{G_U}$ and $\Spin\text{-}\widetilde{G_U}^+$ bordism groups. This tool is more abstract than the Atiyah-Hirzebruch spectral sequence, but in the last several years has become more prominent in the theoretical physics literature as a powerful yet tractable way to compute twisted Spin bordism groups. See~\cite{beaudry2018guidecomputingstablehomotopy,Debray:2023yrs} for introductions to this technique aimed at a mathematical physics audience as well as several example computations.

The references cited apply the Adams spectral sequence to the $(X, w)$-twisted Spin bordism under the assumption that there is an oriented vector bundle $V\to X$ with $w_2(V) = w$. As we noted above in \S\ref{ss:AHSS}, not only is this not true in general, there are specific counterexamples for the Spin-lifts of the real versions of U-duality groups. Thus, following~\cite{Debray:2023tdd}, we use Baker-Lazarev's relative Adams spectral sequence~\cite{BL01} in this section. See~\cite{Debray:2023tdd, Kur25} for example computations with this version of the Adams spectral sequence.

Let $\cA(1)$ be the subalgebra of the Steenrod algebra generated by $\Sq^1$ and $\Sq^2$. Since $\Sq^1$ and $\Sq^2$ act naturally on mod $2$ cohomology groups, the mod $2$ cohomology of any space is naturally an $\cA(1)$-module.
\begin{lem}[{\!\!\cite[Lemma 2.27(3)]{Debray:2023tdd}}]
\label{BLA}
Let $X$ be a space, $w\in H^2(X;\mathbb Z_2)$, and $\Sq_w^2$ be the operator defined in~\eqref{Sq2w}. Then the actions of $\Sq^1$ and $\Sq_w^2$ on $H^*(X;\mathbb Z_2)$ satisfy the Adem relations for $\Sq^1$ and $\Sq^2$, and therefore define another $\cA(1)$-module structure on $H^*(X;\mathbb Z_2)$ which we call $H_w^*(X;\mathbb Z_2)$.
\end{lem}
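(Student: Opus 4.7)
The plan is to verify directly the two defining Adem relations of $\cA(1)$, namely $\Sq^1 \Sq^1 = 0$ and $\Sq^2 \Sq^2 = \Sq^1 \Sq^2 \Sq^1$, for the pair $(\Sq^1, \Sq_w^2)$ acting on $H^*(X; \mathbb Z_2)$. The first is automatic, since $\Sq^1$ is unchanged and $(\Sq^1)^2 = 0$ already holds in the ordinary Steenrod action. So all the content is concentrated in the second relation.

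To verify $\Sq_w^2 \Sq_w^2 = \Sq^1 \Sq_w^2 \Sq^1$, I would compute both sides by hand using only (i) the definition $\Sq_w^2(x) = \Sq^2 x + wx$, (ii) the Cartan formula, (iii) the instability identity $\Sq^2 w = w^2$ valid since $|w| = 2$, and (iv) $(\Sq^1)^2 = 0$. Expanding the left-hand side gives
\begin{equation}
\Sq_w^2 \Sq_w^2(x) = \Sq^2 \Sq^2(x) + \Sq^2(w x) + w \Sq^2(x) + w^2 x,
\end{equation}
and Cartan plus instability produce $\Sq^2(w x) = w^2 x + (\Sq^1 w)(\Sq^1 x) + w\, \Sq^2 x$. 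The terms $w\Sq^2 x$ cancel pairwise and the terms $w^2 x$ cancel pairwise, leaving
\begin{equation}
\Sq_w^2 \Sq_w^2(x) = \Sq^2 \Sq^2(x) + (\Sq^1 w)(\Sq^1 x).
\end{equation}
An entirely parallel expansion of the right-hand side, using $(\Sq^1)^2 = 0$ to kill $w \Sq^1 \Sq^1 x$, yields $\Sq^1 \Sq_w^2 \Sq^1(x) = \Sq^1 \Sq^2 \Sq^1(x) + (\Sq^1 w)(\Sq^1 x)$. The twist-dependent contributions match on the nose, and the untwisted parts agree by the classical Adem relation $\Sq^2 \Sq^2 = \Sq^1 \Sq^2 \Sq^1$ in $\cA$.

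Once both relations are in hand, I would invoke the universal property of $\cA(1)$ as the unital associative $\mathbb F_2$-algebra freely generated by symbols $\Sq^1, \Sq^2$ modulo these Adem relations. That universal property upgrades the assignment $\Sq^1 \mapsto \Sq^1,\ \Sq^2 \mapsto \Sq_w^2$ to an algebra homomorphism $\cA(1) \to \mathrm{End}_{\mathbb F_2}\bigl(H^*(X; \mathbb Z_2)\bigr)$, which is precisely the $\cA(1)$-module structure $H_w^*(X; \mathbb Z_2)$ asserted in the lemma.

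The main obstacle is purely organizational: keeping track of the Cartan expansion terms and deploying instability at the right step, so that the spurious $w^2 x$ and $w \Sq^2 x$ contributions cancel in pairs on the left and the $w \Sq^1 \Sq^1 x$ contribution disappears on the right. Once one observes that the sole surviving twist-dependent term on both sides is the symmetric $(\Sq^1 w)(\Sq^1 x)$, the lemma collapses to the corresponding identity in the honest Steenrod algebra.
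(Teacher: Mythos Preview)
Your proof is correct. The paper does not prove this lemma at all: it is stated with a citation to \cite[Lemma 2.27(3)]{Debray:2023tdd} and used as a black box. Your direct verification of the two Adem relations defining $\cA(1)$ --- namely $(\Sq^1)^2 = 0$ and $\Sq^2\Sq^2 = \Sq^1\Sq^2\Sq^1$ --- is the natural elementary argument, and your Cartan-formula bookkeeping is accurate: both sides of the second relation reduce to their untwisted counterparts plus the single cross-term $(\Sq^1 w)(\Sq^1 x)$, so the twisted relation follows from the classical one. These two relations do suffice to present $\cA(1)$ as an $\mathbb F_2$-algebra (one checks that the quotient of the free algebra by them is $8$-dimensional), so your appeal to the universal property is legitimate.
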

That is: if we modify the action of $\Sq^2$ on $H^*(X;\mathbb Z_2)$ by having it act by $\Sq_w^2$ instead, the result is still a well-defined $\cA(1)$-module, and we call that $\cA(1)$-module $H_w^*(X;\mathbb Z_2)$.
\begin{thm}[{\!\!\cite[\S 2.2]{Debray:2023tdd}}]
Let $X$ be a space and $w\in H^2(X;\mathbb Z_2)$. Then there is a graded $\cA(1)$-module $M$ and a spectral sequence
\begin{equation}
    E_2^{s,t} = \mathrm{Ext}_{\cA(1)}^{s,t}(H_w^*(X;\mathbb Z_2)\otimes M, \mathbb Z_2) \Longrightarrow \Omega_{t-s}^\Spin(X, w) _2^\wedge.
\end{equation}
There is an $\cA(1)$-module map $M\to\mathbb Z_2$ which is an isomorphism in degrees $7$ and below.
\end{thm}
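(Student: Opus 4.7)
The plan is to apply the classical mod-$2$ Adams spectral sequence to the Madsen--Tillmann spectrum $MT\xi$ whose stable homotopy groups compute $\Omega_*^{\Spin}(X,w)$, and then execute a change-of-rings argument based on the Anderson--Brown--Peterson (ABP) splitting of $H^*(\mathrm{MSpin};\mathbb{Z}_2)$ to pass from the full Steenrod algebra $\cA$ down to $\cA(1)$. First I would realize $(X,w)$-twisted Spin bordism as the homotopy of a Thom spectrum: the tangential structure of \cref{twspin} is classified by a fibration $B\xi\to BSO$ equipped with a reference map to $X$, and $MT\xi$ is its associated Madsen--Tillmann spectrum, so Pontryagin--Thom yields $\pi_*(MT\xi)\cong \Omega_*^{\Spin}(X,w)$. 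The standard Adams spectral sequence for $MT\xi$ then reads
\begin{equation*}
E_2^{s,t}=\mathrm{Ext}_{\cA}^{s,t}\!\bigl(H^*(MT\xi;\mathbb{Z}_2),\mathbb{Z}_2\bigr)\Longrightarrow \pi_{t-s}(MT\xi)_2^\wedge,
\end{equation*}
and the task is to massage the $E_2$-page into the claimed shape.

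By the Thom isomorphism one has $H^*(MT\xi;\mathbb{Z}_2)\cong H^*(\mathrm{MSpin};\mathbb{Z}_2)\otimes_{\mathbb{Z}_2}H^*(X;\mathbb{Z}_2)$ as graded vector spaces, but with a twisted $\cA$-action: Wu's formula expressing $\Sq^i$ on a Thom class via Stiefel--Whitney classes, combined with the $w$-shift built into $\xi$, modifies the $\Sq^2$-action on the $X$-tensor factor to $\Sq^2+w\cdot(-)$. This is precisely the $\cA(1)$-module structure $H_w^*(X;\mathbb{Z}_2)$ singled out in \cref{BLA}. Next I would invoke the ABP splitting, which provides a graded $\cA(1)$-module $M$ together with an isomorphism of $\cA$-modules $H^*(\mathrm{MSpin};\mathbb{Z}_2)\cong \cA\otimes_{\cA(1)}M$, such that $M$ agrees with the trivial $\cA(1)$-module $\mathbb{Z}_2$ in degree $0$ and its remaining $\cA(1)$-summands are concentrated in degree $\ge 8$. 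Combining these two facts gives
\begin{equation*}
H^*(MT\xi;\mathbb{Z}_2)\cong \cA\otimes_{\cA(1)}\!\bigl(M\otimes H_w^*(X;\mathbb{Z}_2)\bigr),
\end{equation*}
and the standard change-of-rings isomorphism $\mathrm{Ext}_{\cA}(\cA\otimes_{\cA(1)}N,\mathbb{Z}_2)\cong \mathrm{Ext}_{\cA(1)}(N,\mathbb{Z}_2)$ rewrites the Adams $E_2$-page exactly as stated. The $\cA(1)$-module map $M\to\mathbb{Z}_2$ is the projection onto the lowest ABP summand, which is an isomorphism through degree $7$ by the degree bound just noted.

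The hard step will be checking the compatibility of the ABP splitting with the twisted Thom class: namely, that the $w$-correction modifies only the $\Sq^2$-action on the $H_w^*(X;\mathbb{Z}_2)$ tensor factor and does not leak into the $\cA(1)$-module structure on $M$. One route is to track this at the level of cellular filtrations by repeated application of Wu's formula, but a cleaner route, and the one taken in the cited Baker--Lazarev/Debray approach, is to bypass the ABP splitting entirely and set up the relative Adams spectral sequence over $R=\mathrm{MSpin}$ directly: the relative Steenrod algebra $\pi_*(H\mathbb{Z}_2\wedge_R H\mathbb{Z}_2)$ is dual to $\cA(1)$ in the range where $\mathrm{MSpin}\to ko$ is a homotopy equivalence on low-degree homotopy, the twisted Thom class is automatically an $R$-module map by construction, and the module $M$ is recovered a posteriori from the relative mod-$2$ homology of the unit $S^0\to R$. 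Once the $E_2$-page is identified in either form, the convergence to $\Omega_{t-s}^{\Spin}(X,w)_2^\wedge$ is standard Adams theory provided $X$ has homology of finite type, which is the case in all applications of interest here.
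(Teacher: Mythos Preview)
The paper does not prove this theorem; it is quoted from \cite[\S 2.2]{Debray:2023tdd} and used as a black-box input to the computations in \S\ref{ss:Adams}. There is thus no proof in the paper to compare against.

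That said, your sketch is a faithful outline of how the cited reference proceeds. You correctly identify the two routes---the classical Adams spectral sequence for the Madsen--Tillmann spectrum combined with the Anderson--Brown--Peterson splitting of $H^*(\mathrm{MSpin};\mathbb Z_2)$, versus the Baker--Lazarev relative Adams spectral sequence---and you correctly flag the compatibility of the ABP decomposition with the $w$-twisted Thom class as the delicate step in the first route. This is precisely why the paper (and the reference) opts for the relative Adams framework: as noted in the paragraph preceding the theorem, the class $w$ need not be $w_2$ of any vector bundle, so one cannot in general write $MT\xi\simeq \mathrm{MSpin}\wedge X^{V-\mathrm{rank}(V)}$ and apply the ordinary Thom isomorphism; the relative construction sidesteps this entirely.

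One small wrinkle in your description of the second route: the identification with $\mathrm{Ext}_{\cA(1)}$ comes from working relative to $ko$, whose relative Steenrod algebra is genuinely (dual to) $\cA(1)$, not merely ``in a range.'' The module $M$ then records the discrepancy between $\mathrm{MSpin}$ and $ko$, and the ABP splitting is what guarantees $M\to\mathbb Z_2$ is an isomorphism through degree~$7$. If instead one sets $R=\mathrm{MSpin}$ as you wrote, the relative Steenrod algebra is larger than $\cA(1)$ and the statement would take a different form. This is a minor bookkeeping point; the substance of your argument is correct.
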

Here $\mathrm{Ext}$ is the derived functor of $\mathrm{Hom}$ (see~\cite[\S 11.2]{Debray:2023yrs}) and $(\text{--})_2^\wedge$ denotes ``$2$-completion.'' For the spaces we consider in this paper, whose Spin bordism groups are finitely generated Abelian groups, $2$-completion can heuristically be thought of as keeping the free and $2$-torsion summands and throwing out the odd-torsion summands. Thus in this subsection we will do something else to account for odd-primary torsion.

For any space $X$ and class $w\in H^2(X;\mathbb Z_2)$, the map $\phi\colon \Omega_*^\Spin(X, w)\to\Omega_*^\SO(X)$ is an isomorphism after localizing at any odd prime. This means that $\phi$ is an isomorphism on odd-torsion subgroups. When $w = 0$, this is a standard fact (see, e.g., \cite[\S 10.5]{Debray:2023yrs}); for general $w$ it is less well-known but still a folklore theorem. See~\cite[Proof of Lemma 3.23]{DYY25} for a proof.

There is a map $\psi\colon \Omega_*^\SO(X)\to H_*(X;\mathbb Z)$ obtained by sending an oriented manifold $M$ with map $f\colon M\to X$ to $f_*([M])$, and Thom's computation of $\Omega_*^\SO$ in low degrees~\cite[Théorème IV.13]{ThomThesis} implies $\psi$ is an isomorphism in degrees $3$ and below. Therefore to compute the odd-primary torsion in $\Omega_1^{\Spin\text{-}\widetilde{G_U}}$ and $\Omega_1^{\Spin\text{-}\widetilde{G_U}^+}$, it suffices to know $H_1(BG_U;\mathbb Z)$ and $H_1(B(G_U\rtimes \mathbb Z_2^R), \mathbb Z)$, i.e.\ the Abelianizations of these groups, which we discussed above, e.g.\ in \cref{refl_coh}. Thus in what follows we will ignore odd torsion.

Before we get into the proofs of \cref{spin_lift_bordism,pinp_lift_bordism,bosonic_bord}, we have one more simplification to discuss. To run the Adams spectral sequence, we need to compute the Ext groups of $H_w^*(BG_U;\mathbb Z_2)$ and $H_w^*(B(G_U\rtimes\mathbb Z_2^R);\mathbb Z_2)$, but we know very little about these cohomology groups in degrees $2$ and above. Fortunately, graded Ext is ``local'' in the sense that low-degree information in cohomology completely determines Ext in low topological degrees. We have two points of view on this phenomenon.
\begin{enumerate}
    \item Suppose you want to compute $\mathrm{Ext}_{\cA(1)}^{s,t}(N, \mathbb Z_2)$ with a minimal $\cA(1)$-module resolution $P_\bullet\to N$ (see~\cite[\S 4.4]{beaudry2018guidecomputingstablehomotopy}). In practice, if one only knows the structure of $N$ in degrees $d$ and below, a minimal resolution can be constructed explicitly in degrees $t-s<d$, simply by trying to work out the minimal resolution and stopping once higher-degree information on $N$ is necessary. This technique works because of a theoretical guarantee that, with one exception we discuss in a moment, if $N$ is concentrated in nonnegative degrees, the lowest-degree class in $P_s$, the $s^{\mathrm{th}}$ step of the minimal resolution, is approximately $3s$~\cite{Liu63, Ada64}. The exception is $h_0$-towers, which are not hard to recognize in a minimal resolution and so can be accounted for (see~\cite[Example 4.4.2]{beaudry2018guidecomputingstablehomotopy}).
    \item Alternatively, we can use a long exact sequence in Ext associated to a short exact sequence of $\mathcal{A}(1)$ modules, as in~\cite[\S 4.6]{beaudry2018guidecomputingstablehomotopy}. Let $N$ be an $\mathcal{A}(1)$ module and $N_{>\ell}$ the submodule of $N$ consisting of elements in degrees $\ell +1$ and above. Then there is a short exact sequence
\begin{equation}
    0 \rightarrow N_{>\ell} \rightarrow N \rightarrow N/N_{>\ell} \rightarrow 0 \,,
\end{equation}
which induces a long exact sequence in Ext. Since $N_{>\ell}$ is only concentrated in degree $\ell+1$ and above, $\mathrm{Ext}_{\cA(1)}(N_{>\ell})$ is concentrated in degree $t-s \geq \ell +1$. Exactness then implies that
\begin{equation}
    \mathrm{Ext}_{\cA(1)}(N) \rightarrow \mathrm{Ext}_{\cA(1)}(N/N_{>\ell}) \,
\end{equation}
is an isomorphism for $t-s \leq \ell$. In our approach $\ell = 1$ due to our limited knowledge of the group homology of the U-duality groups.
\end{enumerate}
Thus in particular, we may completely ignore the $\cA(1)$-module $M$ appearing in \cref{BLA}, as it cannot affect the behavior of the spectral sequence in degrees $6$ and below.\footnote{In fact, because the Anderson-Brown-Peterson splitting of Spin bordism~\cite{ABP67} generalizes to $(X, w)$-twisted Spin bordism~\cite[\S 5.2]{HJ20}, this extends to degree $7$.}
\begin{proof}[Proof of \cref{bosonic_bord} for $D\le 7$ using the Adams spectral sequence]
We want to compute untwisted Spin bordism of $BG_U$, i.e.\ twisted Spin bordism for the class $0\in H^2(BG_U;\mathbb Z_2)$. Therefore the input to Ext is simply $H^*(BG_U;\mathbb Z_2)$ as an $\cA(1)$-module, as $\Sq^2_w$ for $w = 0$ equals $\Sq^2$.

Because $\widetilde{G_U}\to G_U$ is a non-trivial extension, its class in $H^2$ is nonzero, so there is at least one factor of $\mathbb{Z}_2$ in $H^2(BG_U; \mathbb{Z}_2)$. Therefore we can completely determine $H^*(BG_U;\mathbb Z_2)$ as an $\cA(1)$-module in degrees $2$ and below: we have the class $1\in H^0$ and some number of classes in $H^2$ (and nothing in $H^1$ because $G_U$ is perfect), and there is no action of the Steenrod algebra that connects $H^0 (BG_U; \mathbb{Z}_2)$ with any element in $H^2 (BG_U; \mathbb{Z}_2)$ (simply because $\Sq^1(1) = \Sq^2 (1) = 0)$. This means that when we apply Ext to $H^0 (BG_U; \mathbb{Z})$, it simply produces the usual Spin bordism contributions of $\Omega^{\Spin}_k (\pt)$ and we can focus on the positive-degree part of $H^*(BG_U;\mathbb Z_2)$, which we call $X$. The first step in a minimal resolution is
\begin{equation}
X \longleftarrow \Sigma^2 \mathcal{A}(1)^{\oplus n} \oplus \Sigma^3 R \,,
\end{equation}
where $n$ is given in terms of
\begin{equation}
H^2 (BG_U; \mathbb{Z}_2) = \mathbb{Z}_2^{\oplus n} \,,
\end{equation}
and $R$ denotes an unknown remainder. For the next step we only obtain new copies of $\mathcal{A}(1)$ in suspension $\Sigma^3$ and higher, in the next only in suspension $\Sigma^4$ and so on:
\begin{equation}
X \longleftarrow \Sigma^2 \mathcal{A}(1)^{\oplus n} \oplus \Sigma^3 R \longleftarrow \Sigma^3 \tilde{R} \longleftarrow \Sigma^4 \tilde{R}' \longleftarrow \dots \,,
\end{equation}
where the particular form of $\tilde{R}$ and $\tilde{R}'$ depends on the details we do not know. However, this is enough to draw the Adams chart for the first two columns, which are completely empty for $X$. From this we deduce the well-known fact that for perfect groups $G$,
\begin{equation}
\Omega^{\Spin}_1 (BG_U) = \Omega^{\Spin}_1 (\pt) \oplus \text{Ab}[G_U] =  \Omega^{\Spin}_1 (\pt) = \mathbb{Z}_2 \,.
\label{eq:notwist}
\end{equation}
Thus $\Omega_*^\Spin(BG_U)$ only receives contributions from the Spin bordism of a point.
\end{proof}
\begin{proof}[Proof of \cref{spin_lift_bordism} for $D\le 7$ using the Adams spectral sequence]
This time around, we must use $\Sq^2_w$ with $w\ne 0$. This acts non-trivially from degree $0$ to degree $2$: as we saw in~\eqref{sq2w1}, $\Sq_w^2(1) = w$, which modifies the minimal resolution:
\begin{equation}
H_w^*(BG_U;\mathbb Z_2) \longleftarrow \mathcal{A}(1) \oplus \mathcal{A}(1)^{\oplus n-1} \oplus \Sigma^3 Q \longleftarrow \Sigma^1 \mathcal{A}(1) \oplus \Sigma^3 \widetilde{Q} \longleftarrow \Sigma^2 \mathcal{A}(1) \oplus \Sigma^4 \widetilde{Q}' \longleftarrow \dots \,,
\end{equation}
where we denote the unknown contributions by $Q$, $\widetilde{Q}$, and $\widetilde{Q}'$, respectively.
\begin{figure}
\centering
\includegraphics[width = 0.7 \textwidth]{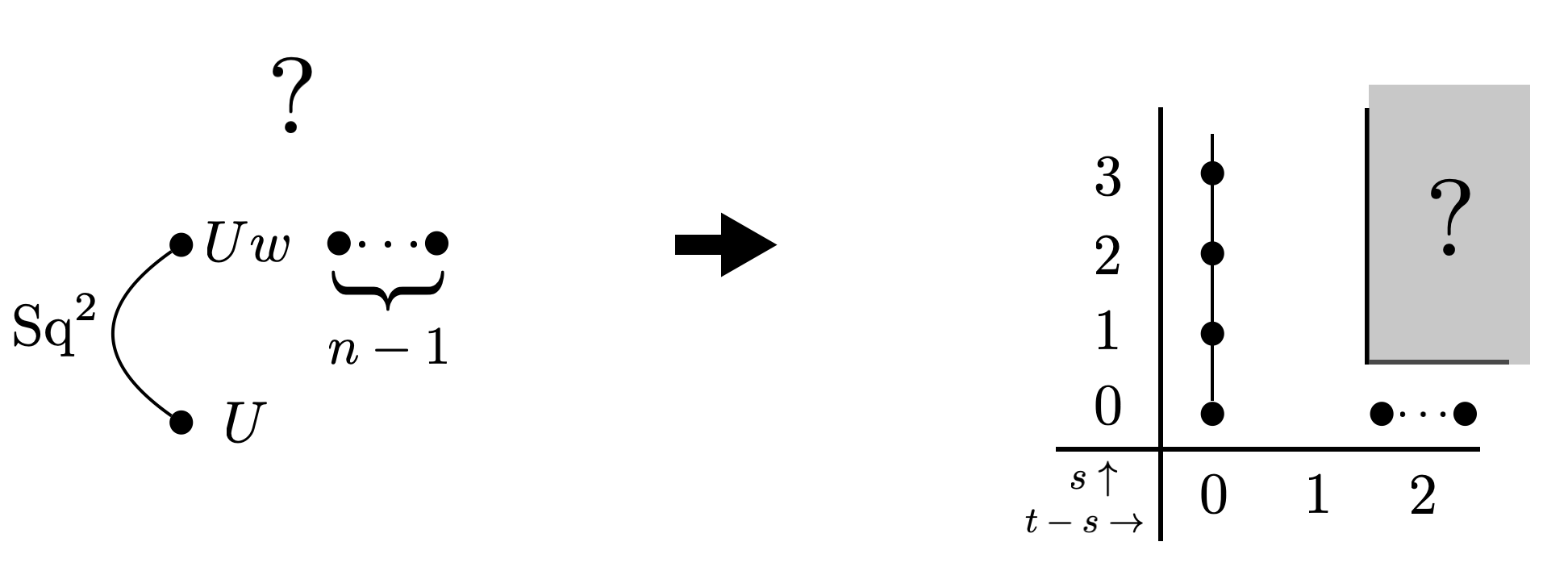}
\caption{Steenrod structure of twisted Spin structure for the Spin-lift of perfect U-duality groups (left) and associated Adams chart (right).}
\label{fig:AdamsSpin}
\end{figure}
This leads to an Adams chart with the first column empty (see Figure \ref{fig:AdamsSpin}), from which we read off
\begin{equation}
\Omega^{\Spin\text{-}\widetilde{G_U}}_1 (\pt) = 0 \,,
\end{equation}
irrespective of the details.
\end{proof}
\begin{proof}[Proof of \cref{pinp_lift_bordism} for $D\le 7$ using the Adams spectral sequence]
Let $r\in H^1(B(G_U\rtimes\mathbb Z_2^R);\mathbb Z_2)$ be the pullback of the unique nonzero class in $H^1(B\mathbb Z_2^R;\mathbb Z_2)$ by the quotient map $q\colon G_U\rtimes\mathbb Z_2^R\to\mathbb Z_2^R$, and let $w$ be the $H^2$ class of the extension $\widetilde{G_U}^+\to G_U\rtimes\mathbb Z_2^R$. Then we have:
\begin{equation}
\Sq^1(r) = r^2 \,, \quad \Sq^2 (r) = 0 \,, \quad \Sq^2 (w) = w^2 \,, \quad \Sq^1 (w) = r w + w_3 (V) \,,
\end{equation}
which follow from the axioms and the Wu formula. Further note that $r^2$ is non-trivial: because the quotient $q$ has a section given by a choice of reflection, $H^*(B\mathbb Z_2^R; \mathbb Z_2)$ is a direct summand of $H^*(B(G_U\rtimes\mathbb{Z}_2^R); \mathbb Z_2)$. Since $r^2\ne 0$ in $H^*(B\mathbb Z_2^R; \mathbb Z_2)$, the same is true pulled back to $B(G_U\rtimes\mathbb{Z}_2^R)$. Thus, we get a picture for the $\cA(1)$-module structure on $H_w^*(B(G_U\rtimes\mathbb Z_2^R);\mathbb Z_2)$ in degrees two and lower:
\begin{equation}
\Sq^1 (1) = 0 \,, \enspace \Sq^2_w (1) = w \,, \enspace \Sq^1 (r) = r^2 \,, \enspace \Sq^2_w (r) = w r \,, 
\end{equation}
If $R_2$ denotes the $\cA(1)$-module which is the kernel of the unique non-trivial $\cA(1)$-module map $\Sigma^{-1}\cA(1)\to\Sigma^{-1}\mathbb Z_2$ (see~\cite[Figure 28, left]{beaudry2018guidecomputingstablehomotopy} for a picture), then in degrees $2$ and below, $H_w^*(B(G_U\rtimes\mathbb Z_2^R);\mathbb Z_2)$ is isomorphic to the quotient of $R_2$ by its degree-$\ge 3$ elements.\footnote{For the nine-dimensional U-duality group, $H_w^*(B(G_U\rtimes\mathbb Z_2^R);\mathbb Z_2)$ is computed in degrees $12$ and below in~\cite[Proposition 14.21]{Debray:2023yrs}, and the entire $R_2$ summand is visible.
There is also another element in degree 1, which was associated to rotations; this part is absent here, since the U-duality groups for $D\le 7$ are perfect and the Abelianization of $(G_U \rtimes \mathbb{Z}_2^R)$ is given by $\mathbb{Z}_2^R$ only.}
We can therefore write down the first few steps in a minimal resolution for $H_w^*(B(G_U\rtimes\mathbb Z_2^R);\mathbb Z_2)$:
\begin{equation}
H_w^*(B(G_U\rtimes\mathbb Z_2^R);\mathbb Z_2) \longleftarrow \mathcal{A}(1) \oplus \Sigma \mathcal{A}(1) \oplus \Sigma^2 P \longleftarrow \Sigma^3 \widetilde{P} \longleftarrow \Sigma^4 \widetilde{P}' \longleftarrow \dots \,,
\end{equation}
with unknowns $P$, $\widetilde{P}$, $\widetilde{P}'$.
\begin{figure}
\centering
\includegraphics[width = 0.6 \textwidth]{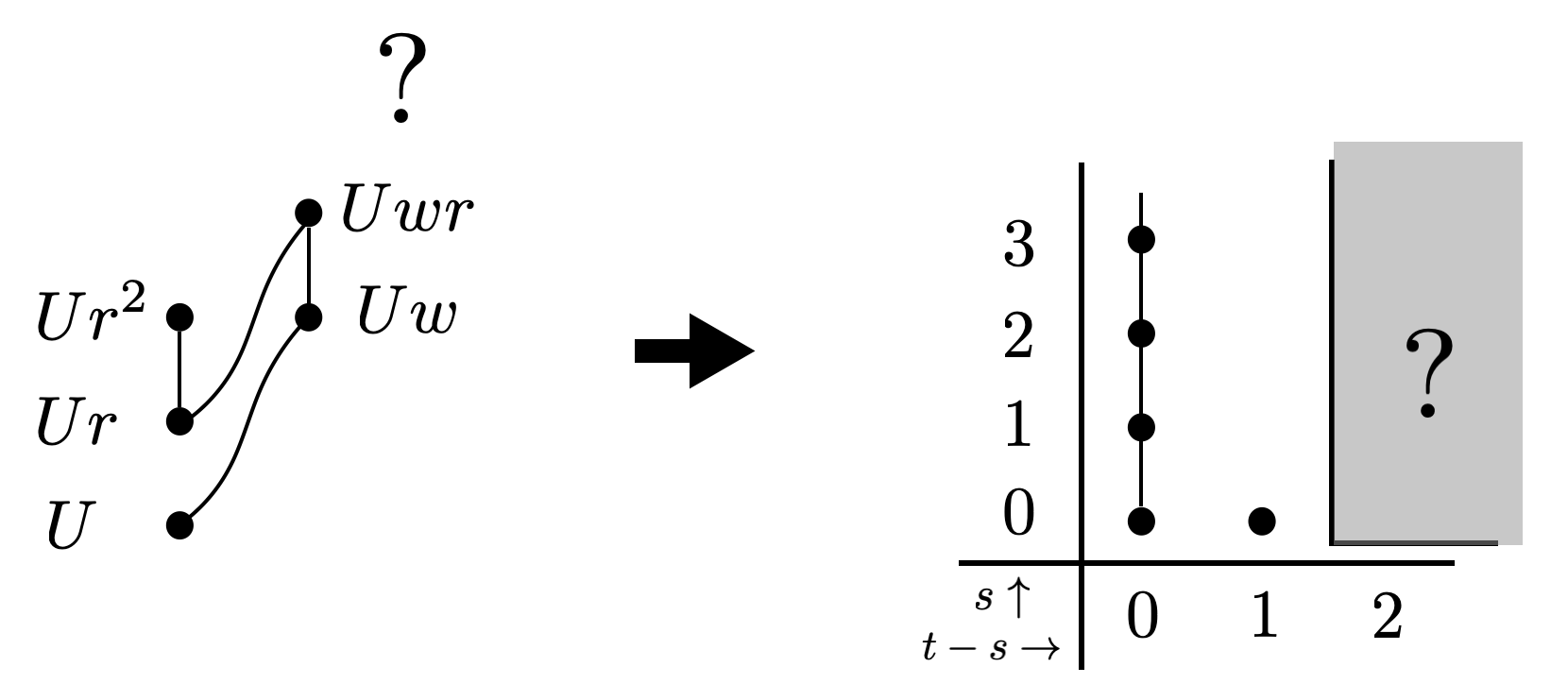}
\caption{Steenrod structure of Pin$^+$-lift of perfect U-duality groups (left) and associated Adams chart (right).}
\label{fig:AdamsPin}
\end{figure}
This is enough to determine the the first two columns of the Adams chart as summarized in Figure \ref{fig:AdamsPin}. From this we read off
\begin{equation}
\Omega^{\Spin\text{-}\widetilde{G_U}^{+}}_0 ( \pt ) = \mathbb{Z} \,, \quad \Omega^{\Spin\text{-}\widetilde{G_U}^{+}}_1 ( \text{pt} ) = \mathbb{Z}_2 \,.
\end{equation}
The $\mathbb{Z}_2$ is associated to the reflection element $r$, hence in twisted Spin bordism is represented by a circle with a duality bundle whose monodromy is a reflection.
\end{proof}

\newpage

\bibliographystyle{utphys}
\bibliography{Pintopia}

\end{document}